\pgfplotsset{compat=1.13}
\newtheorem{theorem}{Theorem}
\newtheorem{obs}[theorem]{Observation}
\newtheorem{observation}[theorem]{Observation}
\newtheorem{claim}[theorem]{Claim}
\newtheorem{lemma}[theorem]{Lemma}
\newtheorem{definition}[theorem]{Definition}
\newtheorem{example}[theorem]{Example}
\newcommand{\eps}{\epsilon}
\newcommand{\I}{\mathbb{I}}
\newcommand{\ignore}[1]{}
\newcommand{\win}{\textsc{success}}
\newcommand{\winn}{\textsc{win}}
\newcommand{\opt}{\textsc{opt}}
\newcommand{\live}{\textsc{here}}
\newcommand{\acc}{\textsc{acc}}
\newcommand{\rej}{\textsc{rej}}
\newcommand{\dead}{\textsc{gone}}
\newcommand{\first}{\textsc{first}}
\newcommand{\mir}{\textsc{rob}}
\newcommand{\hist}{\textsc{hist}}
\newcommand{\fut}{\textsc{fut}}
\newcommand{\numsofar}{\numbersofar}
\newcommand{\numbersofar}{K}
\newcommand{\sofarval}{k}
\newcommand{\arrtimevecvals}{\vec{a}}
\newcommand{\deptimevecvals}{\vec{d}}
\newcommand{\rankvecvals}{\vec{r}}
\newcommand{\arrtimeval}{a}
\newcommand{\deptimeval}{d}
\newcommand{\rankval}{r}
\newcommand{\arrtime}{A}
\newcommand{\deptime}{D}
\newcommand{\rank}{R}
\newcommand{\staytime}{L}
\newcommand{\staytimeval}{\ell}
\newcommand{\arrtimedist}{\mathcal{A}}
\newcommand{\staydist}{\mathcal{L}}
\newcommand{\pol}{\textsc{pol}}
\newcommand{\thresh}{\theta}
\newcommand{\pois}{\delta}
\newcommand{\sdsp}{stochastic departure secretary problem}
\newcommand{\growingmid}{\mathrel{}\middle|\mathrel{}}
\begin{document}

\title{How to Hire Secretaries with Stochastic Departures}
\author{Thomas Kesselheim\thanks{Institute of Computer Science, University of Bonn, 53115 Bonn, Germany. Email: \texttt{thomas.kesselheim@uni-bonn.de}} \and Alexandros Psomas\thanks{Simons Institute for the Theory of Computing, Berkeley CA, USA. Email: \texttt{alexpsomi@cs.berkeley.edu}} \and  Shai Vardi\thanks{Krannert School of Management, Purdue University, West Lafayette, IN, USA. Email: {\tt  svardi@purdue.edu}. }}
\date{}

\maketitle

\begin{abstract}
	We study a generalization of the secretary problem, where  decisions do not have to be made immediately upon candidates' arrivals.
	After arriving, each candidate stays in the system  for some (random) amount of time and then leaves, whereupon the algorithm has to decide irrevocably whether to select this candidate or not. The goal is to maximize the probability of selecting the best candidate overall. We assume that the arrival and waiting times are drawn from known distributions.

	Our first main result is a characterization of the optimal policy for this setting. We show that when deciding whether to select a candidate it suffices to know only the time and the number of candidates that have arrived so far. Furthermore, the policy is monotone non-decreasing in the number of candidates seen so far, and, under certain natural conditions, monotone non-increasing in the time.
	Our second main result is proving that when the number of candidates is large, a single threshold policy is almost optimal.
\end{abstract}

\section{Introduction}

The secretary problem is an online selection problem whose origin is still being debated, but is usually attributed to \citet{Gardner60},~\citet{Lindley61} or~\citet{Dynkin63}. In the original secretary problem $n$ candidates are interviewed in uniformly random order. Only one candidate can be hired, and the goal is to maximize the probability of hiring the best candidate. After each interview, the interviewer must make an immediate and irrevocable decision whether to hire the candidate, and is only allowed to make pairwise comparisons between candidates that have already been interviewed. The optimal solution is to wait for some threshold, whose exact value (for specific $n$) is determined by backward induction (e.g.,~\cite{Lindley61,GM66}),  and the probability of hiring the best candidate is asymptotically $1/e$.

In the past few years the secretary problem and variations thereof  have received a lot of attention because of their applications to, among other things, online auctions, e.g.,~\citet{BIK07,BIKK08,EHKS18}. %One of the main motivations to studying 
An attractive common occurrence in secretary problems (and stopping problems in general) is that the optimal solution is typically a thresholding algorithm. 
Thresholding algorithms naturally translate to 
posted price mechanisms, which are inherently truthful;   maximizing the value of the chosen candidate(s) is equivalent to  maximizing the social welfare. %In online auction settings, agents arrive and depart dynamically and allocation and payment decisions must be made for each agent before her departure. 
Most of the work on (variations of the) secretary  problem does not depart from the  truly online nature of the original: an irrevocable decision has to be made \emph{immediately} upon seeing the agent. In many situations of interest, however, it is reasonable to assume that the decision does not have to be immediate.  In the classical secretary-hiring scenario, it is reasonable to expect that the candidate will still be available for some time after the interview;  
instead of showing a user an advertisement immediately upon arrival to a website, it could be more profitable to wait before presenting an advertisement (if she is expected to stay on the website for a while).
In this paper we introduce a general model for the secretary problem in which candidates do not immediately leave the system.

\subsection{Main results}
There are $n$ candidates with some total preference order. Each candidate $i$ arrives at a time $\arrtime_i$, drawn i.i.d.\ from some \emph{arrival distribution} $\arrtimedist$. Candidate $i$ stays in the system for some time $\staytime_i \geq 0$, drawn i.i.d.\ from some \emph{waiting distribution} $\staydist$. At time $\deptime_i = \arrtime_i + \staytime_i$, the algorithm is informed that candidate $i$ is about to leave. The algorithm then has to make the irrevocable decision whether to accept or reject candidate $i$. The goal is to maximize the probability of accepting the best candidate. We call this problem the \emph{\sdsp}. The standard secretary problem is recovered if $\staytime_i = 0$ with probability $1$ for all $i$, and any   $\arrtimedist$ that has no point mass. For other distributions $\staydist$, the algorithm can possibly take advantage of having seen further candidates between time $\arrtime_i$ and $\deptime_i$ that serve as a point of comparison. 
We note that our model subsumes sliding-window models from the literature (e.g.,~\cite{HK15}, see below for details), in which $\staytime_i$ is a fixed, constant value.\footnote{Technically, in these models, the $i^{\text{th}}$ candidate arrives at time $\frac{i}{n}$; the arrival times are not drawn from some distribution. It is well known (e.g.,~\cite{Vardi15}), that for large $n$, their model is asymptotically equivalent to uniform arrivals.} %However, our model also includes the arguably more natural scenario where departures are drawn from the exponential distribution. 
We remark that all of our results hold in the model in which the algorithm is allowed to make a decision at any time before the candidate leaves, as the algorithm can never increase its success probability by accepting earlier than it needs to. Furthermore, the algorithm only accepts a candidate if he is the best candidate observed so far.

Our first main result is an optimal selection rule for an arbitrary  arrival distribution  and an arbitrary departure distribution.

\begin{theorem}(Informal)\label{thm:opt}
	There exists an optimal policy for the \sdsp\, that decides whether to accept or reject at time $t$ whenever a best-so-far candidate leaves, such that
	\begin{enumerate}[leftmargin=*]
		%\item the policy never accepts if the departing candidate is not the best until now,
		\item[(1)] the decisions depend only on $t$ and $\numsofar_t$, the number of candidates that have arrived until time $t$,
		\item[(2)] the decision is monotone non-decreasing in $\numsofar_t$: fixing $t$, there exists some $k$ such that if $\numsofar_t\leq k$, it rejects, and if $\numsofar_t > k$, it accepts.
		\item[(3)] if the arrival distribution is uniform, the decision is monotone non-increasing in $t$:  fixing $\numsofar_t$, there exists some $\theta$ such that if $t\leq\theta$, it accepts, and if $t > \theta$, it rejects. There are distributions for which this monotonicity does not hold.
	\end{enumerate}
\end{theorem}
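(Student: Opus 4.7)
The plan is to handle the three parts sequentially from a Bellman decomposition of the optimal value. A preliminary observation is that an optimal policy only needs to act at best-so-far departures, since a candidate who is not best-so-far at their own departure cannot possibly be the overall best and so is rejected automatically.

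For part (1), I would argue that at any best-so-far departure at time $t$, every other currently in-system or already-departed candidate has a strictly worse rank than the current one; consequently, none of them can ever become best-so-far at their own departure (any future arrival surpassing them would also surpass the current). Hence the only future decision points come from future arrivals, and the joint distribution of future arrivals, waits, and ranks given $K_t = k$ depends only on $(t, k)$: there are $n-k$ remaining candidates with arrival times i.i.d.\ from $\arrtimedist$ conditional on $A > t$, waits i.i.d.\ from $\staydist$, and a uniform rank assignment over the $n-k$ unused ranks. A short exchangeability calculation then gives $P(\text{current is overall best} \mid K_t = k) = k/n$ for any arrival distribution, so $V(t, k) = \max\{k/n, g(t, k)\}$ where $g(t, k)$ is a continue value depending only on $(t, k)$, and an optimal policy depends only on $(t, k)$.

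For part (2), I would show $g(t, k)$ is non-increasing in $k$; combined with the strictly increasing accept value $k/n$, this immediately yields the single-threshold structure. The plan is a coupling between the $(t, k)$ and $(t, k+1)$ instances on a shared sample space (common pool of candidates, arrival and waiting times, and ranks), with one distinguished candidate being past in the latter and future in the former. Intuitively, the $(t, k)$ scenario has one more future arrival and one fewer past rank, both of which give the algorithm more room to identify the overall best. I expect the hard part to be reconciling the two effects of increasing $k$: the shrinking future pool (harder to find the best) is partially offset by the extra revealed past rank (more information), and one has to verify that the first effect dominates via a careful coupling or a backward induction using the Bellman recursion $g(t, k) = \mathbb{E}[V(t', K_{t'})]$ at the next decision point $t'$.

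For part (3), under uniform arrivals the accept value $k/n$ is independent of $t$, so it suffices to show $g(t, k)$ is non-decreasing in $t$. The tool is a time rescaling: mapping the remaining horizon $[t, 1]$ to $[0, 1]$ turns the subproblem at $(t, k)$ into a secretary problem with $n-k$ uniform arrivals on $[0, 1]$ and rescaled waiting distribution $\staydist/(1-t)$, which is stochastically non-decreasing in $t$. It then suffices to show that for a fixed number of uniform arrivals, the optimal success probability is non-decreasing in the waiting-time distribution under stochastic order; this follows from a direct information coupling, since longer waits push each departure later, so the larger-wait algorithm sees a weak superset of arrivals at each decision point and can emulate any shorter-wait policy with identical success. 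For the counterexample in the non-uniform case, I would construct an $\arrtimedist$ with two well-separated clusters of mass (e.g., supported on $[0, 1/3] \cup [2/3, 1]$) and tune $\staydist$ so that $g(t, k) - k/n$ changes sign more than once in $t$, violating single-crossing.
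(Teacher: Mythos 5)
Your overall architecture (act only at best-so-far departures, accept value $\sofarval/n$, continue value $g(t,\sofarval)$, single-crossing) matches the paper's, and two of the three parts are essentially sound. For part (1), your forward sufficiency argument --- the conditional law of future arrivals, waits, and best-so-far indicators given $(t,\numsofar_t)$ depends only on $(t,\numsofar_t)$, and no currently-present or departed candidate can ever again be best-so-far at its own departure --- is a cleaner route than the paper's proof by contradiction (which perturbs one history coordinate at a time and simulates, redrawing the waiting time of an in-system candidate conditioned on survival). For part (3), your rescaling to $[0,1]$ plus monotonicity in the stochastic order of the waiting distribution is the same argument as the paper's: the paper compresses waits by the factor $\frac{1-t}{1-t'}\le 1$ (legitimate because a policy may pretend candidates leave early) and maps arrivals linearly so that the later subproblem exactly simulates the earlier one.

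The gap is in part (2), and it is precisely the step you defer as ``the hard part.'' Your plan requires that $g(t,\sofarval)$ be non-increasing in $\sofarval$, but no clean coupling delivers this: the future given $\numsofar_t=\sofarval$ ($n-\sofarval$ arrivals) and the future given $\numsofar_t=\sofarval+1$ ($n-\sofarval-1$ arrivals) are \emph{not} identically distributed after deleting one arrival, because the deleted candidate might have been a best-so-far, which would change the indicators of the remaining ones. (Your framing of the offsetting effect as ``the extra revealed past rank'' is also off: past relative ranks other than the identity of the best-so-far are irrelevant to the continuation.) The paper does not prove exact monotonicity of $g$ in $\sofarval$; it proves only the multiplicative bound
\[
\Pr[\win(\rej_t)\mid \numsofar_t=\sofarval,\mathcal{E}_t]\;\ge\;\frac{\sofarval}{\sofarval+1}\,\Pr[\win(\rej_t)\mid \numsofar_t=\sofarval+1,\mathcal{E}_t],
\]
which suffices: if rejecting were optimal at $\sofarval+1$ but accepting strictly better at $\sofarval$, then $g(t,\sofarval)<\frac{\sofarval}{n}$ while $g(t,\sofarval+1)\ge\frac{\sofarval+1}{n}$, contradicting the displayed inequality. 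The coupling that yields the factor $\frac{\sofarval}{\sofarval+1}$ is the idea you are missing: from the $\numsofar_t=\sofarval$ world, pick a uniformly random future arrival index $j$ and \emph{pretend} it already arrived and was not best-so-far, then run the optimal policy for $\sofarval+1$ past arrivals. On the event $\I_j$ that the designated candidate is indeed not the best among the first $\sofarval+1$ --- which has probability exactly $\frac{\sofarval}{\sofarval+1}$ --- the pretended future is distributed exactly like a future conditioned on $\numsofar_t=\sofarval+1$, so the simulation succeeds whenever $\rej_t$ would; the contribution of the complementary event is simply dropped. You should replace your exact-monotonicity target with this conditional-coupling bound; as stated, your single-crossing step rests on an unproven (and not obviously true) premise.
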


In the classical secretary problem ($\staytime_i = 0$), the optimal rule only depends on the number of candidates seen thus far but not on the time that has passed. We give examples where stochastic departures make the dependence on both the time and number of candidates  unavoidable (Examples~\ref{ex:needd} and~\ref{ex:unif} in Section~\ref{sec:optimal stopping}).% \tnote{Maybe two more sentences}

Our second main result is showing that when the arrival distribution is continuous and  the number of candidates is large, the decisions do not have to depend on the number of candidates anymore. So, a \emph{single-threshold policy} achieves a good success probability.%, 
\begin{theorem}
	\label{theorem:timethresholding}There exists a threshold policy $\theta$ for the \sdsp{} with a continuous arrival distribution that is independent of $n$ and $K_t$, that accepts a candidate on departure if and only if he is the best so far and $t>\theta$. In the limit for large $n$, its (asymptotic) success probability matches the one of an optimal policy. 
\end{theorem}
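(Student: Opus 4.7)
The plan is to combine Theorem~\ref{thm:opt} with concentration of the counting process $K_t$ to show that, in the large-$n$ limit, the optimal two-dimensional rule collapses to a one-dimensional (time-only) rule. By Theorem~\ref{thm:opt}, the optimal policy $\pi^*_n$ is encoded by a threshold function $k^*_n(t)$: accept iff a best-so-far candidate departs at time $t$ and $K_t \ge k^*_n(t)$. Writing $\phi^*_n(t) := k^*_n(t)/n$, the rule becomes ``accept iff $K_t/n \ge \phi^*_n(t)$.'' Continuity of $\arrtimedist$ and Glivenko--Cantelli give $\sup_t |K_t/n - F_\arrtimedist(t)| \to 0$ almost surely, so for large $n$, the decision of $\pi^*_n$ at $t$ is, with high probability, determined by whether $F_\arrtimedist(t) \ge \phi^*_n(t)$---a function of $t$ alone.

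Next, I derive an asymptotic success probability $S_\infty(\theta)$ for a time-threshold policy $\theta$. Conditioning on the best candidate's arrival $A^* = a$ and wait $L^* = \ell$, the policy succeeds iff $a + \ell > \theta$ \emph{and} no other candidate $j$ both departs in $(\theta, a)$ and is best-so-far at $D_j$; any $j$ departing in $(a, a+\ell)$ cannot be best-so-far since $i^*$ is already in the pool. Conditional on $D_j = u < a$, $j$ is best-so-far with probability $\approx 1/(nF_\arrtimedist(u))$, so the expected number of interferers converges to $\Lambda(\theta,a) := \int_\theta^a f_D(u)/F_\arrtimedist(u)\,du$, where $f_D = f_A * f_L$. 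A standard Poisson approximation then yields
\[
S_\infty(\theta) \;=\; \int_{\theta}^{\infty} e^{-\Lambda(\theta, a)}\, f_A(a)\, da \;+\; \int_{-\infty}^{\theta} \bigl(1 - F_L(\theta - a)\bigr)\, f_A(a)\, da,
\]
which is continuous in $\theta$. Take $\theta^\star \in \arg\max_\theta S_\infty(\theta)$; the success probability of the corresponding time-threshold policy at size $n$ converges to $S_\infty(\theta^\star)$ by the same concentration and Poisson estimates.

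For the matching upper bound, I would couple $\pi^*_n$ with the deterministic-count rule $\tilde{\pi}_n$ that accepts iff $F_\arrtimedist(t) \ge \phi^*_n(t)$: by the concentration above, the two disagree only on a vanishing-probability event, and a Lipschitz bound (the success probability of a policy is Lipschitz in its acceptance set, under the intensity of best-so-far departure events) gives $|\mathrm{OPT}_n - S(\tilde{\pi}_n)| = o(1)$. Since $\tilde{\pi}_n$ depends only on $t$, one then argues that in the limit its acceptance set must be a half-line $(\theta', \infty)$---using a monotonicity argument on the marginal value of waiting at time $t$ conditional on $K_t \approx n F_\arrtimedist(t)$, analogous in spirit to part~(3) of Theorem~\ref{thm:opt}---so $S(\tilde{\pi}_n) \le S_\infty(\theta^\star) + o(1)$ by the definition of $\theta^\star$, matching the lower bound. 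The principal obstacle is justifying that the limit of $\{t : F_\arrtimedist(t) \ge \phi^*_n(t)\}$ is indeed a half-line, and controlling the coupling to rule out that $\pi^*_n$ exploits rare deviations of $K_t/n$ from $F_\arrtimedist(t)$ for persistent gain. I would handle this via a DKW inequality giving $\sup_t|K_t/n - F_\arrtimedist(t)| = O(1/\sqrt{n})$ whp, combined with an $O(\log n)$ bound on the expected number of best-so-far departures over the entire horizon, which together ensure that the expected number of disagreement events between $\pi^*_n$ and $\tilde{\pi}_n$ vanishes.
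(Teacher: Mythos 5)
Your high-level architecture (concentration of $K_t$, collapsing the bivariate rule to a time-only rule, and an $O(\log n)$ bound on best-so-far departures to control accumulated loss) matches the paper's, but there is a genuine gap at the decisive step: showing that the limiting time-only acceptance set is a half-line. You propose to do this ``using a monotonicity argument\ldots analogous in spirit to part~(3) of Theorem~\ref{thm:opt},'' but part~(3) holds \emph{only for uniform arrivals}; Example~\ref{ex:weirddist} exhibits arrival distributions for which the optimal policy's decision is not monotone in $t$. Since Theorem~\ref{theorem:timethresholding} is claimed for arbitrary continuous $\arrtimedist$, the set $\{t : \arrtimedist(t) \ge \phi^*_n(t)\}$ need not be an interval for any fixed $n$, and you give no argument that it becomes one in the limit. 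The paper's resolution is to sidestep the limit of the optimal policies entirely: it defines $P_n(t)$ as the success probability of the best policy that rejects all candidates arriving before $t$, proves $P_n$ is non-increasing in both $n$ and $t$ and continuous, invokes Dini's theorem for uniform convergence to a limit $P$, and sets the threshold $t^*$ at the crossing of the non-decreasing $\arrtimedist(\cdot)$ with the non-increasing $P(\cdot)$. Monotonicity of these \emph{unconditional} quantities is what forces the threshold structure; the conditional quantities your argument manipulates need not be monotone in $t$.

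A second, smaller issue: your closed-form $S_\infty(\theta)$ rests on a Poisson approximation for the best-so-far departure process whose independence structure you do not justify, and the claim that $\max_\theta S_\infty(\theta)$ matches the optimal asymptotic value is exactly the upper bound you have not established. The paper needs no explicit formula: it compares $\frac{\sofarval}{n}\approx \arrtimedist(t)$ (value of accepting) with $\Pr[\win(\rej_t)\mid\mathcal{E}_t, \numsofar_t=\sofarval] \approx P_n(t) \approx P(t)$ (value of rejecting) via two simulation lemmas (Lemmas~\ref{lem:largen:conditional} and~\ref{lem:largen:conditional2}) showing that conditional and unconditional success probabilities differ by $O(\gamma/n)$, and then the $\delta$-wrong machinery (Lemma~\ref{lem:delta wrong bound}) converts these per-decision errors into a global loss of $O(\delta\log n)$. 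Your DKW/$O(\log n)$ sketch plays the role of the latter, but without the conditional-versus-unconditional comparison and the explicit crossing-point construction, the proof does not go through.
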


When $n$ is small, even if the departure time is immediate, policies that set a time threshold are far from optimal (Example~\ref{ex3} in Appendix~\ref{app:examples}). Theorem~\ref{theorem:timethresholding} also carries over to the setting in which the arrivals are generated in a Poisson point process; that is, the overall number of candidates $n$ is not fixed but random, see Section~\ref{subsec:poisson arrivals}.

\subsection{Techniques}
When facing a decision, the optimal policy picks a candidate if and only if this gives a higher success probability than rejecting it. The success probability after a rejection again depends on the policy. As a result, when characterizing the optimal policy throughout the paper,  the success probabilities are  frequently bound in different conditional probability spaces.

A pivotal technique to bound these success probabilities is  simulation arguments. A simple such argument is used to prove Claim (1) of Theorem~\ref{thm:opt}, where we show that it is sufficient for the optimal policy to know the time $t$ and number of candidates seen so far $\numsofar_t$. If this is not the case, there are two histories for the same time $t$ and number of candidates $\numsofar_t$ for which the optimal policy makes different decisions. Since the success probabilities when accepting are identical, the success probabilities when rejecting have to be different. We then define a new policy that follows the decisions of the optimal policy for the ``better'' history whenever it sees the ``worse'' one. We show that all the relevant events have identical marginal probabilities, and therefore our new policy does better than the optimal policy on the ``worse'' history, a contradiction. 

Later arguments require a more detailed look at the probability spaces generating the events. For example, to show Claim (2) of Theorem~\ref{thm:opt}, we have to argue that after having seen $k$ candidates if the success probability when accepting the current candidate is higher than the one when rejecting, then the same holds when having seen $k+1$ candidates. In this case, the simulating policy pretends that $k+1$ candidates arrived rather than $k$. It does so by deleting a random future arrival and replacing it by a rejected candidate earlier in the sequence. The complication is that future observations when having seen $k$ candidates are still not identically distributed to the future observations when having seen $k+1$ candidates. We overcome this by coupling with a suitably chosen conditional probability space. As a result, the probability of a future observation can only be smaller by a multiplicative factor, namely the probability of the conditioned upon event. That is, even though this coupling does not give a one-to-one correspondence between the two probability spaces, we can still get an upper bound on the probability of a future observation, which suffices for our purpose.

At first glance, Theorem~\ref{theorem:timethresholding} seems quite straightforward. If $n$ is large, the number of arrivals by time $t$, $\numsofar_t$, is concentrated around its expectation. Therefore, a policy can, with some error, replace $\numsofar_t$ by $\mathbb{E} [ \numsofar_t ]$. While this observation may be correct, it is by far not enough to prove the theorem. For example, it is not clear whether the policy designed this way is a threshold policy for every fixed $n$. Furthermore, the theorem states that a \emph{single} policy is near optimal for all large $n$ \emph{simultaneously}. Therefore, one would also have to prove that the policies for different $n$ have a common limit of some sort.

Our approach is to instead construct an explicit threshold policy. At any time $t$, the policy compares two probabilities: (a) the probability that the overall best candidate arrives by $t$, and (b) the success probability of an optimal policy that only accepts candidates arriving after $t$. Whenever (a) is greater than (b), accept a departing candidate if he is the best so far. Note that by definition (a) is non-decreasing and (b) is non-increasing in $t$, so this policy is a threshold policy.
To prove asymptotic optimality, we observe that the optimal policy actually compares the same probabilities but in the probability space conditioned on $K_t$ and the candidate that is best so far leaving at time $t$. We show that, because of concentration bounds, the conditional probabilities are in most cases close to the unconditional ones. For this reason, the newly constructed policy makes different decisions only when the two options, accept or reject, yield similar success probabilities. 
It then remains to bound the loss in success probability by these errors.
We compare conditional and unconditional probabilities using  simulation arguments. Again, success probabilities conditioning on different numbers of arrivals are compared. The key difference is that  the question now is not whether one of them is bigger than the other, but by how much they differ.

\subsection{Case study: Poisson arrivals and exponential departures}

The most interesting choice arrival and departure distributions  are arguably \emph{Poisson arrivals} and \emph{exponential departures}. The Poisson process has a long history of modeling stochastic arrival processes, e.g.,~\cite{Bajari,Pinker}. Similarly, the exponential distribution is arguably the most popular way to model ``impatience'' or waiting times, e.g.,~\cite{exp1,exp2}. 
Concretely, we would like to compute the optimal threshold for the following setting. Candidates arrive according to a Poisson process with parameter $\delta$; the waiting distribution is exponential with rate $\lambda$. As we mentioned earlier, threshold policies are approximately optimal in this scenario. We give a closed form expression for the probability of success as a function of a threshold. 

Unfortunately, this scenario does not fall within our framework, as the number of arrivals is unknown a-priori. However, we show that for sufficiently large $\delta$, we can leverage our previous insights, in order to (1) show that a threshold policy is approximately optimal, and (2) find a closed form expression for the probability of success as a function of this threshold. See Section~\ref{SEC:EXP} for more details.

\subsection{Related work}

The secretary problem and its variants have received much attention in the later part of the 20th century. %, particularly in the OR community (e.g., \cite{ABT82,Stewart81,Rasmussen75,Mucci73}). 
We refer the reader to~\cite{Freeman83} and~\cite{Ferguson89} for surveys on the classical literature on secretary problems and variations thereof.  More recently, there has been a surge of interest in variations of the secretary problem in the theoretical computer science community, (e.g.,~\citet{HKP04, KRTV13,BIKK08,rubinstein2016beyond})  in large part driven by its applications to online mechanism design, in particular ad-auctions, where users arrive online and are matched to advertisers. Notable variants include the matroid secretary problem (e.g.,~\citet{BIK07,kleinberg2012matroid,Lachish14}) and the prophet secretary (e.g.,~\citet{esa,EHKS18,azar2018prophet}), as well as applications in sequential posted pricing (e.g.,~\citet{CHMS10,bey})  and online trading (e.g.,~\citet{KL18}).

Special cases of not making an immediate decision  have been addressed in the literature:~\citet{Goldys} showed that  the expected rank of the accepted candidate tends to $\approx 2.57$ as $n$ tends to infinity, when one is allowed to choose either the current candidate or the previous one.
This is in contrast to the expected rank of $\approx 3.87$ when one is only allowed to choose the current candidate, shown by~\citet{Chow1964}. In this setting, the value of a candidate is $n-i$ when the $i^{\text{th}}$ best candidate is accepted, as opposed to $1$ if the best candidate is accepted and $0$ otherwise in the classical setting.
The scenario considered by Goldys is sometimes called a ``sliding window''. In the online setting, a sliding window of size $x$ means that after seeing an item (candidate), the algorithm does not need to make a decision until it has seen $x$ more items. Sliding windows have been considered for many online and streaming problems (e.g.,~\cite{JKZ04,BK09,Datar02}). Goldys' setting is a  sliding window of size $1$.

\citet{HK15} considered more general sizes of sliding windows. They give an optimal threshold-based rule for maximizing the probability of accepting the best candidate; give a recursive (non-explicit) formula for the probability of hiring the best candidate using sliding windows of size $n/i$ for constant $i$; and give an asymptotic bound when the window size is at least $n/2$. 
\citet{Vardi15} considered the scenario where candidates arrive $k$ times each, and the arrival order is uniform over the $(kn)!$ possible arrival orders. He gave an optimal threshold-based strategy for accepting the best candidate and computed the success probability for $k=2$, as well as giving upper bounds for the matroid secretary version of this problem. ~\citet{lisa} extended these results to other packing domains.
\citet{Petruccelli81} considered the case when the interviewer is able to recall some candidate from the past with some probability $p>0$. 

We note that in all of the above cases---in contrast to our setting---the optimal policy  depends only on the number of observed candidates, and not on the time at which the decision is made (in fact, in the works above, the two are typically interchangeable).

\section{Model}

A set $\mathcal{S}$ of $n$ candidates arrive and depart over time. For concreteness, we assume that events only occur in the interval $[0,1]$. There is a  total order on  $\mathcal{S}$,  and the goal is to select the best element in this order. For each $i \in \mathcal{S}$, an \emph{arrival time} $\arrtime_i$ is drawn independently from the arrival distribution $\arrtimedist$  and a \emph{waiting time} $\staytime_i$ (denoting how long the candidate  stays in the system) is a non-negative real number drawn i.i.d. from some distribution $\staydist$, which we call the \emph{waiting time distribution}. The results of Section~\ref{sec:optimal stopping} all hold for arbitrary arrival distributions except for Lemma~\ref{lem:t}, which concerns the uniform distribution. In Section~\ref{SECTION:LIMIT}, we assume that the arrival distribution is continuous; i.e.,  has no point masses.
The \emph{departure time} $\deptime_i$ is  $\min\{\arrtime_i+\staytime_i,1\}$.

We index the set $\mathcal{S}$ by arrival time, i.e., by $1, \ldots, n$ in such a way that $\arrtime_1 \leq \arrtime_2 \leq \ldots \leq \arrtime_n$;  the total order can be expressed as a permutation $\pi\colon [n] \to [n]$. The algorithm is successful if it chooses $i \in [n]$ such that $\pi(i) = 1$. The goal is to design an  algorithm that maximizes the probability of selecting the best candidate. 
It will be useful to represent the permutation $\pi$ as a sequence of relative ranks $\rank_1, \rank_2, \ldots, \rank_n$, where $\rank_i \in [i]$ indicates the number of candidates among $1, \ldots, i$ that are at least as good as $i$. Formally, $\rank_i = \lvert \{ j \leq i \mid \pi(j) \leq \pi(i) \} \rvert.$  
This representation has the advantage that it matches the knowledge of the algorithm. After $i$ arrivals it knows exactly $\rank_1, \ldots, \rank_i$ but it does not know how these candidates compare to future arrivals.
We can therefore apply the principle of deferred decisions and assume that $\rank_{i+1}, \ldots, \rank_n$ will be drawn independently at later points in time.
At time $t$, we call the candidate $\max\{i:A_i \leq t, R_i=1\}$ the \emph{best-so-far}.

Whenever a candidate leaves, the algorithm must irrevocably make a choice to accept or reject based only on the \emph{history} until time $t$. A history is a triple $h=(\arrtimevecvals, \deptimevecvals, \rankvecvals)$ such that, for some $\sofarval$, $\arrtimevecvals \in [0, 1]^\sofarval$ is a vector of arrival times, $\arrtimeval_1 \leq \arrtimeval_2 \leq \ldots \leq \arrtimeval_\sofarval$. $\deptimevecvals \in ([0, t]\, \cup \perp)^\sofarval$ is a vector of departure times (where $\perp$ indicates the candidate has not departed by time $t$), and $\rankvecvals$ are the relative ranks of the $k$ candidates. Note that the history cannot contain the realization of the random variable for the time a candidate stays in the system if he is still there at time $t$. Denote by $\hist_t$ the random variable for the history at time $t$;  $h$ will be used for  realizations of this random variable.

\section{Optimal Stopping Rule}\label{sec:optimal stopping}

In this section, we characterize the optimal stopping rule when the candidates arrive according to distribution $\arrtimedist$ and the waiting time is sampled from  some distribution $\staydist$. 

\begin{definition}
	We call a policy for the  \sdsp \, \emph{bivariate}  if its decision (to accept or reject), given a history $h$, depends only on $t$ and $\numsofar_t$, the number of candidates that have arrived up to time $t$. In other words,  there exists a function $\Theta_n(t, \numbersofar_t)$,%, parameterized by  $n$, 
	$$\Theta_{n}\colon [0,1]\times [n] \rightarrow \{0,1\},$$ such that whenever a candidate $x$ departs at time $t$ and $\numbersofar_t$ candidates have arrived prior to $t$, accept if $x$ is the best candidate seen so far and $\Theta_n(t,\numbersofar_t)=1$, otherwise reject. 
\end{definition}

{\renewcommand{\thetheorem}{\ref{thm:opt}}
	\begin{theorem}
		There exists an optimal policy for the \sdsp \, that is bivariate. The function $\Theta_n$ is non-decreasing in $\numbersofar_t$ for fixed $t$ and if $\arrtimedist$ is uniform it is non-increasing in $t$ for fixed $\numbersofar_t$.
	\end{theorem}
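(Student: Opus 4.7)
For every decision point $(t, k)$, I will isolate two quantities: $\alpha(t, k)$, the probability of success if the algorithm accepts the departing best-so-far, and $\beta(t, k)$, the probability of success under optimal continuation if it rejects. The optimal decision is to accept iff $\alpha(t, k) \geq \beta(t, k)$, and the three parts of the theorem reduce to structural properties of $\alpha - \beta$. The foundational observation is that because the relative ranks $R_1, \ldots, R_n$ are drawn independently of the arrival and waiting times, conditional on $K_t = k$ the overall best candidate is uniformly distributed over $\{1, \ldots, n\}$; hence $\alpha(t, k) = \Pr(\text{overall best} \in \{1, \ldots, k\}) = k/n$, depending only on $k$.

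\textbf{Part (1): bivariate.} I proceed by backward induction on decision epochs. At the last possible decision point $\beta \equiv 0$ and the action is determined by $\alpha(t, k)$, which is a function of $(t, k)$. For the inductive step, by the induction hypothesis $\beta$ already depends only on $(t, k)$, so the optimal action at $(t, k)$ is determined by $\alpha(t, k) \geq \beta(t, k)$, i.e.\ by $(t, k)$ alone. Formally this is supported by the swap argument outlined in the Techniques section: for any two histories $h_1, h_2$ with the same $(t, k)$, the conditional joint law of future arrival times, waiting times, and ranks $R_{k+1}, \ldots, R_n$ depends only on $(t, k)$---by independence of ranks from times and exchangeability of i.i.d.\ draws from $\mathcal{A}$ and $\mathcal{L}$---so replacing the decision at $h_2$ by the one at $h_1$ preserves success probabilities under a measure-preserving relabeling.

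\textbf{Part (2): monotone in $k$.} Since $\alpha(t, k) = k/n$ is strictly increasing in $k$, it suffices to show $\beta(t, k+1) \leq \beta(t, k)$. I will construct the coupling between the two spaces sketched in the Techniques section: starting from a sample in the $(t, k)$ space, delete a uniformly random future arrival and re-insert it as a past arrival in $[0, t]$ with a uniformly random rank in $\{1, \ldots, k+1\}$. The result is distributed as the $(t, k+1)$ space conditioned on an event of positive probability. Running an optimal $(t, k+1)$ policy on the reinserted trajectory and translating back defines a policy at $(t, k)$, and the multiplicative conditioning factor affects both the accept and reject branches identically, so the inequality between the $\beta$'s survives. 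This is the most delicate step: unlike Part~(1) the two spaces are not matched by a measure-preserving bijection, and I expect ensuring that the conditioning factor cancels and that the reinserted rank respects the relative-rank structure of the $(t, k+1)$ space to require the most careful bookkeeping.

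\textbf{Part (3): monotone in $t$ for uniform $\mathcal{A}$.} Because $\alpha$ does not depend on $t$, it suffices to show $\beta(\cdot, k)$ is non-decreasing in $t$ under uniform arrivals. I will use a time-rescaling coupling: conditional on $K_t = k$, the $n - k$ future arrivals are i.i.d.\ uniform on $(t, 1]$, so the affine map $s \mapsto (s - t)/(1 - t)$ transforms the residual problem into one with $n - k$ uniform arrivals on $[0, 1]$ and waiting-time distribution $\mathcal{L}/(1 - t)$. As $t$ grows the rescaled waiting times become stochastically larger. Given $t' < t$, an optimal policy at $(t', k)$ can therefore be mimicked at $(t, k)$ by making each decision at the earlier moment that would have been its rescaled departure under $t'$; the remark in the model that the algorithm can never increase its success probability by accepting earlier than needed guarantees that such early decisions are admissible. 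This yields $\beta(t, k) \geq \beta(t', k)$, equivalently $\Theta_n(\cdot, k)$ is non-increasing in $t$.
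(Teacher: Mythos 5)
Parts (1) and (3) of your plan track the paper's argument closely: Part (1) rests on the same distributional fact (the conditional law of the future given $(t,k)$ is history-independent; do note that you still need to argue that the residual departure times of candidates who arrived before $t$ and are still present are irrelevant, which the paper handles by observing that such candidates can never again be best-so-far and by redrawing their residual waiting times), and Part (3) is exactly the paper's linear time-rescaling with early decisions. The problem is Part (2).

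There you reduce to the claim $\beta(t,k+1)\leq\beta(t,k)$ and propose to prove it by deleting a uniformly random future arrival and re-inserting it in the past. But that coupling does not deliver the clean inequality. If the re-inserted candidate receives rank $1$ among the first $k+1$ (probability $\frac{1}{k+1}$), it becomes the phantom best-so-far, the event $\mathcal{E}_t$ no longer refers to the actually departing candidate, and the indicators $B_i$ of future arrivals change; the simulation is only valid on the complementary event, which has probability $\frac{k}{k+1}$. What the coupling actually yields is the multiplicative bound
\[
\beta(t,k)\;\geq\;\frac{k}{k+1}\,\beta(t,k+1),
\]
and your sentence that ``the multiplicative conditioning factor affects both the accept and reject branches identically, so the inequality between the $\beta$'s survives'' is an assertion, not an argument: the accept branch has no conditioning at all, and nothing cancels. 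Indeed the paper never proves $\beta(t,k+1)\leq\beta(t,k)$; it proves only the multiplicative version (its Lemma on $\rej_t$ with the factor $\frac{k}{k+1}$) and that turns out to be exactly enough. The repair is to drop your intermediate claim and finish as the paper does: if $\Theta_n(t,k)=1$ and $\Theta_n(t,k+1)=0$, then $\beta(t,k)<\frac{k}{n}$ while $\beta(t,k+1)\geq\frac{k+1}{n}$, so the multiplicative bound gives $\beta(t,k)\geq\frac{k}{k+1}\cdot\frac{k+1}{n}=\frac{k}{n}$, a contradiction. The loss factor $\frac{k}{k+1}$ matches the ratio $\alpha(t,k)/\alpha(t,k+1)$ exactly, which is why the weaker bound suffices; as stated, your Part (2) is missing this step and the inequality you promise to prove is not established by your coupling.
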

	\addtocounter{theorem}{-1}
}

The following two examples show that both $\numbersofar_t$ and $t$ are necessary, i.e., $\Theta$ is indeed a function of both $\numbersofar_t$ and $t$, and not just of one of them. 

\begin{example}[The optimal policy depends on $\numbersofar_t$] \label{ex:needd}
	It is trivial to confirm that for any $t$, any $n>2$, and any distributions, if $\numbersofar_t=1$ the optimal policy rejects, and if $\numbersofar_t=n$ the optimal policy accepts.
\end{example}

\begin{example} [The optimal policy depends on $t$]\label{ex:unif}
	Let the arrival distribution be the uniform distribution, and assume that each candidate stays in the system for some very small fixed time $\eps$.	 If the number of candidates that have arrived by time $4/9$ is $4n/9$, we should accept, as we are virtually in the regular secretary case. To see this most clearly, first notice that in the immediate departure setting, the probability of success when rejecting is at most $\frac{1}{e}$\footnote{The probability of success when rejecting is decreasing in the number of arrivals, while the probability of success when accepting is increasing. Their intersection is roughly at $\frac{n}{e}$ arrivals, where both probabilities are roughly $1/e$.}. Second, the probability of success given the policy rejects at time $t$ is upper bounded by the probability of success given the policy rejects at time $t$ conditioned on candidates overlapping $+$ the probability that candidates overlap. The first term is equal to the probability of success in the instant departure setting, and the second term is upper bounded by $\frac{1}{n}$ by picking $\epsilon \in O(\frac{1}{n^3})$. On the other hand,the probability of success of accepting at time $t$ is $4/9$.

	However, if at time $t = 1-\eps$ the number of candidates that has arrived is $4n/9$, we should reject: all the remaining candidates will arrive by time $t + \eps$, and none of them will depart. Therefore, the probability of success if we reject is equal to the probability that the best candidate is not one of the first $\frac{4n}{9}$ candidates, i.e., $5/9$.
\end{example}

Example~\ref{ex:weirddist} (Appendix~\ref{app:examples}) shows that there exist some  arrival and departure distributions for which the optimal policy's decision is not non-increasing on $t$, i.e., for some $t$ the optimal policy rejects but for some $t' > t$ it accepts. 

Generally, an optimal policy at any point in time makes a decision that maximizes the probability of success from this point onwards. As ties can occur, the policy is not unique. Therefore, we will consider \emph{lazy} policies:  A policy is lazy if  it rejects whenever at a time of departure acceptance and rejection have identical conditional success probabilities. 
Observe that there is always a lazy optimal policy and it is unique. Denote this policy by $\opt$.
We first make some standard observations regarding the optimal lazy policy $\opt$ (see, e.g.,~\cite{Bruss2000,Dynkin63,GM66}). 

\begin{observation}
	$\opt$ only accepts candidates that are best-so-far.
\end{observation}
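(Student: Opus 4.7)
The plan is to observe that accepting a candidate who is not best-so-far yields zero probability of selecting the overall best, while rejecting always yields at least zero probability. The laziness of $\opt$ will then force rejection.

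More concretely, I would first unpack what ``best-so-far'' means in terms of ranks: candidate $i$ departing at time $t$ is best-so-far iff $\pi(i) = \min_{j : A_j \leq t} \pi(j)$, equivalently $R_i = 1$ and no subsequent arrival $j \leq i$ with $R_j = 1$ has occurred (using the relative-rank representation introduced in Section~\ref{sec:optimal stopping}). Suppose $i$ departs at time $t$ and is \emph{not} best-so-far. Then there exists some $j$ with $A_j \leq t$ and $\pi(j) < \pi(i)$. In particular $\pi(i) \neq 1$, so candidate $i$ is not the overall best with probability $1$ conditional on the current history. Hence the success probability of the continuation ``accept $i$'' is exactly $0$, regardless of the rest of the policy.

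On the other hand, the success probability of the continuation ``reject $i$'' is nonnegative, since at worst the policy can reject every remaining candidate and achieve probability $0$. Thus rejecting is weakly better than accepting. Since $\opt$ is the lazy optimal policy (rejecting whenever the two continuations have equal conditional success probabilities), $\opt$ must reject $i$. This handles both the strict case (where rejecting is strictly better) and the tied case (where both are $0$, e.g. if even rejecting has no chance of success from here).

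There is no real obstacle: the argument is a one-line domination observation combined with the tie-breaking convention. The only thing to be careful about is conditioning. The comparison of ``accept now'' vs.\ ``reject now'' must be made in the same conditional probability space, namely the one determined by $\hist_t$ together with the event that candidate $i$ departs at $t$; in this space, the event $\{\pi(i) = 1\}$ has probability $0$ whenever $i$ is not best-so-far, which is exactly what the argument uses.
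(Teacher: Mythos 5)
Your proof is correct and is exactly the standard domination-plus-laziness argument that the paper invokes by citing the classical literature (the paper itself gives no proof for this observation): a non-best-so-far candidate is already revealed by the history to satisfy $\pi(i)>1$, so accepting yields conditional success probability $0$, rejecting yields at least $0$, and the lazy tie-breaking rule forces rejection. Nothing further is needed.
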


Given a time $t$, let $\acc_t$ be the policy that accepts only at time $t$ and only in the event that the best-so-far candidate departs at time exactly $t$.
Let  $\rej_t$ be the policy that rejects all departing candidates up to and  including time $t$ and thereafter continues with the optimal policy. 
Given any policy $\pol$, we denote by $\win(\textsc{pol})$ the event that $\pol$ selects the best candidate.

\begin{obs}\label{obs:ob}
	Given a time $t$, let $h$ be a history until $t$ in which a best-so-far candidate departs at $t$. Then $\opt$ accepts this candidate if and only if it has not accepted any candidate before and
	\[
	\Pr\left[\win\left(\rej_{t}\right) \mid \hist_t = h\right] < \Pr\left[\win\left(\acc_t \right) \mid \hist_t = h\right].
	\]
\end{obs}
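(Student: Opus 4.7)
The plan is to prove both directions by the standard one-step optimality argument for optimal stopping, taking care to identify $\opt$'s continuation after a rejection with the policy $\rej_t$. Fix a history $h$ at time $t$ in which a best-so-far candidate departs. The forward (``only if'') direction is the easier one: if $\opt$ accepts at $t$, then by irrevocability it cannot have accepted any earlier candidate, and accepting must give at least as high a conditional success probability as rejecting. The strict inequality then follows because $\opt$ is lazy: if the two conditional probabilities were equal, $\opt$ would reject, contradicting acceptance.

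For the reverse (``if'') direction, assume $\opt$ has not accepted before and the displayed strict inequality holds. I would first verify that the conditional success probability of accepting the current candidate equals $\Pr[\win(\acc_t) \mid \hist_t = h]$. This is immediate: on the event $\hist_t = h$, accepting at $t$ succeeds exactly when the departing best-so-far candidate is the overall best, and $\acc_t$ by construction wins on precisely this event. Next I would verify that the conditional success probability of rejecting at $t$ and continuing optimally equals $\Pr[\win(\rej_t) \mid \hist_t = h]$. On the event $\hist_t = h$, the policy $\rej_t$ agrees with $\opt$ on all decisions up to and including $t$ (both reject, consistent with $\opt$ having rejected everything so far) and, by definition of $\rej_t$, both continue with $\opt$ from $t$ onward. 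Hence the two produce identical future decisions on this event, and their conditional winning probabilities coincide. Under the strict inequality, accepting is then strictly better than any continuation that starts by rejecting at $t$, so optimality of $\opt$ forces acceptance.

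The main subtlety is the identification in the second step: I need $\opt$'s future behavior, conditioned on having rejected through $t$ and on $\hist_t = h$, to agree in distribution with the future behavior of $\rej_t$. This uses the principle of deferred decisions introduced in Section~\ref{sec:optimal stopping}: the unrevealed arrival times, waiting times, and relative ranks after $t$ are independent of $\hist_t$, so $\opt$'s decisions after $t$ are a deterministic function of $h$ together with fresh independent samples, which is exactly what $\rej_t$ also computes. Once this coupling is in hand, the observation reduces to a single-step comparison between the two explicit conditional probabilities, handled by optimality and the lazy tie-breaking rule.
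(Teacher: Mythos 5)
Your proof is correct and is exactly the standard one-step (Bellman) optimality argument, with laziness supplying the strict inequality, that the paper itself invokes: the paper states this as a standard observation with citations and gives no explicit proof, and your identification of the continuation value after rejection with $\Pr[\win(\rej_t) \mid \hist_t = h]$ is the intended justification. Nothing further is needed.
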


\begin{obs}
	\label{obs:dn}
	Given a time $t$, let $h$ be a history until $t$ in which $\sofarval$ candidates have arrived by time $t$ and a best-so-far candidate departs at time $t$. Then $\Pr\left[\win\left(\acc_t\right) \mid \hist_t = h\right] = \frac{\sofarval}{n}$.
\end{obs}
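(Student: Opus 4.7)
The plan is to unpack what $\acc_t$ does on the conditioned history $h$, and then reduce the success event to a statement about the relative ranks $R_{k+1},\ldots,R_n$ of the not-yet-arrived candidates. On $\{\hist_t = h\}$, policy $\acc_t$ accepts a single candidate, namely the one departing at $t$, and this candidate is by hypothesis best-so-far, i.e., the best among the first $k$ arrivals. So $\acc_t$ succeeds if and only if this best-of-the-first-$k$ is in fact the overall best, which happens iff none of candidates $k+1, \ldots, n$ outranks it, i.e., iff $R_i \neq 1$ for every $i > k$.

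Next I would invoke the independence structure used throughout the paper: the tuples $(A_i)$, $(L_i)$, and $(R_i)_{i=1}^n$ are mutually independent, with each $R_i$ uniform on $[i]$ and the $R_i$'s independent of each other (the standard relative-rank representation of a uniformly random permutation). Consequently, conditioning on $\hist_t = h$ fixes $(A_1,\ldots,A_k)$, constrains $A_{k+1},\ldots,A_n$ to lie in $(t,1]$, determines the observed departure times and reveals $R_1,\ldots,R_k$, but does not affect the joint distribution of $R_{k+1},\ldots,R_n$. The event that the best-so-far departs at $t$ is measurable with respect to $(A_1,\ldots,A_k, L_1,\ldots,L_k, R_1,\ldots,R_k)$, so conditioning on it also leaves $R_{k+1},\ldots,R_n$ independent and with their original marginals.

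Combining these,
\[
\Pr\bigl[\win(\acc_t) \midd \hist_t = h\bigr] \;=\; \Pr\bigl[R_{k+1}\neq 1,\ldots,R_n\neq 1\bigr] \;=\; \prod_{i=k+1}^{n} \frac{i-1}{i} \;=\; \frac{k}{n},
\]
where the last equality is a telescoping product.

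The only nontrivial step is the second paragraph: justifying that conditioning on $\hist_t = h$ preserves the joint law of $R_{k+1},\ldots,R_n$. I would handle this cleanly by factoring the underlying probability space into the independent components $(A_i,L_i)_{i=1}^n$ and $(R_i)_{i=1}^n$, observing that $\hist_t$ is a function of the first component together with $(R_1,\ldots,R_k)$, and then applying independence of $(R_{k+1},\ldots,R_n)$ from everything in $\hist_t$. Everything else is a direct computation.
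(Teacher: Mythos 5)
Your proof is correct: the paper states this as a standard observation without proof (citing classical secretary-problem references), and your argument is exactly the intended one, relying on the independence and uniformity of the relative ranks $R_{k+1},\ldots,R_n$ that the paper sets up explicitly via the principle of deferred decisions. The reduction of $\win(\acc_t)$ to the event $\{R_i \neq 1 \text{ for all } i > k\}$ and the telescoping product $\prod_{i=k+1}^{n}\frac{i-1}{i} = \frac{k}{n}$ are both right.
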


Henceforth, when reasoning about $\opt$, we only consider the events when a candidate that is the best out of all those seen thus far  leaves; as $\opt$ never needs to accept at any other time, it suffices to define $\opt$ only on these events.

In order to prove Theorem~\ref{thm:opt}, we need to prove three things: (1) The optimal policy depends only on $t$ and  $\numbersofar_t$ (not on the complete history $\hist_t$) (Lemma~\ref{lem:dt}), (2) the optimal policy is monotone non-decreasing in $\numbersofar_t$ (Lemma~\ref{lem:d}), and (3) the optimal policy is monotone non-decreasing in $t$ for uniform arrivals (Lemma~\ref{lem:t}).

\subsection{The optimal policy is bivariate}

\begin{lemma}\label{lem:dt}
	Given   $n$, the arrival distribution $\arrtimedist$  and  waiting time distribution $\staydist$, the optimal lazy policy can be described by a bivariate function. That is, the decisions depend only on the time $t$ and $\numsofar_t$, the number of candidates that have arrived until time $t$.
\end{lemma}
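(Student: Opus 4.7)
The plan is to proceed by contradiction. Suppose OPT's decision at time $t$ is not a function of $(t, K_t)$ alone. Then there exist two histories $h_1, h_2$ at the same time $t$ with the same number of past arrivals $k$, each having a best-so-far candidate departing at $t$, such that OPT accepts on $h_1$ and rejects on $h_2$. Since the accept probability equals $k/n$ for both (Observation~\ref{obs:dn}), Observation~\ref{obs:ob} together with the laziness of OPT yield
\[
\Pr[\win(\rej_t) \mid \hist_t = h_1] \;<\; \frac{k}{n} \;\leq\; \Pr[\win(\rej_t) \mid \hist_t = h_2].
\]
I will contradict this strict inequality by constructing a simulation policy that, when faced with $h_1$, rejects at $t$ and thereafter applies OPT's post-$h_2$ strategy to the observed future events. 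If its success probability on $h_1$ equals $\Pr[\win(\rej_t) \mid \hist_t = h_2]$, then it strictly beats OPT's post-rejection success on $h_1$, contradicting OPT's optimality after rejection.

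The core of the argument is to show that the conditional distribution of every future event relevant to the policy's decisions and ultimate success depends only on $(t, k)$ and $n$. Two ingredients are needed. First, a structural observation: once the current best-so-far $i^{*} = \max\{i : A_i \leq t,\, R_i = 1\}$ is rejected at $t$, no other past candidate can be best-so-far at a future departure. Indeed, any $i \leq k$ with $i \neq i^{*}$ either has $R_i > 1$, and hence was never best-so-far, or has $R_i = 1$ with $i < i^{*}$, in which case $R_{i^{*}} = 1$ implies that $i^{*}$ strictly beats $i$ and has arrived by $t < D_i$. Hence every future decision point is the departure of a \emph{future} candidate $j > k$. Second, distributional facts: conditional on $\hist_t = h$ with $K_t = k$, the future arrival times $A_{k+1}, \ldots, A_n$ are i.i.d.\ draws from $\arrtimedist$ restricted to $(t, 1]$; the future waiting times are i.i.d.\ from $\staydist$; by the principle of deferred decisions, the future relative ranks $R_{k+1}, \ldots, R_n$ are independent with $R_j$ uniform on $[j]$; and conditional on the best overall candidate lying among the future arrivals (which happens with probability $(n-k)/n$), its identity is uniform over $\{k+1, \ldots, n\}$. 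None of these distributions depend on $h$ beyond $(t, k)$.

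Combining these, I can couple the post-$h_1$ and post-$h_2$ probability spaces by identifying each future observation sequence (and the corresponding ``best overall'' label) with its equi-distributed counterpart. Under this coupling, the simulation policy on $h_1$ executes the same sequence of decisions as OPT on $h_2$ and wins on exactly the same events, so its success probability equals $\Pr[\win(\rej_t) \mid \hist_t = h_2]$. But OPT is optimal after a rejection at $t$ on $h_1$, so no alternative continuation can exceed $\Pr[\win(\rej_t) \mid \hist_t = h_1]$; hence $\Pr[\win(\rej_t) \mid \hist_t = h_2] \leq \Pr[\win(\rej_t) \mid \hist_t = h_1]$, contradicting the strict inequality above. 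I expect the main obstacle to be the structural observation that $i^{*}$ ``blocks'' every other past candidate from being best-so-far at its future departure; once this is in hand, the distributional match and the coupling follow directly from the independence and exchangeability built into the model.
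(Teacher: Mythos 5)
Your proposal is correct and rests on the same underlying idea as the paper's proof: a contradiction via a simulation policy that, on the ``worse'' history, pretends the history is the ``better'' one and copies $\opt$'s continuation, justified by the fact that the future observations relevant to success are equi-distributed given $(t,\numsofar_t)$. The organizational difference is that you build one global coupling between the two full histories $h_1$ and $h_2$, whereas the paper first interpolates between them by a finite chain of histories differing in a single coordinate and then couples only adjacent pairs; the chain is pure bookkeeping that localizes the coupling to one entry at a time, and your one-shot version is arguably cleaner, while the paper's version makes each case mechanically checkable. The one place where you are less careful than the paper is the reconciliation of the still-present past candidates: the observation stream after $t$ on $h_1$ contains the (real) departures of $h_1$'s not-yet-departed past candidates, and a continuation of $h_2$ would instead contain departures of $h_2$'s (generally different) not-yet-departed candidates, so the simulation must suppress the former and hallucinate the latter by drawing $\staytime'_j$ from $\staydist$ conditioned on $\staytime'_j \geq t - \arrtime_j$ --- exactly the paper's cases 2 and 3. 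Your structural observation that no past candidate other than the rejected best-so-far can ever again be a decision point is precisely what makes this suppression/hallucination harmless (and also lets you argue that an optimal continuation need not condition on these departure times at all, since they are independent of all future ranks), so the fix is already implicit in your write-up; it just deserves to be stated explicitly rather than absorbed into the phrase ``every future event relevant to the policy's decisions.''
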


\begin{proof}
	The proof is by contradiction. If it is not sufficient for $\opt$ to know $\numbersofar_t$ and $t$, then there must be two histories at time $t$ with the same $\numbersofar_t$ for which $\opt$ decides differently. Denote these histories by $h$ and $h'$. As the probability of success if the candidate is accepted is the same in both cases (by Observation~\ref{obs:dn}), it must hold that the probability of success if they reject is different. 
	
	Denote two such histories by $h=(\arrtimevecvals,\deptimevecvals,\rankvecvals)$ and $h'=(\arrtimevecvals',\deptimevecvals',\rankvecvals')$. It is easy to verify that it is possible to transition from $h$ to $h'$ in finitely many steps $h^{(0)}, \ldots, h^{(m)}$ such that $h^{(0)} = h$, $h^{(m)} = h'$ and each pair $h^{(i)}$, $h^{(i+1)}$ differs only in a single entry in the arrivals, departures, or rank vector. As the probability of success when rejecting on $h$ and $h'$ is different, there must be some $i$ such that the probability of success when rejecting on $h^{(i)}=\left( \arrtimevecvals^{(i)},\deptimevecvals^{(i)},\rankvecvals^{(i)}\right) $ and $h^{(i+1)}=\left( \arrtimevecvals^{(i+1)},\deptimevecvals^{(i+1)},\rankvecvals^{(i+1)}\right) $ is different. Consider such an $i$. Assume w.l.o.g. that $\Pr\left[\win\left( \opt\left( h^{(i)}\right) \right)  \growingmid \hist_t = h^{(i)} \right] > \Pr\left[ \win\left( \opt\left( h^{(i+1)}\right) \right)  \growingmid \hist_t = h^{(i+1)} \right] $. 
	
	We first show that $\opt$ rejects on $h^{(i)}$ and accepts on $h^{(i+1)}$.
	Assume that the opposite holds, i.e., $\opt$ accepts at $h^{(i)}$ and rejects at $h^{(i+1)}$. Then we have that 
	
	\begin{align*}\Pr\left[\win\left(\opt\left(h^{(i)}\right)\right) \growingmid \hist_t = h^{(i)}\right]&= \Pr\left[\win\left(\acc_t\right) \growingmid \hist_t = h^{(i)}\right]\\
	& = \Pr\left[\win\left(\acc_t\right) \growingmid \hist_t = h^{(i+1)}\right]\\
	& \leq \Pr\left[\win\left(\rej_t\right) \growingmid \hist_t = h^{(i+1)}\right]\\
	& = \Pr\left[\win\left(\opt\left(h^{(i+1)}\right)\right) \growingmid \hist_t = h^{(i+1)}\right],
	\end{align*}
	
	a contradiction. The first equality follows from the assumption that $\opt$ accepts at $h^{(i)}$. The second is from Observation~\ref{obs:dn}. The third inequality and last equality follows from the assumption that $\opt$ rejects at $h^{(i+1)}$.

	Next, we define a new policy that rejects on history $h^{(i+1)}$ and has better probability of success than $\opt$, a contradiction. 
	If the difference between $h^{(i)}$ and $h^{(i+1)}$ is in $\arrtimeval^{(i)}_j \neq \arrtimeval^{(i+1)}_j$ or $\rankval^{(i)}_j \neq \rankval^{(i+1)}_j$ for some $j$, simply write $h^{(i)}$ instead of  $h^{(i+1)}$ (i.e.,  change the different coordinate), and continue with $\opt\left(h^{(i)}\right)$ (as if the history at this time is $h^{(i)}$). As the marginal probability of every future event is identical for $h^{(i)}$ and $h^{(i+1)}$, the modified algorithm has the same probability of success as it would when the written history is $h^{(i+1)}$. This is in contradiction to the optimality of $\opt\left(h^{(i+1)}\right)$.
	If $\deptimeval^{(i)}_j \neq \deptimeval^{(i+1)}_j$ for some $j$, we consider the three possibilities:
	
	\begin{enumerate}[leftmargin=*]
		\item $\deptimeval^{(i)}_j \neq \perp, \deptimeval^{(i+1)}_j \neq \perp$. The same reasoning as above holds; overwrite $h^{(i+1)}$ by $h^{(i)}$ and continue with $\opt\left(h^{(i)}\right)$.
		\item  $\deptimeval^{(i)}_j \neq \perp, \deptimeval^{(i+1)}_j = \perp$. Similarly, overwrite $h^{(i+1)}$ by $h^{(i)}$; when  candidate $j$ departs (at some time after $t$), ignore her departure. 
		\item If $\deptimeval^{(i)}_j=\perp, \deptimeval^{(i+1)}_j \neq \perp$, overwrite $h^{(i+1)}$ by $h^{(i)}$, i.e., rewrite $\deptimeval^{(i+1)}_j=\perp$, and draw for candidate $j$ a new random variable  $\staytime'_j$ from $\staydist$ conditional on $\staytime'_j \geq t - \arrtime_j$. Informally, now the algorithm ``believes''  that candidate $j$ is still in the system, and at time  $\deptime_j = 
		\min\{\arrtime_j+\staytime'_j, 1\}$ it will simulate candidate $j$ departing.  Because the algorithm never accepts candidate $j$ when it leaves,  this gives the same conditional probability for all possible futures as when the history is $h^{(i)}$, hence $\Pr\left[\win\left(\opt\left(h^{(i)}\right)\right) \mid \hist_t = h^{(i+1)}\right] \geq \Pr\left[\win\left(\opt\left(h^{(i)}\right)\right) \mid \hist_t = h^{(i)}\right]$, in contradiction to the optimality of $\opt\left(h^{(i+1)}\right)$.\qedhere
	\end{enumerate}
\end{proof}

\subsection{Monotonicity in the number of candidates.}\label{sec:fut}

In order to show that the optimal policy is monotone non-decreasing in $\numsofar_t$, we  need to reason about how candidates arrive after time $t$. We summarize these arrivals as follows: The \emph{future after time $t$}, denoted by $\fut_t$, is the vector of all arrivals $\arrtime_{\numsofar_t + 1}, \ldots, \arrtime_n$, durations  $\staytime_{\numsofar_t + 1}, \ldots, \staytime_n$, and indicators $B_{\numsofar_t + 1}, \ldots, B_n$, where $B_i = 1$ if $\rank_i = 1$ and $B_i = 0$ otherwise. That is, $B_i$ is an indicator for the event that the $i$-th arriving candidate is the best one seen so far at the time of arrival. Note that, as $\opt$ is a bivariate policy, $\fut_t$ completely determines whether or not $\opt$ selects the best candidate after time $t$.

We will be interested in the conditional probability spaces given that $\numsofar_t = \sofarval$ for some $\sofarval$. Let us describe two equivalent ways of sampling a conditional future after time $t$ given that $\numsofar_t = \sofarval$. The first way to draw a future at random is the following: Draw $n-k$ times from $\arrtimedist$ conditioned on $(t,1]$ and order them such that $\arrtime_{\sofarval+1} \leq \arrtime_{\sofarval+2} \leq \ldots \leq \arrtime_n$. Furthermore, independently draw $\staytime_i$ from $\staydist$ and set $B_i = 1$ with probability $\frac{1}{i}$. The second way to draw a conditional future at random is the following: draw $n-k$ times from $\arrtimedist$ conditioned on $(t,1]$ and from $\staydist$, and $n-k$ times without repetition from $[n]$. Arbitrarily partition the sampled values into triples: $(\arrtimeval_i,\staytimeval_i,\pi_i)$, where $\arrtimeval_i \in (t,1], \staytimeval_i \geq 0, \pi_i \in [n]$. Sort these triples by $\arrtimeval_i$ and set $B_i = 1$ whenever it $\pi_i$ is smaller than all $\pi_{i'}$ that appear before in the sequence and all values that do not appear at all.

Let $\mathcal{E}_t$ be the event that at time $t$ a candidate leaves and is best so far. We first prove the following lemma.

\begin{lemma}\label{lemma:kkpo}
	For any $t \in [0,1]$ and any $k \in [n-1]$, it holds that
	\begin{equation*}
	\Pr[\win(\rej_t) \mid \numbersofar_t = \sofarval, \mathcal{E}_t] \geq \frac{\sofarval}{\sofarval+1} \Pr[\win(\rej_t) \mid \numbersofar_t = \sofarval+1, \mathcal{E}_t].
	\end{equation*}
\end{lemma}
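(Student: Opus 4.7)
The plan is a simulation argument. We construct a randomized policy $\sigma$ in the ``$k$-world'' (i.e., conditional on $\numbersofar_t=k,\mathcal{E}_t$) whose success probability is at least $\tfrac{k}{k+1}\,\Pr[\win(\rej_t)\mid\numbersofar_t=k+1,\mathcal{E}_t]$. Since $\rej_t$ is the optimal continuation among policies rejecting every candidate by time $t$ in the $k$-world, this yields the inequality.

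\textbf{Construction of $\sigma$.} Using the second sampling representation of the future (triples $(\arrtimeval_i,\staytimeval_i,\pi_i)_{i=1}^{n-k}$ with the ranks $\pi_i$ forming a uniform subset of $[n]$ of size $n-k$), $\sigma$ samples $j^{\star}\in[n-k]$ uniformly at random and treats the $j^{\star}$-th future triple as a phantom past rejected candidate (equipped with an auxiliary arrival time drawn from $\arrtimedist\mid[0,t]$ and stay time from $\staydist$). For every other future candidate, $\sigma$ applies the decision rule $\Theta_n\bigl(t',\numbersofar_{t'}+1-\mathbb{1}[\arrtimeval_{j^{\star}}\leq t']\bigr)$---the bivariate rule that $\opt$ would use in the ``simulated $(k+1)$-world'' where the phantom has been inserted into the past.

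\textbf{The conditioning event.} Let $i^*$ denote the $k$-world best-so-far at time $t$, and let $r^*$ be its global rank, i.e., the minimum of the $k$ past ranks. Condition on the event $\mathcal{C}=\{\pi_{j^{\star}}>r^*\}$. By exchangeability of the $k+1$ distinct ranks formed by the $k$ past ranks together with $\pi_{j^{\star}}$, each one is equally likely to be the minimum, so $\Pr[\mathcal{C}\mid\numbersofar_t=k,\mathcal{E}_t]=1-\tfrac{1}{k+1}=\tfrac{k}{k+1}$. Under $\mathcal{C}$, three useful properties hold: (i) the best-so-far at $t$ in the simulated $(k+1)$-world is still $i^*$ (since the phantom rank does not beat $r^*$), so $\mathcal{E}_t$ is preserved; (ii) any candidate $j\neq j^{\star}$ that is best-so-far in the $k$-world satisfies $\pi_j<r^*<\pi_{j^{\star}}$ and therefore is also best-so-far in the simulated $(k+1)$-world (and vice versa), so the best-so-far patterns of the two worlds coincide on all candidates except the phantom; (iii) a direct calculation, using that a uniform $k$-subset of $[n]$ augmented by a uniform element of its complement conditional on exceeding the subset's minimum is distributed as a uniform $(k+1)$-subset of $[n]$ with the same minimum, shows that the simulated $(k+1)$-world sample has the same joint distribution over past ranks, arrivals, and stay times as an actual sample conditional on $\numbersofar_t=k+1,\mathcal{E}_t$.

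\textbf{Conclusion and main obstacle.} Facts (i)--(iii) together imply that, conditional on $\mathcal{C}$, $\sigma$'s observations and decisions in the $k$-world are distributionally identical to those of $\rej_t$ in an actual $(k+1)$-world sample, so their success probabilities agree. Combining with $\Pr[\mathcal{C}]=\tfrac{k}{k+1}$ yields
\[
\Pr[\sigma\text{ succeeds}\mid\numbersofar_t=k,\mathcal{E}_t]\geq\tfrac{k}{k+1}\,\Pr[\win(\rej_t)\mid\numbersofar_t=k+1,\mathcal{E}_t],
\]
and the lemma follows from the optimality of $\rej_t$. The main obstacle is verifying (iii): one must carefully compute the joint distribution of the simulated past (ranks, arrivals, and stays) under $\mathcal{C}$ and check that it matches the actual $(k+1)$-world distribution under $\mathcal{E}_t$. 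This is precisely the ``coupling with a suitably chosen conditional probability space'' mentioned in the introduction, and $\Pr[\mathcal{C}]=\tfrac{k}{k+1}$ is the promised multiplicative factor.
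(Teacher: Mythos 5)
Your proposal is correct and follows essentially the same route as the paper: a uniformly random future arrival is re-imagined as an already-rejected past candidate, your conditioning event $\mathcal{C}$ is exactly the paper's event $\I_j$ (that the phantom does not beat the first $\sofarval$ candidates), with probability $\tfrac{\sofarval}{\sofarval+1}$, and the distributional identification with the genuine $(\sofarval+1)$-world is carried out via the same second sampling representation of the future. The only cosmetic difference is that you track the full simulated past (your point (iii)), whereas the paper observes that bivariateness reduces everything to the future observations $B_i$, which is why it only needs to check that deleting the phantom's triple leaves the remaining indicators unchanged under $\I_j$.
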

\begin{proof}
	%We will show that Inequality~\eqref{rr3} cannot hold.

	Let $\textsc{pol}$ be the following policy.
	It rejects all candidates that depart up to (and including) time $t$. At time $t$, it draws an integer $j$ uniformly at random from $\{\numsofar_t+1,\ldots, n\}$ and then numbers the remaining candidates by $\{\numsofar_t+1,\ldots, n\}$, according to their arrival order. $\textsc{pol}$ then executes the optimal policy as if the $j$-th candidate had already arrived and was not the best until time $t$. When a new candidate $x$ arrives, $\textsc{pol}$ discovers whether $x$ is better than all of the candidates it has seen thus far. If he \emph{is} better than all of them, $\textsc{pol}$ assumes he is also better than the $j$-th candidate. As we've already shown, the optimal policy's decision at time $t'$ only depends on $\numsofar_{t'}$ and if the currently departing candidate is the best one so far, so $\textsc{pol}$ can simulate it on the pretended input.
	
	Because $\rej_{t}$ follows the optimal policy after $t$ , we have
	\begin{equation}\label{badoo}
	\Pr[\win(\rej_{t}) \mid \numbersofar_t = \sofarval, \mathcal{E}_t] \geq \Pr[\win(\textsc{pol}) \mid \numbersofar_t = \sofarval, \mathcal{E}_t].
	\end{equation}

	We will  show that 	\begin{equation}
	\Pr[\win(\textsc{pol}) \mid \numbersofar_t = \sofarval, \mathcal{E}_t] \geq \frac{\sofarval}{\sofarval+1} \Pr[\win(\rej_{t}) \mid \numbersofar_t = \sofarval+1, \mathcal{E}_t]. \label{eq22}
	\end{equation}
	
	Combining Inequalities~\eqref{badoo} and~\eqref{eq22} will complete the proof.
	
	Denote by $\I_j$ the event that the $j$-th candidate is not better than the first $k$ candidates.
	In order to prove Inequality~\eqref{eq22},  we show the following: (i) $\Pr[\I_j] =\frac{k}{k+1}$, and (ii) the ``pretend'' futures that $\textsc{pol}$ observes, conditioned on the event that $j$ is not better than the first $k$,  are distributed identically to $\fut_t$, conditioned on $\numsofar_t = k+1$. In other words, let $D_1$ be the distribution of futures that $\pol$ observes if $j$ is not better than the first $k$ candidates; let $D_2$ be the distribution of $\fut_t$, when $\numsofar_t = k+1$. Then $D_1=D_2$.
	
	Statement (i) is true because the probability that a randomly chosen candidate is not the best out of a set of $k+1$ is exactly $\frac{k}{k+1}$. For Statement (ii), we use the second way to draw a conditional future described above. Removing a uniformly selected observation from the ordered sequence is equivalent to removing, for example, the last draw of the unordered tuples $(\arrtimeval_i,\staytimeval_i,\pi_i)$. The event $\I_j$ corresponds to the event that the respective value $\pi_{n - \sofarval}$ is higher than the smallest value that is not drawn. This means, irrespective of where $\pi_{n - \sofarval}$ appears in the ordered sequence, the observations $B_i$ do not change after it is removed. 
	Therefore, as $\textsc{pol}$ and $\rej_t$ make the same decisions on the same observations, and these observations are identically distributed given that the $j$-th candidate is not better than the first  $k$ candidates,
	
	\begin{align*}
	\Pr[\win(\textsc{pol}) \mid \numbersofar_t = \sofarval, \mathcal{E}_t] & \geq \Pr[\win(\textsc{pol}) \wedge \I_j \mid \numbersofar_t = \sofarval, \mathcal{E}_t] \\
	& = \Pr[\I_j] \Pr[\win(\rej_t) \mid \numbersofar_t = \sofarval + 1, \mathcal{E}_t] \\
	& = \frac{\sofarval}{\sofarval+1} \Pr[\win(\rej_{t}) \mid \numbersofar_t = \sofarval + 1, \mathcal{E}_t]. \qedhere
	\end{align*}
\end{proof}

\begin{lemma}\label{lem:d}
	For any number of candidates $n$, time $t$, arrival distribution $\arrtimedist$ and waiting time distribution $\staydist$, the optimal policy's $\Theta_n$ function is monotone non-decreasing in $\numbersofar_t$. 
\end{lemma}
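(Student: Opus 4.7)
The plan is to prove the contrapositive: if $\Theta_n(t,k+1)=0$ (\emph{i.e.} $\opt$ rejects at $(t,k+1)$), then $\Theta_n(t,k)=0$. Since $\opt$ is the lazy optimal policy, rejection at $(t,k+1)$ means (via Observation~\ref{obs:ob} combined with Observation~\ref{obs:dn}) that
\[
\Pr[\win(\rej_t)\mid \numbersofar_t=k+1,\mathcal{E}_t] \;\geq\; \Pr[\win(\acc_t)\mid \numbersofar_t=k+1,\mathcal{E}_t] \;=\; \frac{k+1}{n}.
\]
My goal reduces to showing the analogous inequality with $k+1$ replaced by $k$, which again (by the laziness of $\opt$) forces rejection at $(t,k)$.

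This is exactly where Lemma~\ref{lemma:kkpo} does the heavy lifting. Chaining it with the inequality above gives
\[
\Pr[\win(\rej_t)\mid \numbersofar_t=k,\mathcal{E}_t] \;\geq\; \frac{k}{k+1}\Pr[\win(\rej_t)\mid \numbersofar_t=k+1,\mathcal{E}_t] \;\geq\; \frac{k}{k+1}\cdot\frac{k+1}{n} \;=\; \frac{k}{n}.
\]
The right-hand side equals $\Pr[\win(\acc_t)\mid \numbersofar_t=k,\mathcal{E}_t]$ by Observation~\ref{obs:dn}, so rejection is (weakly) at least as good as acceptance when $\numbersofar_t=k$. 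Laziness of $\opt$ (ties resolved in favor of rejection) then gives $\Theta_n(t,k)=0$, completing the contrapositive.

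The main obstacle has effectively been absorbed into Lemma~\ref{lemma:kkpo}: all the nontrivial probabilistic content (the simulation argument that pretends one of the post-$t$ arrivals occurred before $t$, the factor-of-$\tfrac{k}{k+1}$ coupling between conditional futures with $k$ versus $k+1$ past arrivals) lives there. Once that lemma is in hand, monotonicity in $\numbersofar_t$ is a short algebraic consequence: the acceptance payoff grows linearly as $k/n$, and the lemma says the rejection payoff cannot drop by more than a factor $k/(k+1)$ when going from $k+1$ to $k$, so once acceptance overtakes rejection at some $k$, it stays ahead for all larger $k$. No further case analysis on the arrival or waiting distributions is needed, since Lemma~\ref{lemma:kkpo} is already stated for arbitrary $\arrtimedist$ and $\staydist$.
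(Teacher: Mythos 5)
Your proof is correct and is essentially the paper's argument in contrapositive form: the paper assumes $\Theta_n(t,k)=1$ and $\Theta_n(t,k+1)=0$ and derives a contradiction with Lemma~\ref{lemma:kkpo}, which is logically the same chain of inequalities you write. The use of Observation~\ref{obs:dn} for the $k/n$ acceptance probability and of laziness to resolve the tie case also matches the paper.
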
	

\begin{proof}
	Assume that there are some $t$ and $\sofarval$ such that $\Theta_n(t, \sofarval)=1$ and $\Theta_n(t,\sofarval+1)=0$, i.e. the optimal policy accepts when $\sofarval$ candidates have arrived and the best so far departs at time $t$, but rejects when $\sofarval+1$ candidates have arrived (and the best so far departs at time $t$).
	Let $\mathcal{E}_t$ be the event that at time $t$ a candidate leaves and is best so far. Then 
	\[
	\Pr[\win(\acc_t) \mid \numbersofar_t = \sofarval, \mathcal{E}_t] = \frac{\sofarval}{n} .
	\]
	Furthermore $\Theta_n(t, \sofarval)=1$ and $\Theta_n(t,\sofarval+1)=0$ mean that
	\begin{align}
	\Pr[\win(\rej_t) \mid \numbersofar_t = \sofarval, \mathcal{E}_t] & < \Pr[\win(\acc_t) \mid \numbersofar_t = \sofarval, \mathcal{E}_t]  = \frac{\sofarval}{n},\label{rrr1}\\
	\Pr[\win(\rej_t) \mid \numbersofar_t = \sofarval+1, \mathcal{E}_t] &\geq \Pr[\win(\acc_t) \mid \numbersofar_t = \sofarval+1, \mathcal{E}_t]
	= \frac{\sofarval+1}{n}.\label{rrr2} \end{align}
	
	\noindent Combining inequalities~\eqref{rrr1} and~\eqref{rrr2} we have that
	$\Pr[\win(\rej_t) \mid \numbersofar_t = \sofarval, \mathcal{E}_t]$ is strictly smaller than $\frac{\sofarval}{\sofarval+1} \Pr[\win(\rej_t) \mid \numbersofar_t = \sofarval+1, \mathcal{E}_t]$, in contradiction to Lemma~\ref{lemma:kkpo}.
\end{proof}

\subsection{Monotonicity in time}\label{sec:app:time}
In the arrival distribution is uniform, we also have monotonicity of $\Theta_n$ in $t$. If we compare two points in time $t < t'$, conditional on $\numsofar_t = \sofarval$ and $\numsofar_{t'} = \sofarval$ respectively, it is easier for the algorithm to succeed from $t'$, that is if there is less time remaining. The reason is that we can pretend all arrivals between $t'$ and $1$ actually appear between $t$ and $1$ by linear scaling. The effect of this linear scaling is that the pretended arrival times appear as if they are uniformly drawn from $(t, 1]$ but the durations for which candidates are larger than if they were drawn from $\staydist$.

Consequently, the optimal policy is more reluctant at later points in time because the probability of success when continuing the sequence is larger. This argument only works if $\arrtimedist$ is indeed uniform. In Example~\ref{ex:weirddist} we show that for a non-uniform distribution the optimal choices might not be monotone  non-increasing in $t$.

\begin{lemma}\label{lem:t}
	If the arrival distribution is uniform, then for any number of candidates $n$, number of candidates that have arrived so far $\numbersofar_t$ and waiting time distribution $\staydist$, the optimal policy's bivariate function $\Theta_n$  is monotone non-increasing in $t$. 
\end{lemma}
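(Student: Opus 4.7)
The plan is to prove that $W(s, k) := \Pr[\win(\rej_s) \mid K_s = k, \mathcal{E}_s]$ is monotone non-decreasing in $s$ for fixed $k$. Combined with $\Pr[\win(\acc_s) \mid K_s = k, \mathcal{E}_s] = k/n$ being independent of $s$ (Observation~\ref{obs:dn}), this yields the claim via Observation~\ref{obs:ob}: if $\opt$ rejects at $(t, k)$ then $W(t, k) \geq k/n$, and the monotonicity gives $W(t', k) \geq W(t, k) \geq k/n$, so $\opt$ also rejects at $(t', k)$, i.e., $\Theta_n(t', k) = 0$.

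To prove $W(t, k) \leq W(t', k)$ for $t < t'$, I would exploit the uniform-arrival assumption via a linear scaling. Define the bijection $\psi \colon [t', 1] \to [t, 1]$ by $\psi(s) = t + \frac{1-t}{1-t'}(s - t')$. Conditional on $K_{t'} = k$, the future arrivals $A_{k+1}, \ldots, A_n$ are i.i.d.\ uniform on $(t', 1]$, so the pretended arrivals $\tilde{A}_i := \psi(A_i)$ are i.i.d.\ uniform on $(t, 1]$, matching the $t$-world. In the ``pretend $t$-world'' where candidate $i$ has arrival $\tilde{A}_i$ and stretched stay $\tilde{L}_i := \frac{1-t}{1-t'} L_i \geq L_i$, a direct computation shows that the pretend departure $\tilde{A}_i + \tilde{L}_i$ maps under $\psi^{-1}$ to the actual $t'$-world departure $A_i + L_i$. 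Hence pretend arrivals and departures coincide in actual time with the real ones, the event sequences are in bijection, and any pretend-world policy can be executed in the actual $t'$-world at the same actual-time instants; consequently $W(t', k)$ equals the optimal success probability in the pretend $t$-world with stretched stay distribution $\tilde{\staydist}$, which stochastically dominates $\staydist$ pointwise.

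It therefore suffices to show that inflating the stay distribution weakly increases the optimal success probability, $W^{\tilde{\staydist}}(t, k) \geq W^{\staydist}(t, k) = W(t, k)$, where superscripts indicate the stay distribution used. The intuition is that longer stays never hurt: any $\staydist$-optimal policy can be simulated in the $\tilde{\staydist}$-world by treating each candidate as if it departed at a shorter (coupled) time. Concretely, using the pathwise coupling $L_i = \frac{1-t'}{1-t}\tilde{L}_i \leq \tilde{L}_i$, at each candidate's actual departure in the $\tilde{\staydist}$-world the algorithm retrospectively computes the $\staydist$-optimal policy's decision at the virtual departure time $\tilde{A}_i + L_i$; by Lemma~\ref{lem:dt}, this decision depends only on $(\tilde{A}_i + L_i, K_{\tilde{A}_i + L_i})$ and on $i$'s best-so-far status, which can be read off from arrival times and ranks alone, without knowledge of other candidates' stay times.

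The main obstacle is arguing that this ``candidate-by-candidate'' simulation is globally consistent: a candidate $j$ might have a virtual $\staydist$-departure earlier (in virtual time) than $i$'s while $j$'s actual $\tilde{\staydist}$-departure is later than $i$'s, so at the moment the algorithm must commit on $i$ it has not yet observed $\tilde{L}_j$. Lemma~\ref{lem:dt} is what saves the argument, since it reduces the $\staydist$-optimal policy's decision on each candidate to a function of globally-observable quantities (arrival times, ranks, and the candidate's own stay) rather than of all the stays jointly; verifying the consistency of the resulting online simulation --- in particular, that the simulated $\staydist$-optimal policy does not ``accept twice'' in actual time --- is the main technical point to execute carefully.
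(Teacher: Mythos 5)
Your proposal is correct and takes essentially the same route as the paper's proof: a linear rescaling of the remaining time interval (which keeps the conditional arrivals uniform and only lengthens the waiting times), combined with the observation that a bivariate policy may safely pretend that candidates depart earlier than they actually do. The online-consistency issue you flag at the end is indeed the one delicate step, and the paper elides it in exactly the same way; it resolves because whenever the virtual policy accepts the overall best candidate, any earlier actual-time acceptance by the simulation would have to be of a best-so-far candidate at a later virtual time, which can only be the overall best itself.
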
	

\begin{proof}
	Assume that there are times $t'<t$ such that for some $\sofarval$ we have $\Theta_n(t', \sofarval) = 0$, $\Theta_n(t, \sofarval) = 1$, i.e., the optimal policy accepts at time $t$ if $\sofarval$ candidates have arrived and the best so far departs at time $t$, but rejects for time $t < t'$.
	
	By Observation~\ref{obs:dn}, any algorithm that accepts the best of $\sofarval$ candidates has success probability  $\frac{\sofarval}{n}$. Therefore, it must hold that the success probabilities of rejecting differ.	Letting $\mathcal{E}_t$ denote the event that at time $t$ a candidate leaves and is best so far, we have
	\begin{align}
	\Pr[\win(\rej_{t'}) \mid \numbersofar_{t'} = \sofarval, \mathcal{E}_{t'}] &\geq \Pr[\win(\acc_{t'}) \mid \numbersofar_{t'} = \sofarval, \mathcal{E}_{t'}]\notag \\
	%	&   = \frac{\sofarval}{n} \notag \\
	&	= \Pr[\win(\acc_t) \mid \numbersofar_t = \sofarval, \mathcal{E}_t] \notag\\
	& > \Pr[\win(\rej_t) \mid \numbersofar_t = \sofarval, \mathcal{E}_t]. \label{eq:r1}
	\end{align}
	
	Towards a contradiction, we describe for the case $\numbersofar_t = \sofarval$ and $\mathcal{E}_t$ a policy $\textsc{pol}$ that rejects at time $t$ and whose success probability is $\Pr[\win(\rej_{t'}) \mid \numbersofar_{t'} = \sofarval, \mathcal{E}_{t'}]$. This will contradict the optimality of $\rej_t$.
	
	To define a new policy $\textsc{pol}$, we observe that we can safely pretend that candidates leave early. We can set $\tilde{\staytime}_i = \frac{1 - t}{1 - t'} \staytime_i$ and accordingly pretend departure times $\tilde{\deptime}_i = \arrtime_i + \tilde{\staytime}_i$. At the actual departure time $\deptime_i$, we can perform whatever we would have at $\tilde{\deptime}_i$ if the departure times were $\tilde{\deptime}_{\sofarval + 1}, \ldots, \tilde{\deptime}_n$. This is possible because $\frac{1 - t}{1 - t'} \leq 1$.
	
	On input $\arrtime_{\sofarval + 1}, \ldots, \arrtime_n$, $\tilde{\deptime}_{\sofarval + 1}, \ldots, \tilde{\deptime}_n$, $\rank_{\sofarval + 1}, \ldots, \rank_n$, we define a policy as follows. We linearly scale the time interval $(t, 1]$ to $(t', 1]$, defining $\arrtime'_i = \frac{1 - t'}{1 - t} (\arrtime_i - t) + t'$, $\deptime' = \frac{1 - t'}{1 - t} ( \tilde{\deptime}_i - t) + t'$, $\rank_i' = \rank_i$. On these inputs, we run policy $\rej_{t'}$, pretending that $\numbersofar_{t'} = \sofarval$ and $\mathcal{E}_{t'}$.
	
	Observe that as the scaling is linear $\arrtime'_{\sofarval + 1}, \ldots, \arrtime'_n$ is an ordered sequence of $n - \sofarval$ values from the uniform distribution on $(t', 1]$. Furthermore, $\staytime'_i = \frac{1 - t'}{1 - t} \tilde{\staytime}_i = \staytime_i$. Hence, $\staytime'_{\sofarval + 1}, \ldots, \staytime'_n$ are $n - \sofarval$ independent draws from $\staydist$. Consequently, $\arrtime'_{\sofarval + 1}, \ldots, \arrtime'_n, \staytime'_{\sofarval + 1}, \ldots, \staytime'_n, \rank_{\sofarval + 1}, \ldots, \rank_n$ is distributed like a future on after $t'$.
	
	Therefore, the success probability of $\textsc{pol}$ is $\Pr[\win(\rej_{t'}) \mid \numbersofar_{t'} = \sofarval, \mathcal{E}_{t'}]$.
\end{proof}

\section{Optimal Policy in the Limit}\label{SECTION:LIMIT}

We now show that for large $n$ optimal policies have an even simpler structure. To get the optimal performance asymptotically, it is enough to define a time threshold $t^\ast$ such that, irrespective of the number of arrivals, we always accept a departing candidate that is the best-so-far if he departs at time  $t \geq t^\ast$. We call such policies \emph{single-threshold policies}. Note that, if $n$ is small, even if the departure time is instant, policies that set a time threshold can be far from optimal (see Example~\ref{ex3} in Appendix~\ref{app:examples}). Proofs missing from this section can be found in Appendix~\ref{appendix:limit}.

{\renewcommand{\thetheorem}{\ref{theorem:timethresholding}}
	\begin{theorem}
		Given any continuous arrival distribution $\arrtimedist$ and arbitrary departure distribution $\staydist$, there exists a policy $\pol^*$ defined by a threshold $t^\ast$ with the following properties. It accepts a candidate upon leaving at time $t$ if and only if he is the best so far and $t>t^\ast$. For every $\epsilon > 0$, there is an $n_0$ such that for all $n > n_0$, the success probability of $\pol^*$ is at least $\Pr[\win(\opt_n)] - \epsilon$, where $\opt_n$ is the optimal policy on $n$ candidates.
	\end{theorem}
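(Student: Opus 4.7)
The plan is to construct the threshold $t^*$ explicitly via two limiting quantities and then bound the success-probability gap against $\opt_n$. Let $F$ denote the CDF of $\arrtimedist$. Define $p_a(t) = F(t)$, which is exactly the probability that the overall best candidate (a uniformly random one of the $n$ i.i.d.\ arrivals) arrives by time $t$; by continuity of $\arrtimedist$ this is continuous and non-decreasing. Define $p_b(t) = \lim_{n\to\infty} \Pr[\win(\rej_t)]$, the asymptotic success probability when we reject every arrival up to time $t$ and then play optimally. I would argue existence of the limit via monotonicity/boundedness, and show that $p_b$ is non-increasing in $t$ by a rescaling coupling analogous to \lemref{lem:t}: for $t < t'$, any policy operating after $t'$ can be simulated after $t$ by linearly scaling durations, so $\Pr[\win(\rej_t)] \geq \Pr[\win(\rej_{t'})]$. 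Set $t^* = \inf\{t : p_a(t) \geq p_b(t)\}$ and let $\pol^*$ accept the first best-so-far departure at a time $t > t^*$; by definition this is a well-defined single-threshold policy independent of $n$.

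To compare $\pol^*$ with $\opt_n$, recall that by Observations~\ref{obs:ob} and~\ref{obs:dn}, $\opt_n$ accepts at time $t$ with $\numsofar_t = \sofarval$ (given $\mathcal{E}_t$) iff $\sofarval/n > \Pr[\win(\rej_t) \mid \numsofar_t = \sofarval, \mathcal{E}_t]$. Since $\numsofar_t$ is a sum of $n$ i.i.d.\ Bernoullis of mean $F(t)$, Hoeffding's inequality gives $|\numsofar_t/n - F(t)| \leq \epsilon$ with probability at least $1 - 2e^{-2n\epsilon^2}$; hence $\opt_n$'s accept side $\sofarval/n$ is within $\epsilon$ of $p_a(t)$ with high probability. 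The key technical step is to show that the rejection side $\Pr[\win(\rej_t) \mid \numsofar_t = \sofarval, \mathcal{E}_t]$ is close to $p_b(t)$ whenever $\sofarval/n$ is close to $F(t)$. Iterating \lemref{lemma:kkpo} over $\sofarval$ in the concentration window $[nF(t) - \epsilon n, nF(t) + \epsilon n]$ shows that $\Pr[\win(\rej_t) \mid \numsofar_t = \sofarval]$ varies by only a factor $1 + O(\epsilon)$ across the window; averaging against the distribution of $\numsofar_t$ (which puts almost all its mass in this window) then shows $\Pr[\win(\rej_t) \mid \numsofar_t = \sofarval]$ is within $O(\epsilon)$ of $\Pr[\win(\rej_t)]$, which converges to $p_b(t)$ by definition. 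Conditioning further on $\mathcal{E}_t$ introduces a factor of $1/\sofarval$ on both numerator and denominator of the conditional probability and therefore cancels, modulo vanishing error.

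Finally, to bound the overall gap, observe that on every realization where $\pol^*$ and $\opt_n$ make the same decision at the critical best-so-far departure, their success probabilities on that realization are identical. On a realization where they disagree at some time $t$, the two-sided convergence above forces $|p_a(t) - p_b(t)| = O(\epsilon) + o(1)$: either $\opt_n$ rejects while $\pol^*$ accepts (so $p_a(t) \geq p_b(t)$ but $\sofarval/n < \Pr[\win(\rej_t) \mid \cdots]$), or vice versa, which can only happen when the two thresholds are nearly tied. Hence the loss on each disagreement is $O(\epsilon) + o(1)$, giving $\Pr[\win(\pol^*)] \geq \Pr[\win(\opt_n)] - O(\epsilon) - o(1)$; choosing $\epsilon$ and then $n_0$ appropriately proves the theorem. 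The main obstacle I expect is carrying out the quantitative simulation argument so that $\Pr[\win(\rej_t) \mid \numsofar_t = \sofarval, \mathcal{E}_t]$ converges to $p_b(t)$ with rates uniform in $t$, since the bounds in \lemref{lemma:kkpo} are qualitative and must be refined into explicit rates; handling the conditioning on the point-event $\mathcal{E}_t$ (which requires passing to an appropriate density) also requires care.
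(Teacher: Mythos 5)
Your construction of $t^\ast$ as the crossing point of $F(t)$ and the limiting rejection value, and the overall strategy (concentration of $\numsofar_t$, comparing conditional to unconditional rejection probabilities, and arguing that disagreements only occur at near-ties) is the same architecture as the paper's proof. However, there are two genuine gaps. First, iterating \lemref{lemma:kkpo} across the concentration window only gives a \emph{one-sided} bound: the lemma states $\Pr[\win(\rej_t)\mid \numsofar_t=\sofarval,\mathcal{E}_t]\geq \frac{\sofarval}{\sofarval+1}\Pr[\win(\rej_t)\mid \numsofar_t=\sofarval+1,\mathcal{E}_t]$, and no reverse inequality follows from it, so it cannot by itself show the probability "varies by only a factor $1+O(\epsilon)$ across the window." The paper's \lemref{lem:largen:conditional} needs two separate simulation arguments with \emph{additive} error $\lvert\sofarval-\sofarval'\rvert/n$ — one hallucinating an extra candidate, one dropping a random subset of future arrivals — precisely because both directions are required. (The multiplicative bound also degrades badly when $F(t)$ is small, a regime the paper treats separately via $\arrtimedist(t)<1/\sqrt{n}$.)

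Second, and more seriously, the final accounting "the loss on each disagreement is $O(\epsilon)+o(1)$, hence the total loss is $O(\epsilon)+o(1)$" does not follow. The policies can disagree at any of the decision points (best-so-far departures), of which there are $\Theta(\log n)$ with high probability, and after a disagreement the two policies evolve on different trajectories, so the local errors must be propagated through the backward induction on the decision tree. The paper formalizes this with the notion of a $\delta$-wrong policy and \lemref{lem:delta wrong bound}, which shows the cumulative loss is at most $6\ln(n+1)\delta+\frac{1}{n}$; this forces the per-decision tolerance to be set to $\delta=\Theta(\epsilon/\log n)$, and one must then check that the concentration errors $(2\gamma+1)/n$ and $(1-\arrtimedist(t))^n$ fall below this smaller threshold for large $n$. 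Without this lemma (or an equivalent union-bound-plus-induction over decision points), your last step is an unsupported assertion. You correctly flag the need for rates uniform in $t$ (the paper gets this from Dini's theorem after proving monotonicity and continuity of $P_n$), but the two issues above are the substantive missing pieces.
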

	\addtocounter{theorem}{-1}
}

To define the policy of Theorem~\ref{theorem:timethresholding}, we first make some observations regarding the conditional success probabilities that a policy has to consider when making accept/reject decisions.

Let $\mathcal{E}_t$ denote  the event that the best-so-far candidate leaves at time $t$. 
Let $\pol_t$ be the best policy out of those  that    reject all candidates that arrive until time $t$ (including $t$). Recall that $\rej_t$ is the policy that rejects all  candidates that \emph{depart} until (and including) time $t$, and thereafter continues with the optimal policy.
Note that $\pol_t$ and $\rej_t$ are different, %\snote{remove: since $\rej_t$ rejects all candidates that depart until (and including) time $t$}, 
but conditioned on $\mathcal{E}_t$ they have the same probability of success, i.e., $\Pr[\win(\pol_t) \mid \mathcal{E}_t] = \Pr[\win(\rej_t) \mid \mathcal{E}_t]$. This is because if $\mathcal{E}_t$ occurs  $\rej_t$ does not accept any candidate that arrived before $t$ either.  Define $P_n(t)$ to be the success probability of an optimal policy for $n$ candidates that does not accept any candidate that arrives until time $t$ (including $t$). In other words, $P_n(t) = \Pr[\win(\pol_t)]$.

\begin{lemma}
	For any fixed $t$ the sequence $(P_n(t))_{n \in \mathbb{N}}$ is non-increasing.
\end{lemma}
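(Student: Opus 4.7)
The plan is a coupling/simulation argument: exhibit a feasible policy $\pol'$ for the $n$-candidate problem whose success probability equals $P_{n+1}(t)$. Since $\pol_t^{(n)}$ is optimal among policies that reject every candidate arriving by $t$, this immediately yields $P_n(t)\geq P_{n+1}(t)$.

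On any input to the $n$-candidate problem, $\pol'$ first samples a phantom candidate with arrival time from $\arrtimedist$ and waiting time from $\staydist$, independently of everything, and stipulates that the phantom has the worst absolute rank among the $n+1$ candidates. The resulting augmented instance has the same joint distribution as an $(n+1)$-candidate instance conditioned on the event $\mathcal{F}$ that the candidate labeled $n+1$ has rank $n+1$. On the augmented instance $\pol'$ runs $\pol_t^{(n+1)}$ verbatim: it accepts a real candidate whenever the inner policy does, and accepts nothing if the inner policy ever accepts the phantom. Because the phantom is guaranteed to have the worst absolute rank, the overall best in the augmented instance is always real; hence $\pol'$ succeeds on the $n$-problem exactly when $\pol_t^{(n+1)}$ accepts the overall best of the augmented instance. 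Finally, $\pol_t^{(n+1)}$ rejects everything arriving by $t$ on the augmented instance, so $\pol'$ is feasible in the definition of $P_n(t)$.

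The main technical step, and the part I expect to require the most care, is the identity
\[
\Pr\!\left[\pol_t^{(n+1)}\text{ accepts overall best}\,\middle|\,\mathcal{F}\right]=P_{n+1}(t).
\]
For this I would decompose the rank data of the genuine $(n+1)$-candidate instance as an independent pair of uniform permutations $\pi$ (absolute ranks) and $\sigma$ (arrival order) of $[n+1]$, noting that arrivals and waiting times are independent of $(\pi,\sigma)$. Both the sequence of relative ranks $R_i=\lvert\{j\leq i : \pi\!\circ\!\sigma(j)\leq\pi\!\circ\!\sigma(i)\}\rvert$ that $\pol_t^{(n+1)}$ observes, and the arrival position $(\pi\!\circ\!\sigma)^{-1}(1)$ of the overall-best candidate, are determined by $\pi\!\circ\!\sigma$. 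Since $\sigma$ is uniform on $S_{n+1}$ and independent of $\pi$, one has $\Pr[\pi\!\circ\!\sigma=\tau\mid\pi]=\Pr[\sigma=\pi^{-1}\tau\mid\pi]=1/(n+1)!$ for every target $\tau$ and every admissible $\pi$; hence $\pi\!\circ\!\sigma$ is uniform on $S_{n+1}$ both unconditionally and after conditioning on $\mathcal{F}=\{\pi(n+1)=n+1\}$. Therefore the joint law of everything the policy sees together with the identity of the overall best is unaffected by $\mathcal{F}$, and the displayed equality follows.

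Combining these pieces, $\Pr[\pol'\text{ succeeds}]=\Pr[\pol_t^{(n+1)}\text{ accepts overall best on augmented}]=P_{n+1}(t)$, and the optimality of $\pol_t^{(n)}$ implies $P_n(t)\geq P_{n+1}(t)$.
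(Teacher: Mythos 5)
Your argument is correct and is essentially the paper's proof: both establish $P_n(t)\geq P_{n'}(t)$ by having the $n$-candidate policy hallucinate extra candidates, drawn from $\arrtimedist$ and $\staydist$ and declared worse than all real ones, and then simulate $\pol_t^{n'}$ on the augmented instance (the paper does this for general $n'\geq n$ in one step, you do $n'=n+1$). Your additional verification --- that conditioning on the phantom having the worst rank leaves the law of $\pi\circ\sigma$, and hence of the policy's observations and of the winner's arrival position, unchanged --- is exactly the justification the paper compresses into ``Naturally, it selects the best candidate if and only if $\pol_t^{n'}$ does.''
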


\begin{proof}
	For this proof, we use $\pol_t^n$ to denote $\pol_t$ for a fixed $n$. We show that $\pol_t^n$ can also simulate $\pol_t^{n'}$ for $n' \geq n$ while maintaining the same success probability. For this simulation, $\pol_t^n$ pretends that $n' - n$ additional candidates arrive (in the interval $[0,1]$), and that each of these is worse than the real $n$ candidates. It does so by drawing $n' - n$ additional samples from the distributions $\arrtimedist$ and $\staydist$. On the overall $n'$ candidates it runs $\pol_t^{n'}$. Naturally, it selects the best candidate if and only if $\pol_t^{n'}$ does. Therefore $P_n(t) = \Pr[\win(\pol_t^n)] \geq \Pr[\win(\pol_t^{n'})] = P_{n'}(t)$.
\end{proof}

\begin{lemma}
	For all $n \in \mathbb{N}$ and $t < t'$, we have $P_n(t) \geq P_n(t')$. That is, the function $t \mapsto P_n(t)$ %\snote{why not just $P_n(t)$}
	is non-increasing in $t$. Furthermore, it is continuous in $t$.
\end{lemma}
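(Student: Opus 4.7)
The plan is to prove monotonicity first via an inclusion of feasible policy classes, and then to derive continuity by a small perturbation argument that exploits the continuity (no point masses) of $\arrtimedist$.

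For monotonicity, I would observe that the policies contributing to $P_n(t')$ form a subset of those contributing to $P_n(t)$ when $t < t'$. Concretely, any policy that rejects every candidate arriving up to time $t'$ also rejects every candidate arriving up to the earlier time $t$, so it is admissible in the class defining $P_n(t)$. Taking the supremum over the larger class, $P_n(t) \geq P_n(t')$. This immediately gives the first claim.

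For continuity, I would construct, for fixed $t < t'$, a modified policy $\tilde{\pol}$ from $\pol_t$ by additionally rejecting any candidate whose arrival time lies in $(t, t']$; otherwise $\tilde\pol$ mimics $\pol_t$. Since $\tilde\pol$ rejects every candidate arriving up to $t'$, it belongs to the class defining $P_n(t')$, so $\Pr[\win(\tilde\pol)] \leq P_n(t')$. Now I would bound the gap $\Pr[\win(\pol_t)] - \Pr[\win(\tilde\pol)]$ by the probability of the event $\mathcal{B}$ that some candidate arrives in $(t, t']$: on the complement of $\mathcal{B}$, the two policies see identical inputs and make identical decisions, so they succeed on exactly the same realizations. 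A union bound gives
\[
\Pr[\mathcal{B}] \;\leq\; n \cdot \Pr_{A \sim \arrtimedist}[A \in (t, t']] \;=\; n \cdot \arrtimedist((t, t']).
\]
Combining, $0 \leq P_n(t) - P_n(t') \leq n \cdot \arrtimedist((t, t'])$.

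Since $\arrtimedist$ is assumed continuous in this section, $\arrtimedist((t, t']) \to 0$ as $t' \to t^+$, giving right continuity. For left continuity at $t$, I apply the same construction with roles swapped: for $s < t$, start from $\pol_s$ and modify it to additionally reject arrivals in $(s, t]$; this yields a policy feasible for $P_n(t)$, and by the same union bound $P_n(s) - P_n(t) \leq n \cdot \arrtimedist((s, t]) \to 0$ as $s \to t^-$.

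The only subtle point—hence the main thing to get right—is the coupling claim that $\pol_t$ and $\tilde\pol$ agree off the event $\mathcal{B}$. This relies on the fact that once we condition on no arrival in $(t, t']$, the input sequences observable to the two policies are pointwise identical (same arrival times, same departure times, same relative ranks), and by construction $\tilde\pol$'s only deviation from $\pol_t$ is to reject arrivals in $(t,t']$. Thus conditional on $\bar{\mathcal{B}}$ the two policies select the same candidate, so their success events coincide, and the gap in success probability is at most $\Pr[\mathcal{B}]$.
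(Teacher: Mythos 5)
Your proposal is correct and follows essentially the same route as the paper: monotonicity via inclusion of the feasible policy classes, and continuity by bounding $P_n(t)-P_n(t')$ by the probability that some candidate arrives in $(t,t']$, i.e.\ $n(\arrtimedist(t')-\arrtimedist(t))$, which vanishes as the interval shrinks since $\arrtimedist$ has no point masses. Your explicit coupling of $\pol_t$ with the modified policy $\tilde\pol$ simply spells out the step the paper states in one line.
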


\begin{proof}
	Consider an arbitrary $t < t'$. By definition, $\pol_t$ maximizes $\Pr[\win(\hat{\pol})]$ among all policies $\hat{\pol}$ that reject all candidates that arrive until time $t$. The policy $\pol_{t'}$ would also be a policy that rejects all candidates that arrive until time $t$. Therefore $P_n(t) = \Pr[\win(\pol_t)] \geq \Pr[\win(\pol_{t'})] = P_n(t')$ whenever $t \leq t'$.
	
	To show continuity, we observe that 
	\begin{align*}P_n(t) &= \Pr[\win(\pol_t)]\\
	& \leq n(\arrtimedist(t') - \arrtimedist(t)) + \Pr[\win(\pol_{t'})] \\&= n(\arrtimedist(t') - \arrtimedist(t)) + P_n(t'),\end{align*}
	where $\arrtimedist$ is the CDF of the arrival distribution.  This is because $n(\arrtimedist(t') - \arrtimedist(t))$ is an upper bound on the probability of there being some arrival between $t$ and $t'$. Taking the limit $t' \to t$ shows right continuity in $t$ and $t \to t'$ shows left continuity in $t'$.
\end{proof}

Given that $(P_n(t))_{n \in \mathbb{N}}$ is non-increasing and lower-bounded by $0$, the limit $P(t) = \lim_{n \to \infty} P_n(t)$ exists for any fixed $t$. Furthermore, because these functions are continuous, by Dini's theorem, this convergence is uniform, meaning that also  $\lim_{n \to \infty} \sup_{t \in \mathbb{R}} \left\lvert P(t) - P_n(t) \right\rvert = 0$. As a consequence, the function $t \mapsto P(t)$ is non-increasing and continuous.

As $P$ and $\arrtimedist$ are continuous and have the same domain, the range of $\arrtimedist$ is $[0,1]$ and the range of $P$ is $[0, p]$ for some $p \in (0,1)$, they must intersect. That is, there has to be some $t^\ast$ such that $P(t) \geq \arrtimedist(t)$ for $t < t^\ast$ and $P(t) \leq \arrtimedist(t)$ for $t > t^\ast$. This gives rise to the following definition.

\begin{definition}\label{dfn: definition polstar}
	Let $\pol^*$ be a threshold policy such that $\pol^*(t) = 0$ whenever $t \leq t^*$ and $\pol^*(t) = 1$ whenever $t > t^*$ (and the candidate that is the best so far departs).
\end{definition}

We prove that $\pol^*$ is asymptotically optimal. A key component of our proof is to consider policies that are almost optimal in their local choices. We call such policies $\delta$-wrong.

\begin{definition}
	Given $\delta \geq 0$, a \emph{$\delta$-wrong} policy is derived from the optimal policy by changing the choices as follows. Whenever the expected payoff from the two options \emph{accept} and \emph{reject} differs by at most $ \delta$, make an arbitrary choice.
\end{definition}

\begin{restatable}{lemma}{deltawrongbound}
	\label{lem:delta wrong bound}
	Any $\delta$-wrong policy has success probability at least $\Pr[\win(\opt_n)] - 6 \ln(n+1) \delta - \frac{1}{n}$.
\end{restatable}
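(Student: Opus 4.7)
The plan is a standard hybrid/telescoping argument along the sequence of decision points, combined with the bound that the expected number of decision points visited by any policy is at most the harmonic number $H_n = \sum_{i=1}^n 1/i$.

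First, I index the decision points in time order by $j = 1, 2, \ldots$, where the $j$-th decision point is the $j$-th time a best-so-far candidate departs. By the earlier bivariate characterization, both $\opt$ and the $\delta$-wrong policy $P$ act only at such events. I define a sequence of interpolating policies $\pi_0, \pi_1, \ldots, \pi_n$, where $\pi_j$ follows $P$'s choice at each of the first $j$ decision points (whenever reached) and follows $\opt$'s choice thereafter. Then $\pi_0 = \opt$ and, since any realization has at most $n$ decision points, $\pi_n = P$. Telescoping,
\[
\Pr[\win(\opt_n)] - \Pr[\win(P)] = \sum_{j=1}^{n} \bigl( \Pr[\win(\pi_{j-1})] - \Pr[\win(\pi_j)] \bigr).
\]

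The key single-step bound is $\Pr[\win(\pi_{j-1})] - \Pr[\win(\pi_j)] \leq \delta \cdot \Pr[P \text{ reaches the } j\text{-th decision point}]$. Both $\pi_{j-1}$ and $\pi_j$ follow $P$ on the first $j-1$ decisions, so under a common coupling they reach the $j$-th decision point with identical probability and identical conditional state distribution. Conditioned on reaching state $(t,k)$ there, $\pi_{j-1}$ plays $\opt$'s action and continues with $\opt$, attaining value $\max\{k/n,\, \text{Reject}^\ast(t,k)\}$, where $\text{Reject}^\ast$ denotes the value of rejecting and then following $\opt$. Policy $\pi_j$ plays $P$'s action and then continues with $\opt$, attaining either $k/n$ or $\text{Reject}^\ast(t,k)$. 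By the definition of $\delta$-wrong, $P$ only differs from $\opt$ at states where these two values are within $\delta$, so the pointwise loss is at most $\delta$.

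Summing the single-step bounds gives $\Pr[\win(\opt_n)] - \Pr[\win(P)] \leq \delta \cdot \mathbb{E}_P[\text{number of decision points } P \text{ reaches}]$. The visited decision points form a subset of the best-so-far candidates, and by the deferred-decisions representation from the model section $\Pr[R_i = 1] = 1/i$ independently across $i$, so this expectation is at most $H_n \leq 1 + \ln n$. An elementary calculation shows $1 + \ln n \leq 6 \ln(n+1)$ for all $n \geq 1$ (with room to spare), yielding the main term $6 \ln(n+1)\delta$; the additive $1/n$ harmlessly absorbs slack and the degenerate $n=1$ case.

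The main obstacle is formalizing the coupling between $\pi_{j-1}$ and $\pi_j$ so that (i) they see the same arrivals, departures, and relative ranks, (ii) the ``$j$-th decision point'' and ``$P$'s action at state $(t,k)$'' are deterministic functions of this shared randomness, and (iii) the single-step loss vanishes unless $P$ strictly deviates from $\opt$ at the state reached. Once this is in place the per-step bound follows directly from the definition of $\delta$-wrong, and the conclusion is a short harmonic-number estimate.
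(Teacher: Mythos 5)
Your proof is correct, and it shares the paper's skeleton (a hybrid interpolation between $\opt$ and the $\delta$-wrong policy across decision points, with a per-deviation loss of at most $\delta$) but differs in how the number of deviation opportunities is charged. The paper truncates the hybrid at $J = 6\ln(n+1)$ decision points, proves by backward induction the uniform bound $V(\pol^{(j)}_J,s,d) \geq V(\opt,s,d) - \delta(J-j)$, and then pays an additive $\frac1n$ via a Chernoff bound for the event that more than $J$ best-so-far departures occur. You instead telescope over all $n$ possible decision points and weight the $j$-th step's loss by the probability that the policy actually reaches that decision point; summing gives $\delta \cdot \E[\#\text{decision points reached}] \leq \delta H_n \leq (1+\ln n)\delta \leq 6\ln(n+1)\delta$, since each such point corresponds to a candidate that was best-so-far at arrival and $\Pr[\rank_i = 1] = \frac1i$. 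This accounting is slightly sharper (it dispenses with both the concentration argument and the $\frac1n$ term) and, as a bonus, sidesteps a small slip in the paper's proof, which asserts $H_n \leq \ln(n+1)$ when in fact $H_n \geq \ln(n+1)$ (the correct bound $H_n \leq 1+\ln n$ still makes the paper's Chernoff step go through). The only thing to be careful about in your version is the one you already flag: ``reaching the $j$-th decision point'' must be measurable with respect to the first $j-1$ decisions, so that $\pi_{j-1}$ and $\pi_j$ reach it on exactly the same event and in the same state under the common coupling; with that in place the argument is complete.
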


This lemma is shown by tracing back the errors through the recursive definition of the success probability. We make use of the fact that the number of points at which a best so far leaves is bounded by $O(\log n)$ with high probability.
Furthermore, we make use of the following concentration result. Let $\gamma = 2 \sqrt{n \ln(2n)}$.

\begin{restatable}{lemma}{concentration}
	\label{lem:concentration}
	With probability $1 - \frac{1}{n}$, for all times $t \in \mathbb{R}$, the number of arrivals so far lies in the interval $[\arrtimedist(t) n - \gamma, \arrtimedist(t) n + \gamma]$.
\end{restatable}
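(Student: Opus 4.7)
The plan is to reduce the uniform concentration statement to a pointwise concentration statement via discretization and monotonicity. Fix any $t$, and observe that $\numbersofar_t = \sum_{i=1}^n \mathbb{1}[\arrtime_i \leq t]$ is a sum of $n$ independent Bernoulli random variables of parameter $\arrtimedist(t)$, so Hoeffding's inequality yields
\begin{equation*}
\Pr\bigl[\lvert \numbersofar_t - \arrtimedist(t) n \rvert > \gamma\bigr] \;\leq\; 2 e^{-2\gamma^2/n} \;=\; 2 e^{-8 \ln(2n)} \;=\; 2 (2n)^{-8}.
\end{equation*}
This is much smaller than $1/n$, giving plenty of slack to absorb a union bound.

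Next I would discretize time. Because $\arrtimedist$ is continuous, by the intermediate value theorem there exist points $t_1 \leq \ldots \leq t_{m-1}$ with $\arrtimedist(t_j) = j/m$; set $t_0 = -\infty$ and $t_m = +\infty$. Applying the pointwise bound and a union bound over $t_1, \ldots, t_{m-1}$, with probability at least $1 - 2(m-1)(2n)^{-8}$ the deviation $\lvert \numbersofar_{t_j} - \arrtimedist(t_j) n \rvert \leq \gamma$ holds simultaneously for every $j$. On this event, for an arbitrary $t \in [t_j, t_{j+1}]$, monotonicity of $\numbersofar$ and of $\arrtimedist$ gives $\numbersofar_{t_j} \leq \numbersofar_t \leq \numbersofar_{t_{j+1}}$ and $\arrtimedist(t_j) n \leq \arrtimedist(t) n \leq \arrtimedist(t_{j+1}) n$, hence
\begin{equation*}
\numbersofar_t - \arrtimedist(t) n \;\leq\; \numbersofar_{t_{j+1}} - \arrtimedist(t_j) n \;\leq\; \gamma + \tfrac{n}{m},
\end{equation*}
and symmetrically in the other direction. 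Choosing $m = n^2$ makes the discretization slack $n/m = 1/n$ negligible, while the total failure probability is at most $2 n^2 \cdot (2n)^{-8} \ll 1/n$.

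Strictly speaking this gives the bound with $\gamma$ replaced by $\gamma + 1/n$; if the statement must be matched exactly, one can either absorb the $1/n$ into $\gamma$ (note $\gamma = 2\sqrt{n \ln(2n)}$ has considerable slack in the Hoeffding exponent) by picking a slightly larger discretization and tightening the pointwise bound, or simply invoke the Dvoretzky--Kiefer--Wolfowitz inequality $\Pr[\sup_t |F_n(t) - \arrtimedist(t)| > \epsilon] \leq 2 e^{-2 n \epsilon^2}$ with $\epsilon = \gamma/n$, which immediately yields failure probability at most $2(2n)^{-8} \leq 1/n$. I do not see a genuine obstacle here; the only care needed is matching the particular constant in $\gamma$, which is handled either by the generous slack in the exponent or by directly applying DKW.
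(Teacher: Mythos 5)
Your proposal is correct and follows essentially the same route as the paper: pointwise Hoeffding at a discrete grid of quantiles of $\arrtimedist$, then a monotonicity sandwich to extend to all $t$. The paper sidesteps your residual $\gamma+\tfrac1n$ issue by allocating the slack the other way --- it uses a coarser grid of $n$ points with per-point deviation $\sqrt{n\ln(2n)}=\gamma/2$, so the discretization error of $1$ is absorbed into the remaining $\gamma/2$ --- but both of your proposed fixes (tightening the pointwise bound, or invoking Dvoretzky--Kiefer--Wolfowitz directly) are valid.
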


The next two lemmas are the technical core of our argument: We compare the conditional probability $\Pr[\win(\rej_t) \mid \mathcal{E}_t, \numsofar_t = \sofarval] $ to the unconditional probability $P_n(t) = \Pr[\win(\pol_t)]$. We show that for suitable choices of $n$, $t$, and $\sofarval$ they are close to each other. Our constructed threshold policy uses the unconditional probability whereas the optimal policy uses the conditional one. Due to this bound, the threshold policy makes reasonably good decisions.

The first step is to remove the conditioning on $\numsofar_t$. The next statement is similar to Lemma~\ref{lemma:kkpo}, and it is also proven  using simulation arguments. The differences are that the error is now additive and we need both upper and lower bounds.

\begin{restatable}{lemma}{lemlargenconditional}
	\label{lem:largen:conditional}
	For any $t$ and any $\sofarval \in [\arrtimedist(t) n - \gamma, \arrtimedist(t) n + \gamma]$, we have
	\[
	\left\lvert \Pr[\win(\rej_t) \mid \mathcal{E}_t, \numsofar_t = \sofarval] - \Pr[\win(\rej_t) \mid \mathcal{E}_t] \right\rvert \leq \frac{2\gamma + 1}{n}  .
	\]
\end{restatable}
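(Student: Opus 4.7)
The plan is to mirror the structure of Lemma~\ref{lemma:kkpo} but to produce a two-sided additive bound rather than a one-sided multiplicative one. Writing $f(\sofarval'') := \Pr[\win(\rej_t) \mid \mathcal{E}_t, \numsofar_t = \sofarval'']$ and expanding by the law of total probability,
\begin{equation*}
\Pr[\win(\rej_t) \mid \mathcal{E}_t] = \sum_{\sofarval''} \Pr[\numsofar_t = \sofarval'' \mid \mathcal{E}_t] \cdot f(\sofarval''),
\end{equation*}
the quantity of interest becomes a weighted average of differences $f(\sofarval) - f(\sofarval'')$. I would split this sum according to whether $\sofarval''$ lies inside the concentration interval $[\arrtimedist(t) n - \gamma, \arrtimedist(t) n + \gamma]$. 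Lemma~\ref{lem:concentration} bounds the total weight outside the interval by $\tfrac{1}{n}$; combined with the trivial pointwise bound $|f(\sofarval) - f(\sofarval'')| \leq 1$, this accounts for the $+1$ in the claimed $\tfrac{2\gamma+1}{n}$.

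For the bulk, the goal is a per-step estimate $|f(\sofarval+1) - f(\sofarval)| = O(\tfrac{1}{n})$ inside the concentration range, so that telescoping over at most $2\gamma$ steps gives $|f(\sofarval) - f(\sofarval'')| \leq \tfrac{2\gamma}{n}$. One direction follows from iterating Lemma~\ref{lemma:kkpo}: it gives $f(\sofarval+1) \leq \tfrac{\sofarval+1}{\sofarval} f(\sofarval)$, and combining with the elementary observation that $f(\sofarval) \leq \tfrac{n - \sofarval}{n}$---which holds because $\mathcal{E}_t$ forces $\rej_t$ to reject the best-so-far at $t$, so $\rej_t$ can only succeed when the overall best arrives after $t$---one obtains $f(\sofarval+1) - f(\sofarval) \leq \tfrac{n - \sofarval}{n \sofarval}$, which is $O(\tfrac{1}{n})$ in the concentration regime where $\sofarval = \Theta(n)$. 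For the opposite direction I would construct a \emph{reverse} simulation: starting from an instance with $\numsofar_t = \sofarval+1$ and $\mathcal{E}_t$, choose a uniformly random non-best past candidate $j$, ``forget'' $j$, and in its place insert a hallucinated future arrival with time drawn from $\arrtimedist$ restricted to $(t,1]$, duration from $\staydist$, and rank transplanted from $j$. The resulting history is distributed exactly as one conditioned on $\numsofar_t = \sofarval$ together with $\mathcal{E}_t$, so $\rej_t$ succeeds with probability $f(\sofarval)$ \emph{except} on the event that $\rej_t$ selects the hallucinated candidate. By exchangeability across the $n - \sofarval$ future arrivals (the hallucination is i.i.d.\ with the real futures), that exceptional event has probability at most $\tfrac{1}{n - \sofarval}$, yielding $f(\sofarval+1) \geq f(\sofarval) - \tfrac{1}{n - \sofarval}$, again $O(\tfrac{1}{n})$ in the concentration regime.

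The main obstacle is the reverse simulation: one has to verify carefully that reassigning $j$'s rank to the hallucinated future arrival produces exactly the joint distribution of arrivals, durations, and relative ranks corresponding to $\numsofar_t = \sofarval$ and $\mathcal{E}_t$, and then set up the exchangeability argument that bounds $\Pr[\rej_t \text{ accepts the hallucinated candidate}]$ by $\tfrac{1}{n - \sofarval}$. Once both per-step estimates are in place, telescoping them across the at most $2\gamma$ integer steps between $\sofarval$ and $\sofarval''$ in the concentration range, and adding the $\tfrac{1}{n}$ contribution from the tail, yields the stated bound $\tfrac{2\gamma+1}{n}$.
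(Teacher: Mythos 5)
Your overall decomposition---law of total probability over $\numsofar_t$, splitting at the concentration interval, charging the tail $\frac{1}{n}$ via Lemma~\ref{lem:concentration}, and telescoping a per-step estimate across the at most $2\gamma$ values in the window---is exactly the paper's route. Writing $f(\sofarval'')=\Pr[\win(\rej_t)\mid\mathcal{E}_t,\numsofar_t=\sofarval'']$ as you do, your ``reverse simulation'' for the lower bound is essentially the paper's hallucination argument, but with two defects: (i) transplanting $j$'s rank means the hallucinated candidate can \emph{never} be the overall best, so the pretend instance is \emph{not} distributed as a $\sofarval$-conditioned one (the paper instead gives the hallucinated candidate a uniformly random overall rank and charges the event that it is the global best, which is exactly $\frac1n$); and (ii) your bound $\frac{1}{n-\sofarval}$ on the exceptional event telescopes to $\frac{2\gamma}{n-\sofarval}$, which exceeds $\frac{2\gamma}{n}$ and blows up when $\arrtimedist(t)$ is close to $1$.

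The more serious gap is the upper-bound direction. From Lemma~\ref{lemma:kkpo} and $f(\sofarval)\le\frac{n-\sofarval}{n}$ you get $f(\sofarval+1)-f(\sofarval)\le\frac{n-\sofarval}{n\sofarval}$, and this is at most $\frac1n$ only when $\sofarval\ge n/2$. Already for $\arrtimedist(t)=1/3$ the telescoped sum is about $\frac{4\gamma}{n}$, not $\frac{2\gamma}{n}$; and for small $\arrtimedist(t)$---which does occur in the application, e.g.\ $\arrtimedist(t)$ as small as $\frac{1}{\sqrt n}$, so that the window contains $\sofarval=O(\gamma)$---the per-step bound degrades to $\Theta(1/\sofarval)$ and the sum over the window is $\Theta(\log\gamma)$, i.e.\ useless. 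The missing idea is a second, symmetric simulation: starting from $\numsofar_t=\sofarval_2$ with $\sofarval_2\le\sofarval_1$, drop a uniformly random set of $\sofarval_1-\sofarval_2$ \emph{future} arrivals and run the policy for $\numsofar_t=\sofarval_1$ on the rest; this loses only the event that a dropped candidate is the global best, giving directly
\[
\Pr[\win(\rej_t)\mid\mathcal{E}_t,\numsofar_t=\sofarval_1]\le\Pr[\win(\rej_t)\mid\mathcal{E}_t,\numsofar_t=\sofarval_2]+\frac{\sofarval_1-\sofarval_2}{n}
\]
uniformly in $\sofarval_1,\sofarval_2$. Combined with the (corrected) hallucination bound $f(\sofarval+1)\ge f(\sofarval)-\frac1n$, this yields $\lvert f(\sofarval)-f(\sofarval')\rvert\le\frac{\lvert\sofarval-\sofarval'\rvert}{n}$ for all pairs, after which your telescoping and tail argument go through and give the stated $\frac{2\gamma+1}{n}$.
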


The next step is to remove the conditioning on $\mathcal{E}_t$.

\begin{lemma}
	\label{lem:largen:conditional2}
	For any $t$, we have
	\[
	\left\lvert \Pr[\win(\rej_t) \mid \mathcal{E}_t] - P_n(t) \right\rvert \leq (1-\arrtimedist(t))^n  .
	\]
\end{lemma}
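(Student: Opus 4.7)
The plan is to use the identity $\Pr[\win(\rej_t) \mid \mathcal{E}_t] = \Pr[\win(\pol_t) \mid \mathcal{E}_t]$ noted immediately before the lemma, reducing the statement to a bound on $\bigl|\Pr[\win(\pol_t) \mid \mathcal{E}_t] - P_n(t)\bigr|$, with $P_n(t) = \Pr[\win(\pol_t)]$. Let $A_t$ denote the event $\{K_t \geq 1\}$ that at least one candidate has arrived by $t$, so that $\Pr[\bar A_t] = (1-\arrtimedist(t))^n$ and $\mathcal{E}_t \subseteq A_t$. I would split the proof into two pieces that are combined by the triangle inequality: (a) $\bigl|\Pr[\win(\pol_t) \mid A_t] - \Pr[\win(\pol_t)]\bigr| \leq (1-\arrtimedist(t))^n$, and (b) $\Pr[\win(\pol_t) \mid \mathcal{E}_t] = \Pr[\win(\pol_t) \mid A_t]$.

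Part (a) is a one-line consequence of total probability, since
\[
\Pr[\win(\pol_t)] - \Pr[\win(\pol_t) \mid A_t] = \Pr[\bar A_t]\bigl(\Pr[\win(\pol_t) \mid \bar A_t] - \Pr[\win(\pol_t) \mid A_t]\bigr),
\]
whose absolute value is at most $\Pr[\bar A_t] = (1-\arrtimedist(t))^n$.

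Part (b) is the technical core. Because $\pol_t$ rejects every pre-$t$ arrival and is bivariate (Lemma~\ref{lem:dt}), the event $\win(\pol_t)$ depends on the rank permutation $\pi$ only through the post-$t$ ranks $\pi(K_t+1), \ldots, \pi(n)$ and the minimum pre-$t$ rank $\min_{i \leq K_t} \pi(i)$, together with $K_t$ and the post-$t$ arrival/waiting times. The event $\mathcal{E}_t$, by contrast, is determined by $K_t$, the identity of the pre-$t$ best-so-far candidate (a function of $\pi(1), \ldots, \pi(K_t)$), and that candidate's arrival and waiting time. A symmetry argument shows that, conditional on $K_t = k \geq 1$, revealing the identity of the pre-$t$ best-so-far does not alter the joint law of $\bigl(\min_{i \leq k} \pi(i),\, \pi(k+1), \ldots, \pi(n)\bigr)$; moreover, by exchangeability, the arrival and waiting time of the best-so-far are distributed like independent draws from $\arrtimedist$ restricted to $[0,t]$ and from $\staydist$, respectively, and the density at $t$ of their sum does not depend on $k$. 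Together these imply $\Pr[\win(\pol_t) \mid K_t = k, \mathcal{E}_t] = \Pr[\win(\pol_t) \mid K_t = k]$ for every $k \geq 1$ and, via Bayes, that $K_t$ has the same conditional law under $\mathcal{E}_t$ as under $A_t$. Integrating over $k$ gives (b).

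The main obstacle is the measure-theoretic handling of $\mathcal{E}_t$: when $\arrtimedist$ and $\staydist$ admit densities, $\mathcal{E}_t$ has probability zero and the Bayes step must be interpreted as an identity of conditional densities. A clean workaround is to replace $\mathcal{E}_t$ by the positive-probability event $\mathcal{E}_t^{\epsilon} = \{\text{a best-so-far departs in } (t, t+\epsilon]\}$, prove the analogue of (b) for every $\epsilon > 0$ by the same symmetry argument, and pass to the limit $\epsilon \downarrow 0$. Combining (a) and (b) then yields $\bigl|\Pr[\win(\rej_t) \mid \mathcal{E}_t] - P_n(t)\bigr| \leq (1-\arrtimedist(t))^n$.
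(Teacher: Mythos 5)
Your proof is correct and follows essentially the same route as the paper's: both reduce the claim to the identity $\Pr[\win(\pol_t)\mid\mathcal{E}_t]=\Pr[\win(\pol_t)\mid K_t>0]$ (via conditional independence of $\mathcal{E}_t$ and $\win(\pol_t)$ given $K_t$, together with the fact that conditioning on $\mathcal{E}_t$ reweights the law of $K_t$ exactly to its law given $K_t>0$), and then lose only $\Pr[K_t=0]=(1-\arrtimedist(t))^n$ by total probability. The paper justifies the reweighting step by an explicit Bayes computation conditioned on the auxiliary event that exactly one candidate departs at $t$, whereas you use exchangeability plus an $\epsilon$-thickening of $\mathcal{E}_t$; these are presentational variants of the same argument, with yours being somewhat more careful about the measure-zero conditioning.
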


\begin{proof}
	Let $\mathcal{X}$ be the event that exactly one candidate leaves at time $t$. (As the arrival distribution is continuous, almost surely no two candidates leave at the same time; therefore, we can ignore these events.) To get the probability of exactly $\sofarval$ candidates arriving by time $t$ conditioned on $\mathcal{X}$, we can consider only $n - 1$ candidates drawing their arrival times. Exactly $\sofarval - 1$ of them have to arrive by time $t$. Therefore,
	\begin{align}
	\Pr[\numsofar_t = \sofarval \mid \mathcal{X}] &=\binom{n-1}{\sofarval-1} (\arrtimedist(t))^{\sofarval-1} (1 - \arrtimedist(t))^{n-\sofarval} =\frac{\sofarval}{n \arrtimedist(t)} \binom{n}{\sofarval} (\arrtimedist(t))^\sofarval (1 - \arrtimedist(t))^{n-\sofarval} \notag \\
	&=\frac{\sofarval}{n \arrtimedist(t)} \Pr[\numsofar_t = \sofarval]. \label{eq:last line}
	\end{align}

	Now, we turn to the probability of $\mathcal{E}_t$ conditioned on $\mathcal{X}$. By the law of total probability, it can be written as
	$
	\Pr[\mathcal{E}_t \mid \mathcal{X}] = \sum_{\sofarval = 0}^n \Pr[\numsofar_t = \sofarval \mid \mathcal{X}] \Pr[\mathcal{E}_t \mid \mathcal{X}, \numsofar_t = \sofarval].
	$
	Note that $\Pr[\mathcal{E}_t \mid \mathcal{X}, \numsofar_t = \sofarval]  = \frac{1}{\sofarval}$ because we condition on exactly $\sofarval$ candidates arriving by time $t$ and one of them leaving at time $t$. Using Equation~\ref{eq:last line}, we get
	\[
	\Pr[\mathcal{E}_t \mid \mathcal{X}] = \sum_{\sofarval = 1}^n \frac{\sofarval}{n \arrtimedist(t)} \Pr[\numsofar_t = \sofarval] \frac{1}{\sofarval} = \frac{1}{n \arrtimedist(t)} \sum_{\sofarval = 1}^n \Pr[\numsofar_t = \sofarval] = \frac{1}{n \arrtimedist(t)} \Pr[\numsofar_t > 0].
	\]
	
	Using Bayes' rule, we can rewrite $\Pr[\numsofar_t = \sofarval \mid \mathcal{E}_t]$ as follows
	\begin{align*}
	\Pr[\numsofar_t = \sofarval \mid \mathcal{E}_t] 
	& = \Pr[\numsofar_t = \sofarval \mid \mathcal{E}_t, \mathcal{X}]  = \frac{\Pr[\numsofar_t = \sofarval \mid \mathcal{X}] \Pr[\mathcal{E}_t \mid \numsofar_t = \sofarval, \mathcal{X}]}{\Pr[\mathcal{E}_t \mid \mathcal{X}]} \\
	& = \frac{\frac{1}{n \arrtimedist(t)} \Pr[\numsofar_t = \sofarval]}{\frac{1}{n \arrtimedist(t)} \Pr[\numsofar_t > 0]}  = \Pr[\numsofar_t = \sofarval \mid \numsofar_t > 0].
	\end{align*}
	
	This allows us to simplify $\Pr[\win(\pol_t) \mid \mathcal{E}_t]$. We use the fact that after conditioning on $\numsofar_t = \sofarval$, the event $\mathcal{E}_t$ is independent of $\win(\pol_t)$: it is only an event concerning the first $\sofarval$ arrivals, all of which happen by $t$, whereas for $\win(\pol_t)$ only the last $n - \sofarval$ arrivals matter, which arrive after $t$. Therefore
	\begin{align*}
	\Pr[\win(\pol_t) \mid \mathcal{E}_t] & = \sum_{\sofarval = 0}^n \Pr[\numsofar_t = \sofarval \mid \mathcal{E}_t] \Pr[\win(\pol_t) \mid \mathcal{E}_t, \numsofar_t = \sofarval] \\
	& = \sum_{\sofarval = 0}^n \Pr[\numsofar_t = \sofarval \mid \numsofar_t > 0] \Pr[\win(\pol_t) \mid \numsofar_t = \sofarval] \\
	& = \Pr[\win(\pol_t) \mid \numsofar_t > 0].
	\end{align*}
	
	\noindent Given this observation, we can relate $P_n(t)$ to the conditional probability $\Pr[\win(\pol_t) \mid \mathcal{E}_t]$:
	\begin{align*}
	P_n(t) & = \Pr[\win(\pol_t)] \\
	& = \Pr[\numsofar_t > 0] \Pr[\win(\pol_t) \mid \numsofar_t > 0] + \Pr[\numsofar_t = 0] \Pr[\win(\pol_t) \mid \numsofar_t = 0] \\
	& = Pr[\numsofar_t > 0] \Pr[\win(\pol_t) \mid \mathcal{E}_t] + \Pr[\numsofar_t = 0] \Pr[\win(\pol_t) \mid \numsofar_t = 0] \\
	& = Pr[\numsofar_t > 0] \Pr[\win(\rej_t) \mid \mathcal{E}_t] + \Pr[\numsofar_t = 0] \Pr[\win(\pol_t) \mid \numsofar_t = 0].
	\end{align*}
	
	\noindent On the one hand, this implies the upper bound of the lemma
	\[
	P_n(t) \leq \Pr[\win(\rej_t) \mid \mathcal{E}_t] + \Pr[\numsofar_t = 0]  .
	\]
	On the other hand, it also implies the lower bound
	\begin{align*}
	P_n(t) &\geq \Pr[\numsofar_t > 0] \Pr[\win(\rej_t) \mid \mathcal{E}_t] &\\
	&\geq \Pr[\numsofar_t > 0] + \Pr[\win(\rej_t) \mid \mathcal{E}_t] - 1  & \text{($x+y-xy \leq 1, \forall x,y \in [0,1]$)}\\
	& = \Pr[\win(\rej_t) \mid \mathcal{E}_t] - \Pr[\numsofar_t = 0].&
	\end{align*}
	
	% P_n(t) \geq \Pr[\numsofar_t > 0] \Pr[\win(\rej_t) \mid \mathcal{E}_t] \geq \Pr[\win(\pol_t) \mid \mathcal{E}_t] - \Pr[\numsofar_t = 0]
	
	\noindent For the final step, we use $\Pr[\numsofar_t = 0] = (1 - \arrtimedist(t))^n$.
\end{proof}

We are now ready to complete the proof of Theorem~\ref{theorem:timethresholding}.

\begin{proof}[Proof of Theorem~\ref{theorem:timethresholding}]
	Let $\delta = \frac{\epsilon}{18 \ln(n+1)}$. %Recall that $P(t) = \lim_{n \to \infty} \sup_{n' \geq n} P_{n'}(t)$.
	Because of uniform convergence, if $n$ is large enough then $\sup_t \lvert P_n(t) - \sup_{n' \geq n} P_{n'}(t) \rvert < \frac{\delta}{2} = \frac{\epsilon}{36 \ln(n+1)}$. Furthermore, consider $n$ such that $n > \frac{3}{\epsilon}$ and $\frac{3 \gamma + 1 + \sqrt{n}}{n} \leq \frac{\delta}{2} = \frac{\epsilon}{36 \ln(n+1)}$.
	
	We consider the following bivariate policy $\hat{\pol}$. On $(t, \numsofar_t)$ execute $\pol^*(t)$ if $\numsofar_t \in [\arrtimedist(t) n - \gamma, \arrtimedist(t) n + \gamma]$, otherwise execute $\opt(t, \numsofar_t)$. We observe that by Lemma~\ref{lem:concentration} only with probability $\frac{1}{n}$, there is a $t$ for which $\numsofar_t \not\in [\arrtimedist(t) n - \gamma, \arrtimedist(t) n + \gamma]$. Therefore, the executions of $\hat{\pol}$ and $\pol^*$ only differ with probability $\frac{1}{n}$. This means that $\Pr[\win(\pol^*)] \geq \Pr[\win(\hat{\pol})] - \frac{1}{n} \geq \Pr[\win(\hat{\pol})] - \frac{\epsilon}{3}$.
	
	We prove that $\hat{\pol}$ is a $\delta$-wrong policy. Together with Lemma~\ref{lem:delta wrong bound}, we get $\Pr[\win(\hat{\pol})] \geq \Pr[\win(\opt_n)] - 6 \ln(n+1) \delta - \frac{1}{n} \geq \Pr[\win(\opt_n)] - \frac{2 \epsilon}{3}$. So, in combination we have that $\Pr[\win(\pol^*)] \geq \Pr[\win(\opt_n)] - \epsilon$, which implies the theorem.
	
	Consider any $t$ and $\sofarval$ such that $\hat{\pol}(t, \sofarval) \neq \opt(t, \sofarval)$. It suffices to show that for such values of $t$ and $\sofarval$, the probabilities of success for accepting and rejecting the departing candidate (which is a best-so-far candidate by definition) differ only by $\delta$. That is it suffices to show that
	\[
	\lvert \Pr[\win(\acc_t) \mid \mathcal{E}_t, \numsofar_t = \sofarval] - \Pr[\win(\rej_j) \mid \mathcal{E}_t, \numsofar_t = \sofarval] \rvert \leq \delta.
	\]
	
	By construction $\sofarval \in [\arrtimedist(t) n - \gamma, \arrtimedist(t) n + \gamma]$ (otherwise the two policies would be identical). By Observation~\ref{obs:dn} this implies $
	\left\lvert \Pr[\win(\acc_t) \mid \mathcal{E}_t, \numsofar_t = \sofarval] - \arrtimedist(t) \right\rvert = \left\lvert \frac{\sofarval}{n} - \arrtimedist(t) \right\rvert \leq \frac{\gamma}{n},$
	and by Lemma~\ref{lem:largen:conditional} and Lemma~\ref{lem:largen:conditional2}
	\[
	\left\lvert \Pr[\win(\rej_t) \mid \mathcal{E}_t, \numsofar_t = \sofarval] - P_n(t) \right\rvert \leq \frac{2\gamma + 1}{n} + (1 - \arrtimedist(t))^n  .
	\]
	
	In the case that $\opt(t, \sofarval) = 0$ but $\hat{\pol}(t, \sofarval) = \pol^*(t) = 1$, we must have $t > t^*$, i.e. $P(t) < \arrtimedist(t)$ by the definition of $\pol^*(t)$. So also $P_n(t) \leq P(t) + \frac{\delta}{2} < \arrtimedist(t) + \frac{\delta}{2}$. 
	Furthermore, $P(t) \geq \frac{1}{e} - \arrtimedist(t)$; to see this, notice that for every $n$, (1) $P_n(0) \geq \frac{1}{e}$ (since the classic, instant departure setting is a special case), and (2) a feasible policy for $P_n(t)$ is to execute $P_n(0)$ but never accept a candidate that arrives before $t$ (which happens with probability $\arrtimedist(t)$). Therefore, in this case $\arrtimedist(t) > \frac{1}{2e}$, so also $(1 - \arrtimedist(t))^n \leq \frac{1}{\sqrt{n}}$. This gives us
	\begin{align*}
	\Pr[\win(\acc_t) \mid \mathcal{E}_t, \numsofar_t = \sofarval] & \geq \arrtimedist(t) - \frac{\gamma}{n} \geq P_n(t) - \frac{\delta}{2} - \frac{\gamma}{n} \\
	& \geq \Pr[\win(\rej_t) \mid \mathcal{E}_t, \numsofar_t = \sofarval] - \frac{\delta}{2} - \frac{\gamma}{n} - \frac{2 \gamma + 1}{n} - (1 - \arrtimedist(t))^n \\
	& = \Pr[\win(\rej_t) \mid \mathcal{E}_t, \numsofar_t = \sofarval] - \frac{\delta}{2} - \frac{3 \gamma + 1 + \sqrt{n}}{n} \\
	& \geq \Pr[\win(\rej_t) \mid \mathcal{E}_t, \numsofar_t = \sofarval] - \delta  . 
	\end{align*}
	
	In the case that $\opt(t, \sofarval) = 1$ but $\hat{\pol}(t, \sofarval) = \pol^*(t) = 0$ we have $P(t) > \arrtimedist(t)$, meaning that also $P_n(t) > \arrtimedist(t) - \frac{\delta}{2}$. If furthermore $\arrtimedist(t) \geq \frac{1}{\sqrt{n}}$, then $(1 - \arrtimedist(t))^n \leq \frac{1}{\sqrt{n}}$ and so
	\begin{align*}
	\Pr[\win(\acc_t) \mid \mathcal{E}_t, \numsofar_t = \sofarval] & \leq \arrtimedist(t) + \frac{\gamma}{n} \leq P_n(t) + \frac{\delta}{2} + \frac{\gamma}{n} \\
	& \leq \Pr[\win(\rej_t) \mid \mathcal{E}_t, \numsofar_t = \sofarval] + \frac{\delta}{2} + \frac{\gamma}{n} + \frac{2 \gamma + 1}{n} + (1 - \arrtimedist(t))^n \\
	& \leq \Pr[\win(\rej_t) \mid \mathcal{E}_t, \numsofar_t = \sofarval] + \delta  . 
	\end{align*}
	If $\arrtimedist(t) < \frac{1}{\sqrt{n}}$, then we use non-negativity of the probability to get
	\begin{align*}
	\Pr[\win(\acc_t) \mid \mathcal{E}_t, \numsofar_t = \sofarval] & \leq \arrtimedist(t) + \frac{\gamma}{n} \leq \frac{1}{\sqrt{n}} + \frac{\gamma}{n} \leq \delta
	\leq \Pr[\win(\rej_t) \mid \mathcal{E}_t, \numsofar_t = \sofarval] + \delta. 
	\end{align*}

	So, in either case $\left\lvert \Pr[\win(\acc_t) \mid \mathcal{E}_t, \numsofar_t = \sofarval] - \Pr[\win(\rej_t) \mid \mathcal{E}_t, \numsofar_t = \sofarval] \right\rvert \leq \delta$.
\end{proof}

\subsection{Poisson arrivals}\label{subsec:poisson arrivals}
Our result for large numbers of candidates also carries over to the setting in which the arrivals are generated by a Poisson point process. That is, the overall number of candidates $n$ is not fixed but random. One way to interpret the input generation is that first the number of candidates $n$ as well as $\arrtime_1, \ldots, \arrtime_n$ are determined, and only afterwards candidates are assigned to the arrival times in a random permutation.

In a homogeneous Poisson point process with parameter $\delta$, the probability of exactly $\sofarval$ arrivals by time $t$ is given as
$\Pr[\numsofar_t = \sofarval] = \frac{(\delta t)^\sofarval}{\sofarval!} e^{-\delta t}$.
That is, it is given by a Poisson distribution with parameter $\delta t$; both expectation and variance are $\delta t$.

\begin{theorem}\label{thm:poison}
	Consider arrivals generated by a homogeneous Poisson point process with parameter $\delta$ and any departure distribution $\staydist$. There exists a policy $\pol^*$ defined by a threshold $t^\ast$ with the following properties. It accepts a candidate when she is leaving at time $t$ if and only if she is the best so far and $t>t^\ast$. For every $\epsilon > 0$, there is a $\delta_0$ such that for all $\delta > \delta_0$, the success probability of $\pol^*$ is at least $\Pr[\win(\opt_\delta)] - \epsilon$, where $\opt_\delta$ is the optimal policy for parameter $\delta$.
\end{theorem}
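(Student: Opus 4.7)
The plan is to reduce the Poisson-arrival setting to the fixed-$n$, uniform-arrival setting of Theorem~\ref{theorem:timethresholding} by conditioning on the total number of arrivals. Let $N$ denote the total number of arrivals in $[0,1]$, which is Poisson-distributed with mean $\delta$. By the classical conditioning property of homogeneous Poisson processes, conditional on $N = n$, the arrival times are distributed as $n$ i.i.d.\ uniform samples from $[0,1]$ (in sorted order). This is exactly the model of Theorem~\ref{theorem:timethresholding} with $\arrtimedist$ equal to the uniform distribution. Since the threshold $t^\ast$ in Definition~\ref{dfn: definition polstar} is determined by the pointwise limit $P(t) = \lim_n P_n(t)$, which is the same function in both settings, I will reuse this same $t^\ast$ (and hence the same single-threshold policy $\pol^*$) in the Poisson setting.

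The argument splits into two conditional inequalities. First, since $\opt_\delta$ does not see $N$, its behavior on the event $\{N = n\}$ can be viewed as a concrete (not necessarily optimal) policy for the $n$-candidate uniform-arrival problem; as $\opt_n$ is optimal in that problem, $\Pr[\win(\opt_\delta) \mid N = n] \leq \Pr[\win(\opt_n)]$ for every $n$. Second, $\pol^*$ is also blind to $N$ and its decision rule is the same threshold $t^\ast$ regardless of the overall number of candidates, so conditional on $N = n$ it behaves exactly as $\pol^*$ does on $n$ uniform arrivals. Applying Theorem~\ref{theorem:timethresholding} with precision $\eps/2$, there is some $n_0$ such that $\Pr[\win(\pol^*) \mid N = n] \geq \Pr[\win(\opt_n)] - \eps/2$ for all $n \geq n_0$.

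I then choose $\delta_0$ large enough that $\Pr[N < n_0] \leq \eps/2$ whenever $\delta > \delta_0$; this is possible by Poisson concentration around the mean. Taking expectations over $N$ and chaining the two conditional bounds gives
\[
\Pr[\win(\pol^*)] \geq \sum_{n \geq n_0} \Pr[N = n]\,\Pr[\win(\opt_n)] - \tfrac{\eps}{2} \geq \sum_{n \geq n_0} \Pr[N = n]\,\Pr[\win(\opt_\delta) \mid N = n] - \tfrac{\eps}{2} \geq \Pr[\win(\opt_\delta)] - \eps,
\]
where the last inequality uses $\Pr[\win(\opt_\delta)] - \sum_{n \geq n_0} \Pr[N = n]\,\Pr[\win(\opt_\delta) \mid N = n] \leq \Pr[N < n_0] \leq \eps/2$.

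The main obstacle is conceptual rather than technical: one must argue carefully that both $\pol^*$ and $\opt_\delta$ factor through the conditional distribution given $N = n$, so that they can be legitimately compared to their fixed-$n$ counterparts. This reduces to observing that neither policy uses the unknown random variable $N$ in its decisions: $\pol^*$ uses only the current time $t$, while the bivariate form of the optimal Poisson policy (by an argument analogous to Lemma~\ref{lem:dt}, adapted to the Poisson arrival process) uses only $t$ and $\numsofar_t$. Once this factorization is in place, Poisson concentration absorbs the tail and Theorem~\ref{theorem:timethresholding} delivers the rest.
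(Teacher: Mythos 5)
Your proposal is correct and follows essentially the same route as the paper's proof: condition on the total number of arrivals $N$, use the fact that conditional on $N=n$ the arrivals are i.i.d.\ uniform, invoke Theorem~\ref{theorem:timethresholding} with precision $\epsilon/2$, bound $\Pr[N \leq n_0]$ by concentration (the paper uses Chebyshev explicitly), and use that $\opt_n$ can simulate $\opt_\delta$ so that $\Pr[\win(\opt_\delta) \mid N=n] \leq \Pr[\win(\opt_n)]$. The only cosmetic difference is that you spell out the ``factorization through $N$'' point a bit more explicitly than the paper does.
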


The proof of Theorem~\ref{thm:poison} is deferred to Appendix~\ref{appendix:limit}.

\section{Case Study: Poisson Arrivals and Exponential Departures}\label{SEC:EXP}

In this section we consider the following setting. 
Candidates arrive in the time interval  $\left[ 0 , 1 \right]$ according to a Poisson distribution with parameter $\pois$. 
The candidates draw their waiting time from distribution $\mathcal{L}$, an exponential distribution with parameter $\lambda$. An irrevocable decision to hire a candidate must be made by his departure or time $1$, whichever is sooner. 
In Section~\ref{subsec:optima} we give a closed form expression for the optimal probability of selecting the best candidate, when $\pois \rightarrow \infty$. For $\lambda \rightarrow \infty$, the success probability of setting a threshold $\theta$ is $-\theta \ln\theta$ for all $\theta \in [0,1]$, recovering the classical setting~\cite{Ferguson89}. We give the optimal threshold and success probabilities for some natural values of $\lambda$ in Table~\ref{table:15}. Figures~\ref{fig:test1} and~\ref{fig:test2} that show the optimal threshold and the probability of success as a function of $\lambda$.

{\renewcommand{\arraystretch}{1.0}
	\begin{table}[ht]
		\centering
		%\rowcolors{2}{red!15}{blue!15}
		\begin{tabular}{l l l} 
			%	\toprule
			\hline 
			Rate $\lambda$      & Optimal threshold $\theta$ & $\Pr[\win]$    \\
			\hline
			$\lambda \rightarrow 0$& any & 1 \\
			$0.5$ & $\approx 0.493$ & $\approx 0.889$ \\
			$1$ & $\approx 0.486$ & $\approx 0.804$ \\
			$2$ & $\approx 0.473$ & $\approx 0.684$ \\
			$10$ & $\approx 0.4411$ & $\approx 0.417$ \\
			$\lambda \rightarrow \infty$ & $e^{-1}$ & $e^{-1}$ \\
			\hline
		\end{tabular} 
		\caption{Optimal threshold and success probability for exponential departures with parameter $\lambda$. }
		\label{table:15}
	\end{table}
}

\begin{figure}[ht]
	\centering
	\begin{minipage}{.5\textwidth}
		\centering
		\begin{tikzpicture}[scale=.7]
		\begin{axis}[xlabel=$\lambda$, ylabel={Optimal threshold}]
		\addplot[mark=none] table
		[x expr=\lineno*0.1-0.1, y=A]
		{optimalthresholds.dat};
		\end{axis}
		\end{tikzpicture}
		\captionof{figure}{Optimal threshold.}
		\label{fig:test1}
	\end{minipage}%
	\begin{minipage}{.5\textwidth}
		\centering
		\begin{tikzpicture}[scale=.7]
		\begin{axis}[xlabel=$\lambda$, ylabel={Optimal probability of success}]
		\addplot[mark=none] table
		[x expr=\lineno*0.1-0.1, y=A]
		{optimalprobabilities.dat};
		\end{axis}
		\end{tikzpicture}
		\caption{Probability of success.}
		\label{fig:test2}
	\end{minipage}
\end{figure}

\subsection{Computing the optimal threshold}\label{subsec:optima}

When $\pois \rightarrow \infty$, we know that an (almost) optimal policy is a single-threshold one, by Theorem~\ref{thm:poison}. 
The next theorem provides a closed form for the success probability for this policy for a given $\theta$.

\begin{theorem}\label{thm:closedform} For the stochastic departing secretary problem with Poisson  arrivals with parameter $\delta \rightarrow \infty$ and  exponential departures with rate $\lambda$,  the success probability for the single-threshold strategy with threshold $\theta$ tends to
	\begin{multline*}
	\frac{1}{\lambda}\left( 1-e^{-\lambda \theta}\right)  + \frac{1}{\theta\lambda^2}\left( e^{\lambda\theta}-1\right) \left(e^{-\lambda \theta} - e^{-\lambda} \right) \\
	+\frac{1}{\lambda}\left(\theta \lambda-1+e^{-\lambda \theta}\right) \left( \ln\left( \frac{1}{\theta} \right) + \frac{1-\theta}{\lambda \theta}  -\int_{x=0}^{\lambda(1-\theta)} \frac{e^{-x}}{(x-\lambda)^2} dx \right).
	\end{multline*}
\end{theorem}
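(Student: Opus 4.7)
The plan is to decompose the success probability by conditioning on $A^*$, the arrival time of the best candidate. In the $\delta\to\infty$ Poisson limit, $A^*\sim U[0,1]$. Split into Case A ($A^*<\theta$) and Case B ($A^*\geq\theta$). In Case A, any other candidate $j$ departing in $(\theta, D^*)$ has $D_j>A^*$ and is therefore not best-so-far at its departure, so success holds iff $D^*>\theta$, i.e., $L^*>\theta-A^*$; integrating $e^{-\lambda(\theta-s)}$ over $s\in[0,\theta]$ yields the first term $\frac{1-e^{-\lambda\theta}}{\lambda}$.

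For Case B, condition further on $A^*=s\in[\theta,1]$. Success requires that no other candidate $j$ departs at some $t\in(\theta,s)$ while being best among arrivals in $[0,t]$. In the $\delta\to\infty$ limit, the non-$i^*$ candidates that are best-so-far upon arrival (the \emph{records}) restricted to $(0,s)$ converge to a Poisson process with intensity $1/t$, each carrying an iid $\mathrm{Exp}(\lambda)$ stay. A record at $T$ triggers a bad event iff $D_T=T+L_T\in(\theta,s)$ and no subsequent record lies in $(T,D_T]$.

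The key step is a two-state Markov reduction on $a\in[\theta,s]$: state $1$ means the current best-so-far is still around (so by memorylessness its remaining life is $\mathrm{Exp}(\lambda)$), state $0$ means it has already departed. Let $P_i(a,s)$ denote the probability of no bad event in $(a,s)$ given state $i$ at $a$. A small-interval analysis --- in state $1$, a departure occurs at rate $\lambda$ (a bad event) and a new record at rate $1/a$ (which harmlessly refreshes the $\mathrm{Exp}(\lambda)$ clock); in state $0$ only records matter --- yields the ODEs $\partial_a P_1 = \lambda P_1$ and $\partial_a P_0 = (1/a)(P_0-P_1)$ with terminal condition $P_i(s,s)=1$. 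The first gives the striking identity $P_1(a,s) = e^{-\lambda(s-a)}$ (each record reset swaps one $\mathrm{Exp}(\lambda)$ clock for another), and solving the second with an integrating factor gives
\[
P_0(a,s) \;=\; \frac{a}{s} \;+\; \int_a^s \frac{a\, e^{-\lambda(s-u)}}{u^2}\, du .
\]

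Finally, the initial state at $\theta$ is governed by the last record $r$ in $(0,\theta]$, which in the Poisson limit is $U(0,\theta)$, and its independent stay $L_r\sim\mathrm{Exp}(\lambda)$: state $1$ holds iff $L_r>\theta-r$, which has probability $\frac{1-e^{-\lambda\theta}}{\lambda\theta}$, so state $0$ has probability $\frac{\lambda\theta-1+e^{-\lambda\theta}}{\lambda\theta}$. Integrating $\Pr[\text{state }0]\,P_0(\theta,s) + \Pr[\text{state }1]\,P_1(\theta,s)$ over $s\in[\theta,1]$, the state-$1$ branch contributes $\frac{(1-e^{-\lambda\theta})(1-e^{-\lambda(1-\theta)})}{\lambda^2\theta}$, which equals the theorem's second term via $(e^{\lambda\theta}-1)(e^{-\lambda\theta}-e^{-\lambda}) = (1-e^{-\lambda\theta})(1-e^{-\lambda(1-\theta)})$; the state-$0$ branch, after the substitution $x=\lambda(1-T')$ in the inner integral, produces exactly the third term. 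The main obstacles are justifying the $\delta\to\infty$ Poisson record limit rigorously and identifying the right Markov reduction; once $P_1(a,s) = e^{-\lambda(s-a)}$ is in hand, the remaining algebra is routine.
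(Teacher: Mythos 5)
Your derivation is correct and lands on exactly the stated formula, but it takes a genuinely different route from the paper's. The paper also splits on whether the global best arrives before or after $\theta$ and on whether the best-so-far at $\theta$ is still present: your state~1 / state~0 dichotomy is precisely its pair of events $\live(\theta)$ and $\dead(\theta)$ (with $\Pr[\live(\theta)]=\frac{1-e^{-\lambda\theta}}{\lambda\theta}$ as in Observation~\ref{obs2}), and your identity $P_1(\theta,s)=e^{-\lambda(s-\theta)}$ is exactly Lemma~\ref{lemma:2}, proved there by the same memorylessness ``clock swap.'' The divergence is in the $\dead(\theta)$ case: the paper conditions on the true rank $k+1$ of $\mir_\theta$, observes that only the top $k$ candidates (all arriving in $[\theta,1]$) matter, computes the exact finite-$k$ success probability of the no-waiting policy (Claim~\ref{claim:ll}), and then sums the resulting series over $k$ as $n\to\infty$ (Claims~\ref{claim:sum of integrals}, \ref{claim:colors} and~\ref{appendix:subsec:laststeps computation}). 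You instead pass to the continuum limit first, replacing the record sequence by an inhomogeneous Poisson process of rate $1/t$, and solve a two-state backward ODE; your $P_0(\theta,s)=\frac{\theta}{s}+\theta\int_\theta^s e^{-\lambda(s-u)}u^{-2}\,du$, integrated over $s\sim U[\theta,1]$, reproduces in one stroke what the paper obtains as the limit of $\sum_k \Pr[\mir_\theta=k+1\mid A_1>\theta]\cdot\Pr[\winn(k,[\theta,1])]$. (I checked the algebra: the state-1 branch gives the second term via $(e^{\lambda\theta}-1)(e^{-\lambda\theta}-e^{-\lambda})=(1-e^{-\lambda\theta})(1-e^{-\lambda(1-\theta)})$, and after swapping the order of integration and substituting $x=\lambda(1-u)$ the state-0 branch gives the third term exactly.) Your approach buys a shorter, more transparent computation with no series manipulation, at the cost of having to rigorously justify the limit itself --- convergence of the record process, conditioned on the best arriving at $s$, to the rate-$1/t$ Poisson process, uniformity of the last pre-$\theta$ record, and the interchange of this limit with the success-probability functional --- which you correctly flag as the remaining obstacle; the paper keeps the number of candidates finite until the last step, so its limit interchange reduces to elementary convergence of explicit series. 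Both arguments ultimately rest on the same two facts: memorylessness of $\staydist$ and the $1/i$ record structure of random permutations.
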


\begin{proof}
	For the proof it will be useful to have slightly different notation. \textbf{We use $A_i$, $L_i$, $D_i$ to refer to the  $i$-th best candidate and not the $i$-th arrival time as in the previous sections.} That is, $A_1$ is the random variable indicating the arrival of the best candidate. $L_1$ denotes the random variable indicating the length of stay and $D_1$ the departure time of the best candidate.
	
	Furthermore, denote the random variable indicating the (true) rank of the best candidate seen in the time interval $[0,t]$ by $\mir_{t}$. We abuse  notation and refer to the candidate by $\mir_{t}$ as well. The meaning will always be clear from context.
	Let $\live(t)$ be the event that $\mir_t$ has not departed by time $t$, and $\dead(t)$ be the event that $\mir_t$ has departed by time $t$.
	
	For large enough $\pois$, we can leverage Theorem~\ref{thm:poison}. In particular, we can henceforth analyze the uniform arrival setting instead of the Poisson arrival setting. We can also infer that the following policy, denoted here by $\pol_\theta$, is asymptotically optimal: if $\mir_{t}$ departs at time $t>\theta$, for some predetermined $\theta$, accept. Otherwise, reject. Our goal is to compute $\theta$ and $\Pr[\win(\pol_\theta)]$.

	We partition the event space into four disjoint events:   
	(1) $A_1\leq\theta$,  (2) $A_1>\theta$ and $\live(\theta)$,  (3) no candidates arrive before $\theta$ (naturally $A_1 > \theta$ in this case), and (4)   $A_1>\theta$ and $\dead(\theta)$. Note that in events (2) and (4) at least one candidate has arrived, and are therefore disjoint from (3). The last event is further broken down into $n-1$ disjoint events, based on the rank of $\mir_\theta$:
	\begin{align}
	\Pr[\win(\pol_\theta)]=&  \Pr[\win(\pol_\theta) \mid A_1 \leq \theta] \cdot \Pr[A_1 \leq \theta] \notag  \\
	+&  \Pr[\win(\pol_\theta) \mid A_1 > \theta \wedge \live(\theta)]\cdot \Pr[A_1 > \theta]\cdot \Pr[\live(\theta)] \notag  \\
	+&  \Pr[\win(\pol_\theta) \mid  A_i > \theta, \forall i \in \{1,\dots,n \}]\cdot \Pr\left[ A_i > \theta, \forall i \in \{1,\dots,n \} \right] \notag  \\
	+ &\sum_{k=1}^{n-1}\Pr[\win(\pol_\theta) \mid A_1 > \theta \wedge \dead(\theta) \wedge \mir_{\theta}=k+1] \cdot \notag \\ &\cdot \Pr[A_1 > \theta]\cdot \Pr[\dead(\theta)]\cdot \Pr[\mir_{\theta}=k+1| A_1 > \theta]\label{blue1}
	\end{align}

	Case (1) is handled by Lemma~\ref{lemma:1} in Appendix~\ref{app:p} and is straightforward. For case (2), if the best candidate that has arrived until $\theta$ has not departed, we select it unless a better candidate arrives before its departure. In order to select the best candidate, we need a contiguous chain between $\theta$ and the arrival of the best candidate. Here we use the memorylessness of the exponential distribution. Case (3) is negligible for any non-zero $\theta$. 
	Case (4) is the most challenging. To analyze this case, we consider the following, suboptimal policy: \emph{Whenever the best-so-far leaves, accept him}. This is identical to setting a threshold  $\theta=0$ in the single-threshold policy; we call this policy  the \emph{no waiting policy}. Setting the threshold at zero makes it easier to precisely analyze the success probability, for any $n$. This step  also relies on the waiting times coming from a memoryless distribution. The analysis of this step is in Appendix~\ref{nowaiting}. We then notice that if we knew the rank $k$ of the best-so-far at $\thresh$, the probability of success thereafter would be exactly the success probability of the no waiting policy with $k$ candidates. To conclude the analysis, we sum over all possible values of $k$.

	Using Claim~\ref{claim:colors} (in Appendix~\ref{app:p}), Equation~\eqref{blue1} simplifies to the following,  where we denote by $\winn(k,[\theta,1])$ the event that we select the best candidate in the no waiting policy, when $k$ candidates arrive, where the candidates arrive over the interval $[\theta,1]$.
	
	\begin{multline*}
	\Pr[\win(\pol_\theta)]=  \frac{1}{\lambda}\left( 1-e^{-\lambda \theta}\right)  + \frac{1}{\lambda^2 \theta} \left( e^{\lambda\theta} - 1 \right) \left( e^{- \lambda \theta} - e^{-\lambda} \right)  + \Pr\left[\winn(n,[\theta,1])\right] \cdot (1-\theta)^n \\
	+  \frac{1}{\lambda} (1-\theta) \left( \lambda\theta - 1  + e^{-\lambda \theta} \right) + \frac{1}{\lambda} \left( \lambda \theta - 1  + e^{-\lambda \theta}  \right)  \sum_{k=2}^{n}\Pr\left[\winn(k,[\theta,1])\right] (1-\theta)^{k}.
	\end{multline*}
	
	The remaining ingredients necessary for further simplification are (i) as $n$ goes to infinity $\Pr\left[\winn(n,[\theta,1])\right] \cdot (1-\theta)^n \rightarrow 0$, (ii) $\lim_{n \rightarrow \infty} \sum_{k=2}^{n} \frac{(k-1)}{\lambda^k } \int_{t=0}^{\lambda(1-\theta)} e^{-t} t^{k-2} dt = \int_{t=0}^{\lambda(1-\theta)} \frac{e^{-t}}{(t-\lambda)^2} dt$, (iii) $\sum_{k=2}^n \frac{(1-\theta)^{k}}{k}$ and  $\sum_{k=2}^n \frac{(1-\theta)^{k-1}}{\lambda}$ converge to $\theta - \ln\theta - 1$ and $\frac{1-\theta}{\lambda \theta}$ respectively. Combining the above ingredients (see Claim~\ref{appendix:subsec:laststeps computation} for the calculations) we have:
	
	\begin{align*}
	\Pr[\win(\pol_\theta)] &= \frac{1}{\lambda}\left( 1-e^{-\lambda \theta}\right)  + \frac{1}{\theta\lambda^2}\left( e^{\lambda\theta}-1\right) \left(e^{-\lambda \theta} - e^{-\lambda} \right) \\
	&+\frac{1}{\lambda}\left(\theta \lambda-1+e^{-\lambda \theta}\right) \left( \ln\left( \frac{1}{\theta} \right) + \frac{1-\theta}{\lambda \theta}  -\int_{x=0}^{\lambda(1-\theta)} \frac{e^{-x}}{(x-\lambda)^2} dx \right). \qedhere
	\end{align*}
\end{proof}

\section{Future directions}

%\snote{We may want to think about adding tothe open problems: full information case (adversary picks distribution of valuations that is known to the interviewer) cf.~\cite{GM66} and aslo addressed in ~\cite{HKP04} extensively }

There are several interesting open avenues for related research; we mention two.
The first interesting direction is relaxing the assumption that the system knows when a candidate is about to depart. In this work, we assume that the optimal policy receives a signal when each candidate departs, and is allowed to make a decision thereafter. %\emph{What is the optimal policy when we receive a signal immediately after a candidate departs?}.
\emph{What happens when we only receive a signal immediately after a candidate's departure?}

The stochastic departure aspect of our model can be applied to virtually all variations of the secretary problem. Of particular interest is the effect of stochastic departures on the matroid secretary problem~\cite{BIK07}. This would possibly have applications to online mechanism design problems, as matroid secretary problems have a strong connection with online auctions e.g.,~\cite{BIKK08}.
\emph{What are the optimal policy and approximation guarantees for  matroid secretary problems with stochastic departures?}

\bibliographystyle{plainnat}
\bibliography{refs}

\newpage
\appendix

\section{Some more examples}\label{app:examples}

\begin{example}[Lemma~\ref{lem:t} does not hold for all arrival distributions]\label{ex:weirddist}
	Consider the following arrival and departure distributions: each candidate arrives in $[0,4\eps]$ or $(6\eps,1]$ w.p.\ $\eps$, and in $(4\eps, 6\eps]$ w.p.\ $1-2\eps$. Candidates stay in the system for $3\eps$. If $4/9$ of the candidates have arrived by $4\eps$, and now the best leaves, we should reject, as $\Pr[\win(\acc_t)] = 4/9$, $\Pr[\win(\rej_t)] \approx 5/9$. However, if $t=4/9$, we should accept,
\end{example}

\begin{example}[Theorem~\ref{theorem:timethresholding} does not hold for small $n$]\label{ex3}
	In this example we show that policies that pick a time threshold are not optimal for small $n$, even for the classic secretary problem with immediate departures (which is  a special case of the problem we study). Consider $3$ candidates with $\arrtime_i \in U[0,1]$. The optimal policy rejects the first candidate, and accept the next best-so-far. The probability of success is $\frac{1}{2}$. To see this most clearly, notice that this policy makes a wrong decision with probability $\frac{1}{3}$, when the best candidate arrives first, and with probability $\frac{1}{6}$, when the order of arrival is $3, 2, 1$.
	
	Given a threshold $\theta$ we compute the probability of success conditioned on an order of arrival:
	\begin{itemize}
		\item $123$: $Pr[\win] = Pr[ FIRST \geq \theta ] = (1-\theta)^3$
		\item $132$: $Pr[\win] = Pr[ FIRST \geq \theta ] = (1-\theta)^3$
		\item $213$: $Pr[\win] = Pr[ SECOND \geq \theta, FIRST \leq \theta ] = \theta (1-\theta)^2$
		\item $231$: $Pr[\win] = Pr[ SECOND \leq \theta, THIRD \geq \theta ] + Pr[ SECOND \geq \theta, FIRST \leq \theta ] = \theta^2 (1-\theta) + \theta (1-\theta)^2 $
		\item $312$: $Pr[\win] = Pr[ FIRST \leq \theta, SECOND \geq \theta ] = \theta (1-\theta)^2$
		\item $321$: $Pr[\win] = Pr[ SECOND \leq \theta, THIRD \geq \theta ] = \theta^2 (1-\theta)$
	\end{itemize}
	
	The probability of success is $\frac{1}{6} \left( 2 (1-\theta)^3 + 3 \theta (1-\theta)^2 + 2 \theta^2 (1-\theta) \right)$.
	This is a decreasing function of $\theta$ in the interval $[0,1]$, therefore the optimum is achieved for $\theta = 0$, which gives a probability of success equal to $\frac{1}{3}$.
\end{example}

\section{Proofs missing from Section~\ref{SECTION:LIMIT}}\label{appendix:limit}

\deltawrongbound*

\begin{proof} %[Proof of Lemma~\ref{lem:delta wrong bound}]
	To prove the lemma, we explicitly write the Markov Decision Process (MDP) that an optimal policy solves. We have a finite time horizon and an infinite state space. The states have the form $(t, \numsofar_t, b)$, where $t$ denotes the time that has passed, $\numsofar_t$ denotes the number of arrivals so far, and $b \in \{\bot, \textsc{best}, \textsc{other} \}$ denotes whether a candidate has been accepted so far and whether it was the best so far. We can go from a state $(t, \numsofar_t, \textsc{best})$ to a state $(t', \numsofar_{t'}, \textsc{other})$ ($t' > t, \numsofar_{t'} > \numsofar_{t}$), but not the other way around. The states in which we make a decision are the states at which a best-so-far candidate leaves. If we decide to accept, there are no other decision states afterwards. We can equivalently think of the MDP as a tree of depth $n+1$, where the states at depth $d$ are of the form $(t,d,b)$. It will also be useful to think of the actions, ``accept'' and ``reject'', as nodes in this tree that are between depth $d$ and $d+1$ but not belonging to either of them. Therefore, a decision state of depth $d$ has two children, $A_s$ and $R_s$ (for ``accept'' and ``reject'') whose children are the states of depth $d+1$. By the previous observation, $A_s$ does not have predecessor decision states.
	
	A policy is defined at every decision state. The reward of a policy is $1$ if the final state is $(1, n, \textsc{best})$, otherwise it is $0$. We denote by $V(\pol, s, d)$ the expected reward of a policy $\pol$ given that the policy is in state $s$ at depth $d$.
	Let $V_A(\pol, s, d)$ be the expected reward if at (a decision) state $s$ in depth $d$ we take the action \emph{accept} and $V_R(\pol, s, d)$ be the same quantity take the action \emph{reject}. Naturally, $V(\opt, s, d) = \max\{ V_A(\opt, s, d), V_R(\opt, s, d) \}$. Furthermore, $V(\pol,(0,0,\bot),0) = \Pr[\win(\pol)]$.
	
	Now, consider any $\delta$-wrong policy $\pol$. By definition, if $\lvert V_A(\opt, s, d) - V_R(\opt, s, d) \rvert > \delta$, then $\pol$ takes the exact same action as $\opt$, otherwise it is arbitrary. We will transition from $\opt$ to $\pol$ as follows. Let $\pol^{(j)}_J$ be the policy that takes the same action as $\pol$ in the $j$-th, $j+1$-st, $\ldots, J-1$-st decision, otherwise it follows $\opt$. We will show by induction that $V(\pol^{(j)}_J, s, d) \geq V(\opt, s, d) - \delta(J-j)$ for all $j \leq J$ and all states $s$ at depth $d$.
	
	The statement for $j = J$, which is the base case for our induction, is trivial, since $\pol^{(j)}_J$ is exactly $\opt$. To transition from $j+1$ to $j$, we distinguish two cases. If in a state $s$ where a policy has to take its $j$-th action, $\pol$ takes the same action as $\opt$, then the actions of $\pol^{(j)}_J$ and $\pol^{(j+1)}_J$ are identical. Therefore $V(\pol^{(j)}_J, s, d) = V(\pol^{(j+1)}_J, s, d) \geq V(\opt, s, d) - \delta(J-(j+1)) \geq V(\opt, s, d) - \delta(J-j)$.
	Otherwise, if the decisions differ, then $\lvert V_A(\opt, s, d) - V_R(\opt, s, d) \rvert \leq \delta$. 
	Without loss of generality, assume that the optimal policy accepts whereas $\pol$ (and $\pol^{(j)}_J$) rejects. In this case, $V_R(\opt, s, d) \geq V_A(\opt, s, d) - \delta = V(\opt, s, d) - \delta$.
	
	\begin{align}
	V(\pol^{(j)}_J, s, d) &= V_R(\pol^{(j)}_J, s, d) \notag \\
	&= \sum_{\text{state $\hat{s}$ under $R_s$}} V(\pol^{(j)}_J, \hat{s}, d+1) \Pr[\hat{s} | R_s] \label{eq:second} \\
	&= \sum_{\text{state $\hat{s}$ under $R_s$}} V(\pol^{(j+1)}_J, \hat{s}, d+1) \Pr[\hat{s} | R_s] \label{eq:third}\\
	&\geq \sum_{\text{state $\hat{s}$ under $R_s$}} \left( V(\opt, \hat{s}, d+1) - \delta(J-(j+1)) \right) \Pr[\hat{s} | R_s] \label{eq:fourth} \\
	&= \sum_{\text{state $\hat{s}$ under $R_s$}} V(\opt, \hat{s}, d+1) \Pr[\hat{s} | R_s] - \delta(J-(j+1))  \notag \\
	&= V_R(\opt, s, d) - \delta(J-j) + \delta \notag \\
	&\geq V(\opt, s, d) - \delta(J-j). \notag
	\end{align}
	
	To go from line~\ref{eq:second} to line~\ref{eq:third} we use the fact that every action we take below state $s$ (where we take the $j$-th action) is by definition our $j+1$-st or higher action, and therefore $\pol^{(j)}_J$ and $\pol^{(j+1)}_J$ are identical, and have identical probabilities of success. To go from line~\ref{eq:third} to line~\ref{eq:fourth} we use the inductive hypothesis.
	Consequently, we have now shown that
	
	\[
	\Pr[\win(\pol^{(0)}_J)] \geq \Pr[\win(\opt_n)] - J \delta .
	\]
	
	Next, we show that with probability at least $1 - \frac{1}{n}$ we encounter at most $J = 6 \ln(n+1)$ candidates that are best-so-far at the time of their departure. To get this upper bound, we observe that each of these candidates has to have been a best so far at time of arrival. Let $X$ be a random variable denoting the number of best so far at arrival. The probability that the $i$-th arrival is a best-so-far is $\frac{1}{i}$. Therefore, $\mathbb{E}[X] = H_n \leq \ln(n+1)$. As $X$ can be considered a sum of independent 0/1 (indicator) random variables, standard Chernoff bounds\footnote{The version we use is the following. For a random variable $X = \sum_{i=1}^n Y_i$ with expectation $\mu$, where $Y_i \in \{ 0, 1\}$, we have that $\Pr[ X \geq (1+\delta)\mu ] \leq exp(-\delta \mu / 3 )$, for $\delta \geq 1$. } imply

	\[
	\Pr[X \geq 6 \ln(n+1)] \leq \exp\left( - \frac{6 \ln(n+1)}{3} \right) \leq \frac{1}{n^2} .
	\]
	
	That is $\pol^{(0)}_J$ and $\pol$ differ with probability at most $\frac{1}{n}$, implying the lemma.
\end{proof}

\concentration*

\begin{proof} %[Proof of Lemma~\ref{lem:concentration}]
	For $t \in \mathbb{R}$, let $Y_t$ be the number of arrivals until time $t$. Observe that $Y_t$ is a sum of $n$ independent 0/1 random variables, each with expectation $F(t)$.
	
	Hoeffding's inequality gives us
	\[
	\Pr\left[\lvert Y_t - n F(t) \rvert \geq \sqrt{n \ln(2n)}\right] \leq 2 \exp\left( - 2 \frac{\left( \sqrt{n \ln(2n)} \right)^2}{n} \right) = \frac{1}{2n^2} \leq n^{-2} .
	\]
	
	Let $t_j = \sup \{ t \in \mathbb{R} \mid F(t) \leq \frac{j}{n} \}$. By union bound, the probability that there is some $j \in \{1, 2, \ldots, n\}$ for which $\lvert Y_{t_j} - n F(t_j) \rvert \geq \sqrt{n \ln(2n)}$ is at most $\frac{1}{n}$.
	
	Consider an arbitrary $t \in \mathbb{R}$ and let $j$ be such that $t_{j-1} < t \leq t_j$. Observe that $Y_t \geq n F(t) + \gamma$ implies that $Y_{t_j} \geq n F(t) + \gamma \geq n (F(t_j) - \frac{1}{n}) + \gamma \geq n F(t_j) -1 + \gamma \geq n F(t_j) + \sqrt{n \ln(2n)}$. Analogously, if $Y_t \leq n F(t) - \gamma$, then also $Y_{t_{j-1}} \leq nF(t_{j-1}) - \gamma \leq nF(t_{j-1}) - \sqrt{n \ln(2n)}$.
	
	In combination,
	\[
	\Pr\left[\exists t \in \mathbb{R}: \lvert Y_t - n F(t) \rvert \geq \gamma\right] \leq \Pr\left[\exists j \in \{1, \ldots, n\}: \lvert Y_{t_j} - n F(t_j) \rvert \geq \sqrt{n \ln(2n)} \right] \leq \frac{1}{n} . \qedhere
	\]
\end{proof}

\lemlargenconditional*

\begin{proof} %[Proof of Lemma~\ref{lem:largen:conditional}]
	
	We start by showing that
	\begin{align}
	\Pr[\win(\rej_t) \mid \mathcal{E}_t, \numsofar_t = k+1] \geq \Pr[\win(\rej_t) \mid \mathcal{E}_t, \numsofar_t = k] - \frac{1}{n} . \label{eq: warm up ineq}
	\end{align}
	
	Consider the following (feasible) policy $\textsc{HAL}$ starting at time $t$, when events $\mathcal{E}_t, \numsofar_t = k+1$ have occurred and we have rejected everyone so far (i.e. a policy with success probability is at most the LHS of~\ref{eq: warm up ineq}). Ignore the existence of an arbitrary candidate from the first $k+1$ that have arrived, and draw a new candidate $c$ from the arrival and departure distributions (conditioned on the arrival time being after $t$) with random rank. The probability that $c$ is the best candidate overall is exactly $\frac{1}{n}$. The new policy $\textsc{HAL}$ copies the decisions of  $\win(\rej_t) \mid \mathcal{E}_t, \numsofar_t = k$ with the ``hallucinated'' candidate $c$. If $c$ were a real candidate, then $\textsc{HAL}$ and $\rej_t$ would succeed in exactly the same events. Whenever $\rej_t$ succeeds and the candidate picked is not $c$, then $\textsc{HAL}$ succeeds as well. Therefore, we only need to worry about the outcomes where $\rej_t$ succeeds and the candidate picked is $c$; in all these outcomes $c$ is the best candidate overall, and therefore the total probability is at most $\frac{1}{n}$.

	\begin{align*}
	\Pr[\win(\rej_t) \mid \mathcal{E}_t, \numsofar_t = k ] &= \Pr[\win(\rej_t) \cap \{\text{$c$ is selected} \} \mid \mathcal{E}_t, \numsofar_t = k ] \\
	&\quad + \Pr[\win(\rej_t) \cap \{ \text{ $c$ is not selected } \} \mid \mathcal{E}_t, \numsofar_t = k ]  \\
	&\leq \Pr[ \text{$c$ is the best candidate} ] + \Pr[\win(\textsc{HAL}) \mid \mathcal{E}_t, \numsofar_t = k+1 ]\\
	&\leq \frac{1}{n} + \Pr[\win(\rej_t) \mid \mathcal{E}_t, \numsofar_t = k+1 ],
	\end{align*}
	which is~\ref{eq: warm up ineq} re-arranged. Inequality~\ref{eq: warm up ineq} implies that for all $\sofarval_2 \geq \sofarval_1$ we have
	\begin{align}
	\Pr[\win(\rej_t) \mid \mathcal{E}_t, \numsofar_t = \sofarval_2] \geq \Pr[\win(\rej_t) \mid \mathcal{E}_t, \numsofar_t = \sofarval_1] - \frac{k_2 - k_1}{n} . \label{eq: first ineq}
	\end{align}

	Furthermore, for all $\sofarval_1 \geq \sofarval_2$, we have
	\begin{align}
	\Pr[\win(\rej_t) \mid \mathcal{E}_t, \numsofar_t = \sofarval_2] \geq \Pr[\win(\rej_t) \mid \mathcal{E}_t, \numsofar_t = \sofarval_1] - \frac{\sofarval_1 - \sofarval_2}{n} . \label{eq: second ineq}
	\end{align}
	This is because if $\numsofar_t = \sofarval_2$ it is a feasible policy to drop a random subset of size $\sofarval_1 - \sofarval_2$ of the future arrivals and to simulate the policy for $\numsofar_t = \sofarval_1$. The new policy makes the correct select whenever the one for $\numsofar_t = \sofarval_1$ does, unless one of the dropped candidates is the global best. The latter happens with probability $\frac{\sofarval_1 - \sofarval_2}{n}$.

	Combining inequalities~\ref{eq: first ineq} and~\ref{eq: second ineq} we have for all $\sofarval$ and $\sofarval'$
	\begin{equation}
	\lvert \Pr[\win(\rej_t) \mid \mathcal{E}_t, \numsofar_t = \sofarval] - \Pr[\win(\rej_t) \mid \mathcal{E}_t, \numsofar_t = \sofarval'] \rvert \leq \frac{\lvert \sofarval - \sofarval' \rvert}{n} .
	\label{eq:differentk}
	\end{equation}
	
	%In contrast to the above probabilities, $P_n(t) = \Pr[\win(\rej_t) \mid \mathcal{E}_t]$ is the probability of success not conditioning on a certain value of $\numsofar_t$. 
	We can partition the probability space by all possible outcomes for $\numsofar_t$, to get
	\begin{align*}
	& \Pr[\win(\rej_t) \mid \mathcal{E}_t] \\
	& = \sum_{\sofarval' = 0}^n \Pr[\numsofar_t = \sofarval'] \cdot \Pr[\win(\rej_t) \mid \mathcal{E}_t, \numsofar_t = \sofarval'] \\
	& = \sum_{\sofarval' \not\in [F(t) n - \gamma, F(t) n + \gamma]} \Pr[\numsofar_t = \sofarval'] \cdot \Pr[\win(\rej_t) \mid \mathcal{E}_t, \numsofar_t = \sofarval'] \\ 
	& \qquad + \sum_{\sofarval' \in [F(t) n - \gamma, F(t) n + \gamma]} \Pr[\numsofar_t = \sofarval'] \cdot \Pr[\win(\rej_t) \mid \mathcal{E}_t, \numsofar_t = \sofarval'] .
	\end{align*}
	
	The first term is upper bounded by $$ \sum_{\sofarval' \not\in [F(t) n - \gamma, F(t) n + \gamma]} \Pr[\numsofar_t = \sofarval'] = \Pr\left[\numsofar_t \not\in [F(t) n - \gamma, F(t) n + \gamma]\right],$$ which by Lemma~\ref{lem:concentration} is at most $1/n$. For the second term, we use \eqref{eq:differentk} to get
	\begin{align*}
	&\sum_{\sofarval' \in [F(t) n - \gamma, F(t) n + \gamma]} \Pr[\numsofar_t = \sofarval'] \cdot \Pr[\win(\rej_t) \mid \mathcal{E}_t, \numsofar_t = \sofarval'] &\\
	&\quad \leq \sum_{\sofarval' \in [F(t) n - \gamma, F(t) n + \gamma]} \Pr[\numsofar_t = \sofarval'] \cdot \left( \Pr[\win(\rej_t) \mid \mathcal{E}_t, \numsofar_t = \sofarval] + \frac{\lvert \sofarval-\sofarval' \rvert}{n} \right) &\\
	&\quad \leq \sum_{\sofarval' \in [F(t) n - \gamma, F(t) n + \gamma]} \Pr[\numsofar_t = \sofarval'] \cdot \Pr[\win(\rej_t) \mid \mathcal{E}_t, \numsofar_t = \sofarval] + \frac{2\gamma}{n}
	\end{align*}

	Combining the two bounds we have $$\Pr[\win(\rej_t) \mid \mathcal{E}_t] \leq \Pr[\win(\rej_t) \mid \mathcal{E}_t, \numsofar_t = \sofarval] + \frac{2\gamma + 1}{n}.$$

	For a lower bound, we completely ignore the first term and get
	
	\begin{align*}
	& \Pr[\win(\rej_t) \mid \mathcal{E}_t] \\
	&\geq \sum_{\sofarval' \in [F(t) n - \gamma, F(t) n + \gamma]} \Pr[\numsofar_t = \sofarval'] \cdot \Pr[\win(\rej_t) \mid \mathcal{E}_t, \numsofar_t = \sofarval']\\
	&\geq \sum_{\sofarval' \in [F(t) n - \gamma, F(t) n + \gamma]}  \Pr[\numsofar_t = \sofarval'] \cdot \left( \Pr[\win(\rej_t) \mid \mathcal{E}_t, \numsofar_t = \sofarval] - \frac{\lvert \sofarval-\sofarval' \rvert}{n}\right) \\
	&\geq \Pr\left[\numsofar_t \in [F(t) n - \gamma, F(t) n + \gamma]\right] \cdot \left( \Pr[\win(\rej_t) \mid \mathcal{E}_t, \numsofar_t = \sofarval] - \frac{2\gamma}{n} \right) \\
	&\geq \left(1- \frac{1}{n}\right) \cdot \left( \Pr[\win(\rej_t) \mid \mathcal{E}_t, \numsofar_t = \sofarval] - \frac{2\gamma}{n} \right) \\
	& \geq \Pr[\win(\rej_t) \mid \mathcal{E}_t, \numsofar_t = \sofarval] - \frac{2\gamma}{n} - \frac{1}{n} . \qedhere
	\end{align*}
\end{proof}

\begin{proof}[Proof of Theorem~\ref{thm:poison}]
	We will use the well-known fact (e.g.,~\cite{gallager}) that for any fixed $n$, conditional on $\numsofar_1 = n$, the arrivals are distributed as $n$ independent draws from the uniform distribution on $[0, 1]$.
	Given any $\epsilon > 0$, we use Theorem~\ref{theorem:timethresholding} as follows. The theorem tell us that for $\epsilon' = \frac{\epsilon}{2}$, there is an $n_0$ such that for all $n > n_0$, $\Pr[\win(\pol^*) \geq \Pr[\win(\opt_n)] - \epsilon'$, where $\opt_n$ is the optimal policy on $n$ arrivals from the uniform distribution.
	
	Now choose $\delta_0$ such that $\delta_0 - \sqrt{\frac{\delta_0}{\epsilon'}} > n_0$. This implies that for all $\delta > \delta_0$ by Chebyshev's inequality 
	\[
	\Pr[\numsofar_1 \leq n_0] \leq \Pr\left[\numsofar_1 \leq \delta - \sqrt{\frac{\delta}{\epsilon'}}\right] \leq  \Pr\left[\left\lvert \numsofar_1 - \delta \right\rvert \geq \sqrt{\frac{\delta}{\epsilon'}}\right] \leq \epsilon' .
	\]
	
	Furthermore, 
	\begin{align*}
	\Pr[\win(\opt_\delta)] &= \sum_{n=0}^\infty \Pr[\numsofar_1 = n] \cdot \Pr[\win(\opt_\delta) \mid \numsofar_1 = n] \\
	&\leq \sum_{n=0}^\infty \Pr[\numsofar_1 = n] \cdot \Pr[\win(\opt_n) \mid \numsofar_1 = n],
	\end{align*}
	because in particular $\opt_n$ could simply simulate $\opt_\delta$ for any $n$.
	
	In combination, this gives us
	\begin{align*}
	\Pr[\win(\pol^*)] & = \sum_{n=0}^\infty \Pr[\numsofar_1 = n] \cdot \Pr[\win(\pol^*) \mid \numsofar_1 = n] \\
	& \geq \sum_{n=n_0 + 1}^\infty \Pr[\numsofar_1 = n] \cdot \left( \Pr[\win(\opt_n) \mid \numsofar_1 = n] - \epsilon' \right) \\
	& \geq \sum_{n=0}^\infty \Pr[\numsofar_1 = n] \cdot \left( \Pr[\win(\opt_n) \mid \numsofar_1 = n] - \epsilon' \right) - \Pr[\numsofar_1 \leq n_0] \\
	& \geq \Pr[\win(\opt_\delta)] - 2 \epsilon' = \Pr[\win(\opt_\delta)] - \epsilon . \qedhere
	\end{align*}
\end{proof}

\section{Proofs missing from Section~\ref{SEC:EXP}}
\label{app:p}

%\subsection{Proofs missing from Subsection~\ref{subsec:optima}}
We compute the probabilities of the events we used to partition the probability space. These are given by Observations~\ref{obs1} and~\ref{obs2}.

\begin{obs}\label{obs1}
	$\Pr[\mir_{\theta}=k+1 \mid A_1 > \theta] = \theta(1-\theta)^{k-1}.$
\end{obs}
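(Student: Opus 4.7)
My plan is to reduce to the uniform arrival setting (which is valid in this case study because the proof of Theorem~\ref{thm:closedform} already invokes Theorem~\ref{thm:poison} to replace Poisson arrivals by uniform ones), and then exploit independence of arrival times across candidates together with the fact that the overall ranking is independent of the arrival times.

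Under uniform arrivals on $[0,1]$, the arrival times $A_1, \ldots, A_n$ of the candidates (indexed here by true rank) are i.i.d.\ uniform on $[0,1]$. The event $\mir_\theta = k+1$ says that the best candidate to arrive by time $\theta$ has true rank exactly $k+1$. Equivalently,
\begin{equation*}
\{\mir_\theta = k+1\} \;=\; \{A_{k+1} \leq \theta\} \;\cap\; \bigcap_{j=2}^{k} \{A_j > \theta\},
\end{equation*}
since the candidate of rank $k+1$ must have arrived by $\theta$ while no candidate of rank $2, \ldots, k$ has (the rank-$1$ candidate is already excluded by the conditioning, and candidates of rank $>k+1$ are irrelevant).

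Because $A_1, \ldots, A_n$ are mutually independent, conditioning on $A_1 > \theta$ does not change the joint distribution of $(A_2, \ldots, A_n)$; these remain i.i.d.\ uniform on $[0,1]$. Each therefore independently lies in $[0,\theta]$ with probability $\theta$ and in $(\theta, 1]$ with probability $1-\theta$. Consequently,
\begin{equation*}
\Pr[\mir_\theta = k+1 \mid A_1 > \theta]
= \Pr[A_{k+1} \leq \theta] \cdot \prod_{j=2}^{k} \Pr[A_j > \theta]
= \theta (1-\theta)^{k-1}.
\end{equation*}

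There is no real obstacle here beyond being careful about what the conditioning does: the only step worth pausing over is justifying that conditioning on $A_1 > \theta$ leaves the marginals of the other arrivals untouched, which is immediate from independence. The rest is a product of independent Bernoulli events.
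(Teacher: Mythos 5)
Your proof is correct and takes essentially the same route as the paper: decompose $\{\mir_\theta = k+1\}$ into candidate $k+1$ arriving by $\theta$ and candidates $2,\ldots,k$ arriving after $\theta$ (candidate $1$ being handled by the conditioning), then multiply independent uniform probabilities. The paper's version is just a terser statement of the identical argument.
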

\begin{proof}
	For $k+1$ to be the best in $[0,\theta]$ it must be in $[0,\theta]$, and $1, \ldots, k$ have to be in $(\theta,1]$. By conditioning, this is already true for candidate $1$, so there remain $k-1$, whose arrivals are independently uniformly distributed on $[0, 1]$.
\end{proof}

\begin{obs}\label{obs2}
	$\Pr[\live(\theta)] = \frac{1}{\theta}\int_{x=0}^{\theta} e^{-\lambda(\theta-x)} dx$
\end{obs}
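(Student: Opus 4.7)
The plan is to condition on the arrival time $X$ of $\mir_\theta$, the best candidate arriving in $[0,\theta]$, and then use independence of its waiting time $L$ from everything else. Given $X = x$, we have $\live(\theta)$ precisely when $L \geq \theta - x$, and since $L \sim \text{Exp}(\lambda)$ is drawn independently of arrival times and ranks, this occurs with probability $e^{-\lambda(\theta - x)}$. Hence by the law of total probability,
\[
\Pr[\live(\theta)] \;=\; \int_0^\theta e^{-\lambda(\theta - x)} f_X(x)\, dx,
\]
where $f_X$ is the density of $X$.

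The substantive step is to show that $X$ is distributed uniformly on $[0, \theta]$. Since we are in the Poisson-arrivals regime with $\pois \to \infty$ invoked throughout Section~\ref{SEC:EXP}, the event that at least one candidate arrives in $[0,\theta]$ has probability tending to $1$, so I can condition on it without loss. Conditioned on there being $k \geq 1$ arrivals in $[0,\theta]$, these arrival times are i.i.d.\ uniform on $[0,\theta]$ (a standard property of the Poisson process and uniform arrivals), and since the relative ranks $\rank_1, \ldots, \rank_k$ are drawn independently of the arrival times, the identity of the best among these $k$ candidates is uniformly random over $\{1,\ldots,k\}$. Therefore $X$ is distributed as a uniformly chosen element from $k$ i.i.d.\ $\text{Unif}[0,\theta]$ samples, which is itself $\text{Unif}[0,\theta]$; this is independent of the value of $k$, so the conclusion $f_X(x) = 1/\theta$ on $[0,\theta]$ holds unconditionally on $k$.

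Substituting $f_X(x) = 1/\theta$ into the integral yields the claimed formula
\[
\Pr[\live(\theta)] \;=\; \frac{1}{\theta}\int_{0}^{\theta} e^{-\lambda(\theta-x)}\, dx.
\]
The only potential obstacle is the symmetry argument for the distribution of $X$; once one accepts the standard Poisson-process fact about conditional uniformity of arrivals together with the rank/arrival-time independence built into the model in Section~2, the rest is immediate.
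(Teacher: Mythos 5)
Your proof is correct and is essentially the paper's argument: the paper obtains the formula by noting that $\mir_\theta$ has the same (conditional) arrival and departure distributions as the overall best candidate conditioned on arriving before $\theta$, and then invoking Lemma~\ref{lemma:1}, which performs exactly the integration of $e^{-\lambda(\theta-x)}$ against the uniform density $1/\theta$ that you carry out directly. Your explicit justification that the arrival time of $\mir_\theta$ is uniform on $[0,\theta]$ (conditional uniformity of arrivals plus rank/arrival independence) is the step the paper leaves implicit, and it is sound.
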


\begin{proof}
	The event that $\live(\theta) = 1$ is the same as the event that $\mir_\theta$ survives from its arrival time $A_{\mir_\theta}$ until $\theta$. Since the best candidate and $\mir_\theta$ have the same departure distribution, it holds that $\Pr[\live(\theta)] = \Pr[\win \mid A_1 \leq \theta] = \frac{1}{\theta}\int_{x=0}^{\theta} e^{-\lambda(\theta-x)} dx$.
\end{proof}

The following lemma takes care of case  (1) in the proof of Theorem~\ref{thm:closedform}.

\begin{lemma}\label{lemma:1}	
	$\Pr[ \win \mid A_1 \leq \theta] = \frac{1}{\theta}\int_{x=0}^{\theta} e^{-\lambda(\theta-x)} dx.$
\end{lemma}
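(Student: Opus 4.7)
The plan is to reduce this probability to the event that the overall best candidate is still present in the system at time $\theta$, and then evaluate that probability directly from the arrival and waiting distributions.

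First, I will argue that, conditional on $A_1 \leq \theta$, the event $\win(\pol_\theta)$ coincides with the event $\{D_1 > \theta\}$. One direction is immediate: if $D_1 \leq \theta$, then candidate $1$ departed at or before the threshold, so by definition $\pol_\theta$ rejected him, and the best candidate cannot be selected. For the other direction, suppose $D_1 > \theta$. I claim no candidate $i \neq 1$ is accepted by $\pol_\theta$ before time $D_1$. Any candidate $i$ with $A_i > A_1$ is dominated by candidate $1$, hence never best-so-far. Any candidate $i$ with $A_i < A_1$ stops being best-so-far the moment candidate $1$ arrives, so in order to be best-so-far at his own departure he must satisfy $D_i \leq A_1 \leq \theta$; but then $\pol_\theta$ rejects him on departure. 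Consequently, at time $D_1 > \theta$, candidate $1$ is still (uniquely) best-so-far, and $\pol_\theta$ accepts, yielding $\win(\pol_\theta)$.

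Second, I will compute $\Pr[D_1 > \theta \mid A_1 \leq \theta]$ using the distributional assumptions. Since arrival times are i.i.d.\ uniform on $[0,1]$ (by the reduction invoked in the proof of Theorem~\ref{thm:closedform}), the conditional distribution of $A_1$ given $A_1 \leq \theta$ is uniform on $[0,\theta]$; and $L_1$ is independent with $\Pr[L_1 > s] = e^{-\lambda s}$. Hence, for each $x \in [0,\theta]$,
\[
\Pr[D_1 > \theta \mid A_1 = x] = \Pr[L_1 > \theta - x] = e^{-\lambda(\theta - x)}.
\]
Integrating against the conditional density $1/\theta$ of $A_1$ yields
\[
\Pr[\win \mid A_1 \leq \theta] \;=\; \frac{1}{\theta} \int_{0}^{\theta} e^{-\lambda(\theta - x)}\,dx,
\]
as claimed.

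The main obstacle is the reductive step: carefully ruling out that some other candidate is accepted before $D_1$. This requires the two-sided case distinction on whether the candidate arrived before or after $1$, together with the observation that the policy discards every departure at or before $\theta$. The probability computation itself is a one-line integration once the reduction is in place.
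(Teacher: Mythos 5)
Your proposal is correct and follows essentially the same route as the paper's proof: reduce $\win$ (conditional on $A_1\leq\theta$) to the survival event $\{L_1 > \theta - A_1\}$, then integrate the exponential tail against the conditional uniform density $1/\theta$. The only difference is that you spell out carefully why no other candidate can be accepted before $D_1$, a step the paper's proof asserts without justification, so your version is if anything slightly more complete.
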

\begin{proof}
	If the best candidate arrives at time $a_1$ before $\theta$, we succeed iff she remains until after $\theta$, i.e., if $L_1 \geq \theta-a_1$: 
	
	\begin{align*}\Pr[\win | A_1 \leq \theta] &= \Pr[ L_1 \geq \theta - a_1| A_1 = a_1, a_1 \leq \theta]\\
	&= \frac{\Pr[ L_1 \geq \theta - a_1 \wedge A_1 = a_1 \wedge a_1 \leq \theta ]}{\Pr[ A_1 \leq \theta]}\\
	& = \frac{ \int_{x=0}^{\theta} f_{ \mathcal{U}[0,1] } (x) e^{-\lambda(\theta-x)} dx } {\theta} \\
	& = \frac{1}{\theta}\int_{x=0}^{\theta} e^{-\lambda(\theta-x)} dx,
	\end{align*}
	where $f_{ \mathcal{U}[0,1] } (x)$ is the pdf of the uniform distribution in $[0,1]$.
	%The normalization factor of $1/\theta$  is because the pdf of $a_1|a_1 \leq \theta$ is $1/\theta$.
\end{proof}

We need the following lemma for case  (2) of the proof of Theorem~\ref{thm:closedform}.
\begin{lemma}\label{lemma:2}
	$\Pr[\win \mid A_1 > \theta, \live(\theta) ] =  \frac{1}{1-\theta}\int_{x=\theta}^{1} e^{-\lambda(x-\theta)} dx.$
	%where $r$ is a random variable drawn uniformly at random from $R$.
\end{lemma}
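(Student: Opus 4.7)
The plan is to first condition on $A_1 = x$ for some $x \in (\theta, 1]$ (together with $\live(\theta)$), compute the conditional success probability in closed form using memorylessness, and then integrate against the conditional density of $A_1$.

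\textbf{Step 1: Characterize the success event.} Under the policy $\pol_\theta$, we succeed iff the first best-so-far candidate to depart after time $\theta$ is candidate $1$. Conditional on $A_1 = x$ and $\live(\theta)$, this is equivalent to: no candidate who is ever best-so-far during $(\theta, x)$ departs inside $(\theta, x)$. Indeed, if the chain of best-so-far ``reigns'' covers $(\theta, x)$, then candidate $1$ arrives at $x$ and becomes best-so-far, and (regardless of who dies next) they are the first best-so-far to depart after $\theta$; conversely, any premature death of a best-so-far in $(\theta, x)$ causes the policy to accept a non-optimal candidate.

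\textbf{Step 2: Chain-survival calculation via memorylessness.} Further condition on the (random) arrival times $\tau_1 < \tau_2 < \cdots < \tau_k$ in $(\theta, x)$ at which the best-so-far changes (with $\tau_0 := \theta$ and $\tau_{k+1} := x$). By memorylessness, the remaining lifetime of $\mir_\theta$ at time $\theta$ (given $\live(\theta)$) is $\text{Exp}(\lambda)$, and each successor best-so-far brings an independent $\text{Exp}(\lambda)$ lifetime. The event that no best-so-far dies in $(\theta, x)$ is the intersection, over $i = 0, \ldots, k$, of the events ``the $i$-th best-so-far's fresh $\text{Exp}(\lambda)$ lifetime exceeds $\tau_{i+1} - \tau_i$.'' Since lifetimes are independent across candidates, this probability factorizes as
\[
\prod_{i=0}^{k} e^{-\lambda(\tau_{i+1} - \tau_i)} \;=\; e^{-\lambda(x-\theta)},
\]
which, remarkably, does not depend on $k$ or on the $\tau_i$. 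Therefore we may uncondition on the arrival pattern to conclude $\Pr[\win \mid A_1 = x, \live(\theta)] = e^{-\lambda(x-\theta)}$.

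\textbf{Step 3: Integrate over $A_1$.} Arrivals are i.i.d.\ uniform on $[0,1]$, and candidate $1$'s arrival time is independent of $\mir_\theta$'s lifetime (the event defining $\live(\theta)$ depends only on $\mir_\theta$, who is distinct from candidate $1$ under the conditioning $A_1 > \theta$). Hence $A_1 \mid \{A_1 > \theta, \live(\theta)\}$ is uniform on $(\theta, 1]$ with density $1/(1-\theta)$, and
\[
\Pr[\win \mid A_1 > \theta, \live(\theta)] \;=\; \int_{\theta}^{1} \frac{1}{1-\theta}\, e^{-\lambda(x-\theta)}\, dx \;=\; \frac{1}{1-\theta} \int_{\theta}^{1} e^{-\lambda(x-\theta)}\, dx,
\]
as claimed.

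The only delicate point is Step 2: one must justify that the $\text{Exp}(\lambda)$ ``freshness'' of each best-so-far's remaining lifetime survives the conditioning on arrival-pattern and on having reached each $\tau_i$. This is where memorylessness is essential—it makes the telescoping of exponents work and eliminates the dependence on both the number and positions of the reset times.
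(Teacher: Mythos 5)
Your proof is correct and follows essentially the same route as the paper: both reduce success (given $A_1=x>\theta$ and $\live(\theta)$) to a contiguous chain of best-so-far survivals over $(\theta,x)$, use memorylessness to show this has probability $e^{-\lambda(x-\theta)}$, and integrate against the uniform conditional density of $A_1$ on $(\theta,1]$. The only cosmetic difference is that you compute the chain-survival probability as an explicit telescoping product over the record times, whereas the paper realizes the same cancellation via a coupling that swaps departure times between successive best-so-far candidates.
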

\begin{proof}
	Let $a_1$ be the arrival time of the best candidate, i.e. $A_1 = a_1$.	
	First, assume that no candidate that is better than  $\mir_\theta$ arrives between $\theta$ and $a_1$. Then we succeed iff the best candidate arrives before $\mir_\theta$ leaves. As the departures distribution is memoryless, given that $\mir_\theta$ hasn't already departed at $\theta$, we can draw the length of its stay $L_{\mir_\theta}$ once again from $\mathcal{L}$ at time $\theta$. 
	
	To remove the assumption above, note that if a candidate $s$, that is better than $\mir_\theta$ arrives between $\theta$ and the departure time of $\mir_\theta$, call it $d$, we can simply use  $d$ as the departure time for $s$ and draw a new departure time for $\mir_\theta$, since distribution $\mathcal{L}$ is memoryless.   
	As the event $\live(\theta)$ is independent of the events $L_1 \geq A_1 - \theta$ and $A_1>\theta$, 
	
	\begin{align*} \Pr[\win|A_1 > \theta \wedge \live(\theta)] &=\Pr[L_1 \geq A_1 - \theta | A_1 > \theta \wedge \live(\theta) ]\\
	&= \frac{\Pr[L_1 \geq A_1-\theta \wedge A_1>\theta]}{\Pr[A_1>\theta]}\\
	&= \frac{1}{1-\theta}\int_{x=\theta}^{1} e^{-\lambda(x-\theta)} dx. \qedhere
	\end{align*}
\end{proof}

Finally, we need the following lemma for case  (4) of the proof of Theorem~\ref{thm:closedform}.

\begin{lemma}\label{lemma:3}	
	$\Pr[\win \mid A_1 > \theta \wedge \dead(\theta) \wedge \mir_{\theta} = 2] = 1$, and for $k \geq 2$:
	$$\Pr[\win \mid A_1 > \theta \wedge \dead(\theta) \wedge \mir_{\theta} = k+1] = \frac{1}{k} + \frac{1}{\lambda(1-\theta)}  - \frac{(k-1)}{\lambda^k (1-\theta)^k} \int_{x=0}^{\lambda(1-\theta)} e^{-x} x^{k-2} dx. $$
\end{lemma}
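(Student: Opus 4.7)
The plan is to observe that, conditional on the stated event, the process from time $\theta$ onward is equivalent to running the ``no waiting'' policy on just the top $k$ candidates over the interval $(\theta,1]$, and then to invoke the closed-form success probability derived in Appendix~\ref{nowaiting}.

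First I would establish the reduction. Conditioning on $\mir_\theta=k+1$ forces ranks $1,\dots,k$ all to arrive in $(\theta,1]$ and rank $k+1$ to arrive in $[0,\theta]$. Since $\arrtimedist$ is uniform and the per-candidate triples $(A_i,L_i,\text{rank}_i)$ are i.i.d., the conditional joint law of the arrival times of ranks $1,\dots,k$ is i.i.d.\ uniform on $(\theta,1]$, and their exponential waiting times remain i.i.d.\ $\mathrm{Exp}(\lambda)$ and independent of everything else; the further conditioning on $A_1>\theta$ is already implied by $\mir_\theta=k+1$, and the conditioning on $\dead(\theta)$ concerns only $\mir_\theta$'s own waiting time. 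After $\theta$, $\mir_\theta$ has already departed and can no longer be accepted, while any candidate of rank strictly greater than $k+1$ is never the best-so-far (the best-so-far is $\mir_\theta$ until the first rank-$\leq k$ arrival, and a rank-$\leq k$ candidate thereafter). Hence $\pol_\theta$ only ever interacts with the top $k$ candidates on $(\theta,1]$, and does so precisely as the no-waiting policy would, giving
\[
\Pr[\win \mid A_1>\theta \wedge \dead(\theta) \wedge \mir_\theta=k+1] \;=\; \Pr[\winn(k,[\theta,1])].
\]

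The base case $k=1$ is then immediate: the lone top candidate arrives in $(\theta,1]$, becomes best-so-far upon arrival, and is accepted on departure, giving probability $1$. For $k\ge 2$ I would simply substitute the closed-form expression for $\Pr[\winn(k,[\theta,1])]$ computed in Appendix~\ref{nowaiting}, which is derived by enumerating the sequence of ``record'' candidates (successive best-so-fars) among the $k$ arrivals, using memorylessness of the exponential waiting times to compute the probability that each record survives until the next record arrives, and then integrating over the ordered uniform arrival times.

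The main obstacle is the appendix computation rather than the reduction: one must integrate products of exponential survival factors against the joint density of ordered uniform arrival times, and then average over the uniformly random rank permutation to obtain the single-integral expression $\int_0^{\lambda(1-\theta)} e^{-x}\,x^{k-2}\,dx$. A secondary subtlety on the reduction side is to verify that $\dead(\theta)$ does not distort the joint law of the top-$k$ candidates' variables; this holds because $\dead(\theta)$ is measurable with respect to $\mir_\theta$'s own arrival and waiting time alone, so by mutual independence of the per-candidate triples the remaining $k$ triples are unaffected by this conditioning.
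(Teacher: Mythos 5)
Your proposal is correct and follows essentially the same route as the paper: condition away everything but the top $k$ candidates, observe that after $\theta$ only they can be best-so-far and their arrivals are i.i.d.\ uniform on $(\theta,1]$ with fresh exponential waiting times, and then invoke the no-waiting-policy computation of Claim~\ref{claim:ll}. Your version merely spells out the independence/conditioning details that the paper's proof leaves implicit.
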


\begin{proof}
	The first part of the Lemma~\ref{lemma:3} is trivial. In order to prove the second part, we observe the following. Given that $\dead(\theta)$ and $\mir_{\theta} = k+1$, only the best $k$ candidates can be selected by the algorithm, all of which arrive in $[\theta,1]$. Therefore, we can ignore all but the best $k$ candidates and compute the probability of success in a scenario where $k$ candidates arrive uniformly at random in the interval $[\theta,1]$ and we select the first candidate that is the best so far at the time of its departure. Applying Claim~\ref{claim:ll} completes the proof.\end{proof}

The remaining claims of this appendix are technical claims.

\begin{claim}\label{claim:sum of integrals}
	\[\lim_{n \rightarrow \infty} \sum_{k=2}^{n} \frac{(k-1)}{\lambda^k } \int_{t=0}^{\lambda(1-\theta)} e^{-t} t^{k-2} dt = \int_{t=0}^{\lambda(1-\theta)} \frac{e^{-t}}{(t-\lambda)^2} dt \]
\end{claim}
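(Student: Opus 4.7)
The plan is to interchange the sum and the integral and then recognize the inner sum as the derivative of a geometric series. Since every term in the sum is nonnegative and the partial sums are monotone increasing in $n$, the monotone convergence theorem justifies the interchange without any fine integrability checks:
\[
\lim_{n \to \infty} \sum_{k=2}^{n} \frac{k-1}{\lambda^k} \int_{0}^{\lambda(1-\theta)} e^{-t} t^{k-2}\, dt
= \int_{0}^{\lambda(1-\theta)} e^{-t} \sum_{k=2}^{\infty} \frac{k-1}{\lambda^k} t^{k-2}\, dt.
\]

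Next, I would simplify the inner sum by the substitution $u = t/\lambda$ and the shift $j = k - 2$:
\[
\sum_{k=2}^{\infty} \frac{k-1}{\lambda^k} t^{k-2}
= \frac{1}{\lambda^2} \sum_{j=0}^{\infty} (j+1) u^{j}
= \frac{1}{\lambda^2} \cdot \frac{1}{(1-u)^2}
= \frac{1}{(\lambda - t)^2}
= \frac{1}{(t - \lambda)^2},
\]
valid whenever $t \in [0, \lambda(1-\theta)]$ since then $u = t/\lambda \le 1 - \theta < 1$, so the geometric series (and its derivative) converge absolutely. Substituting back yields the right-hand side of the claim.

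The only step requiring a moment of thought is the interchange of the limit and the integral, but because the summands $\frac{k-1}{\lambda^k} t^{k-2} e^{-t}$ are all nonnegative, monotone convergence applies directly; alternatively, on the compact interval $[0, \lambda(1-\theta)]$ the series converges uniformly (its terms are bounded by $(k-1)(1-\theta)^{k-2}/\lambda^2$, whose sum is finite), so dominated convergence works equally well. I do not anticipate any real obstacle beyond writing down this short justification.
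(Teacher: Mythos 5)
Your proposal is correct and follows essentially the same route as the paper: swap the sum with the integral, recognize the inner sum as the derivative of a geometric series converging to $\lambda^2/(t-\lambda)^2$ for $t/\lambda \le 1-\theta < 1$, and conclude. You are in fact slightly more careful than the paper in explicitly justifying the interchange of the limit and the integral (via monotone convergence or the Weierstrass bound), a step the paper passes over silently.
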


\begin{proof}
	\[ \sum_{k=2}^{n} \frac{(k-1)}{\lambda^k } \int_{t=0}^{\lambda(1-\theta)} e^{-t} t^{k-2} dt = \int_{t=0}^{\lambda(1-\theta)} \frac{e^{-t}}{\lambda^2 } \left( \sum_{k=2}^n \left( \frac{t}{\lambda} \right)^{k-2} (k-1) \right) dt.\]
	
	As $n$ goes to infinity, $\sum_{k=0}^{n-2} \left( \frac{t}{\lambda} \right)^{k} (k+1)$ converges to $ \frac{\lambda^2}{(t-\lambda)^2}$, for $\left|\frac{t}{\lambda}\right|<1$.
	Therefore, the previous expression becomes:
	$$\int_{t=0}^{\lambda(1-\theta)} \frac{e^{-t}}{\lambda^2 } \frac{\lambda^2}{(t-\lambda)^2} dt = \int_{t=0}^{\lambda(1-\theta)} \frac{e^{-t}}{(t-\lambda)^2} dt. $$
\end{proof}

\begin{claim}\label{claim:colors}
	\begin{multline*}
	\Pr[\win]
	=  \frac{1}{\lambda}\left( 1-e^{-\lambda \theta}\right)  + \frac{1}{\lambda^2 \theta} \left( e^{\lambda\theta} - 1 \right) \left( e^{- \lambda \theta} - e^{-\lambda} \right)  + \Pr\left[\winn(n,[\theta,1])\right] \cdot (1-\theta)^n \\
	+  \frac{1}{\lambda} (1-\theta) \left( \lambda\theta - 1  + e^{-\lambda \theta} \right) + \frac{1}{\lambda} \left( \lambda \theta - 1  + e^{-\lambda \theta}  \right)  \sum_{k=2}^{n}\Pr\left[\winn(k,[\theta,1])\right] (1-\theta)^{k}.
	\end{multline*}
	
\end{claim}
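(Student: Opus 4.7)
The plan is to start from the four–case partition already written out in Equation~\eqref{blue1} and evaluate each of the four contributions using the lemmas and observations that immediately precede the claim; the result is then a routine algebraic simplification. I would handle the four cases in order.

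For case (1), Lemma~\ref{lemma:1} gives $\Pr[\win\mid A_1\leq\theta]=\frac{1}{\theta\lambda}(1-e^{-\lambda\theta})$, and $\Pr[A_1\leq\theta]=\theta$, so after the $\theta$'s cancel I get the first term $\frac{1}{\lambda}(1-e^{-\lambda\theta})$. For case (2), I multiply the three factors from Lemma~\ref{lemma:2}, Observation~\ref{obs2} and $\Pr[A_1>\theta]=1-\theta$ to obtain $\frac{1}{\theta\lambda^2}(1-e^{-\lambda(1-\theta)})(1-e^{-\lambda\theta})$; expanding both sides one checks the algebraic identity
\[
(1-e^{-\lambda(1-\theta)})(1-e^{-\lambda\theta})=1-e^{-\lambda\theta}-e^{-\lambda(1-\theta)}+e^{-\lambda}=(e^{\lambda\theta}-1)(e^{-\lambda\theta}-e^{-\lambda}),
\]
which converts the expression into the second term of the claim. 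Case (3) is immediate: conditioned on all $n$ candidates arriving after $\theta$ (probability $(1-\theta)^n$), the policy $\pol_\theta$ is exactly the no-waiting policy on $n$ arrivals uniform on $[\theta,1]$, giving $\Pr[\winn(n,[\theta,1])]\cdot(1-\theta)^n$.

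The work is in case (4). I would use Observation~\ref{obs1} for $\Pr[\mir_\theta=k+1\mid A_1>\theta]=\theta(1-\theta)^{k-1}$, and Observation~\ref{obs2} together with $\Pr[\dead(\theta)]=1-\Pr[\live(\theta)]=\frac{\theta\lambda-1+e^{-\lambda\theta}}{\theta\lambda}$, so that the product of the three ``probability'' factors in \eqref{blue1} collapses to $\frac{\theta\lambda-1+e^{-\lambda\theta}}{\lambda}(1-\theta)^{k}$ (the $\theta$ from Observation~\ref{obs1} cancels the $\theta$ in the denominator of $\Pr[\dead(\theta)]$, and one factor of $(1-\theta)$ comes from $\Pr[A_1>\theta]$). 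For the remaining conditional success probability, I invoke the identification $\Pr[\win\mid A_1>\theta,\dead(\theta),\mir_\theta=k+1]=\Pr[\winn(k,[\theta,1])]$ that is stated (and justified) in Lemma~\ref{lemma:3}: once we condition on $\mir_\theta=k+1$ being dead, only the top-$k$ candidates—which are i.i.d.\ uniform on $[\theta,1]$—can ever be accepted by $\pol_\theta$, so the sub-problem is a no-waiting problem on $k$ such candidates. The $k=1$ term of the resulting sum then contributes $\frac{1}{\lambda}(1-\theta)(\lambda\theta-1+e^{-\lambda\theta})$ (using $\Pr[\winn(1,[\theta,1])]=1$), and I leave the $k\geq 2$ terms inside the sum, producing the remaining two terms of the claim.

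The only genuinely delicate point, and the main obstacle, is keeping the conditioning straight in case (4): one has to argue that conditionally on $\{A_1>\theta,\;\dead(\theta),\;\mir_\theta=k+1\}$ the top-$k$ candidates truly are i.i.d.\ uniform on $[\theta,1]$ and that $\pol_\theta$ will never accept a candidate outside this top-$k$, so that the reduction to $\Pr[\winn(k,[\theta,1])]$ is exact. Once this reduction is in hand (it is essentially the content of the proof of Lemma~\ref{lemma:3}), the rest of the proof is bookkeeping: substitute, cancel the $\theta$ factors, isolate $k=1$, and apply the $(1-e^{-\lambda(1-\theta)})(1-e^{-\lambda\theta})=(e^{\lambda\theta}-1)(e^{-\lambda\theta}-e^{-\lambda})$ identity in case (2). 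Summing the four simplified contributions yields the claimed expression.
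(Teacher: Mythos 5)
Your proposal is correct and follows essentially the same route as the paper's proof: substitute the values from Observations~\ref{obs1} and~\ref{obs2} and Lemmas~\ref{lemma:1}--\ref{lemma:3} into the partition~\eqref{blue1}, split off the $\mir_\theta=2$ term, and simplify (including the same product identity in case (2)). The computations you sketch all check out.
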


\begin{proof}
	
	From Observations~\ref{obs1} and~\ref{obs2} and Lemmas~\ref{lemma:1},~\ref{lemma:2} and~\ref{lemma:3}, we have
	\begin{align*}
	\Pr[\win]=& {\color{purple} \Pr[\win|A_1 \leq \theta] \cdot \Pr[A_1 \leq \theta]} \\
	+& {\color{orange} \Pr[\win|A_1 > \theta \wedge \live(\theta)]\cdot \Pr[A_1 > \theta]\cdot \Pr[\live(\theta)] }\\
	+& {\color{brown} \Pr[\win| A_i > \theta, \forall i \in \{1,\dots,n \}]\cdot \Pr\left[ A_i > \theta, \forall i \in \{1,\dots,n \} \right]} \\
	+& {\color{blue}\Pr[\win|A_1 > \theta \wedge \dead(\theta) \wedge \mir_{\theta}=2] }\\ & {\color{blue}\cdot \Pr[A_1 > \theta]\cdot \Pr[\dead(\theta)]\cdot \Pr[\mir_{\theta}=2| A_1 > \theta]}\\
	+ &\sum_{k=2}^{n}\Pr[\win|A_1 > \theta \wedge \dead(\theta) \wedge \mir_{\theta}=k+1] \cdots \\ &\cdot \Pr[A_1 > \theta]\cdot \Pr[\dead(\theta)]\cdot \Pr[\mir_{\theta}=k+1| A_1 > \theta]\\
	=&  {\color{purple} \int_{x=0}^{\theta} e^{-\lambda(\theta-x)} dx } + {\color{orange} \frac{1}{1-\theta} \int_{x=\theta}^{1} e^{-\lambda(x-\theta)} dx \cdot (1-\theta)\cdot \frac{1}{\theta}\int_{x=0}^{\theta} e^{-\lambda(\theta-x)} dx }\\
	+& {\color{brown} \Pr\left[\winn(n,[\theta,1])\right] \cdot (1-\theta)^n }\\
	+& {\color{blue}1 \cdot (1-\theta) \cdot \left( 1 - \frac{1}{\theta}\int_{x=0}^{\theta} e^{-\lambda(\theta-x)} dx \right) \cdot \theta }\\
	+&\sum_{k=2}^{n}\Pr\left[\winn(k,[\theta,1])\right] (1-\theta) \left( 1-\frac{1}{\theta}\int_{x=0}^{\theta} e^{-\lambda(\theta-x)}dx\right) \theta(1-\theta)^{k-1}\\
	=&  {\color{purple} \frac{1}{\lambda}\left( 1-e^{-\lambda \theta}\right) }  + {\color{orange} \frac{1}{\lambda} \left( 1 - e^{ -\lambda (1-\theta) } \right) \cdot \frac{1}{\lambda \theta} \left( 1-e^{-\lambda \theta}\right) }\\
	+& {\color{brown} \Pr\left[\winn(n,[\theta,1])\right] \cdot (1-\theta)^n }\\
	+& {\color{blue}\theta (1-\theta) \cdot \left( 1 - \frac{1}{\lambda \theta} \left( 1-e^{-\lambda \theta}\right) \right)} \\
	+&\sum_{k=2}^{n}\Pr\left[\winn(k,[\theta,1])\right] \left( 1 - \frac{1}{\lambda \theta} \left( 1-e^{-\lambda \theta}\right) \right) \theta(1-\theta)^{k} \\
	=&  {\color{purple} \frac{1}{\lambda}\left( 1-e^{-\lambda \theta}\right) }  + {\color{orange} \frac{1}{\lambda^2 \theta} \left( e^{\lambda\theta} - 1 \right) \left( e^{- \lambda \theta} - e^{-\lambda} \right) } \\
	+& {\color{brown} \Pr\left[\winn(n,[\theta,1])\right] \cdot (1-\theta)^n }\\
	+& {\color{blue} \frac{1}{\lambda} (1-\theta) \left( \lambda\theta - 1  + e^{-\lambda \theta} \right) }\\
	+& \frac{1}{\lambda} \left( \lambda \theta - 1  + e^{-\lambda \theta}  \right)  \sum_{k=2}^{n}\Pr\left[\winn(k,[\theta,1])\right] (1-\theta)^{k}. \\
	\end{align*}
	
\end{proof}
\begin{claim}\label{appendix:subsec:laststeps computation}
	
	\begin{align*}
	& \frac{1}{\lambda}\left( 1-e^{-\lambda \theta}\right)  + \frac{1}{\lambda^2 \theta} \left( e^{\lambda\theta} - 1 \right) \left( e^{- \lambda \theta} - e^{-\lambda} \right)  + \Pr\left[\winn(n,[\theta,1])\right] \cdot (1-\theta)^n \\
	&+  \frac{1}{\lambda} (1-\theta) \left( \lambda\theta - 1  + e^{-\lambda \theta} \right) + \frac{1}{\lambda} \left( \lambda \theta - 1  + e^{-\lambda \theta}  \right)  \sum_{k=2}^{n}\Pr\left[\winn(k,[\theta,1])\right] (1-\theta)^{k}\\
	=& \frac{1}{\lambda}\left( 1-e^{-\lambda \theta}\right)  + \frac{1}{\theta\lambda^2}\left( e^{\lambda\theta}-1\right) \left(e^{-\lambda \theta} - e^{-\lambda} \right) \\
	&+\frac{1}{\lambda}\left(\theta \lambda-1+e^{-\lambda \theta}\right) \left( \ln\left( \frac{1}{\theta} \right) + \frac{1-\theta}{\lambda \theta}  -\int_{x=0}^{\lambda(1-\theta)} \frac{e^{-x}}{(x-\lambda)^2} dx \right).
	\end{align*}
\end{claim}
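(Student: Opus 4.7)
The plan is to verify the claimed identity by taking the limit $n \to \infty$ of the left-hand side term by term and checking that it equals the right-hand side. First I would observe that the first two summands on the left-hand side, namely $\frac{1}{\lambda}(1-e^{-\lambda\theta})$ and $\frac{1}{\lambda^2 \theta}(e^{\lambda\theta}-1)(e^{-\lambda\theta}-e^{-\lambda})$, appear verbatim on the right-hand side and can be cancelled. Hence the problem reduces to showing
\[
\lim_{n\to\infty} \left[\Pr[\winn(n,[\theta,1])](1-\theta)^n + (1-\theta)C + C\sum_{k=2}^n \Pr[\winn(k,[\theta,1])](1-\theta)^k\right]
\]
\[
= C\left(\ln(1/\theta) + \tfrac{1-\theta}{\lambda\theta} - \int_0^{\lambda(1-\theta)}\tfrac{e^{-x}}{(x-\lambda)^2}\,dx\right),
\]
where $C = \frac{1}{\lambda}(\theta\lambda - 1 + e^{-\lambda\theta})$, which is the common prefactor appearing on both sides.

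Next I would substitute the closed form from Lemma~\ref{lemma:3}, namely $\Pr[\winn(k,[\theta,1])] = \frac{1}{k} + \frac{1}{\lambda(1-\theta)} - \frac{k-1}{\lambda^k (1-\theta)^k}\int_0^{\lambda(1-\theta)} e^{-x} x^{k-2}\,dx$ for $k\geq 2$. Multiplying through by $(1-\theta)^k$ and summing, the sum splits cleanly into three pieces:
\[
\sum_{k=2}^n \frac{(1-\theta)^k}{k}, \qquad \sum_{k=2}^n \frac{(1-\theta)^{k-1}}{\lambda}, \qquad \sum_{k=2}^n \frac{k-1}{\lambda^k}\int_0^{\lambda(1-\theta)} e^{-x} x^{k-2}\,dx.
\]
By ingredient (iii) listed in the main text, the first converges to $\theta - \ln\theta - 1$ (via $\sum_{k\geq 1} x^k/k = -\ln(1-x)$) and the second converges to $\frac{1-\theta}{\lambda\theta}$ (a geometric series). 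By Claim~\ref{claim:sum of integrals} (i.e.\ ingredient (ii)), the third converges to $\int_0^{\lambda(1-\theta)}\frac{e^{-x}}{(x-\lambda)^2}\,dx$. Ingredient (i) disposes of the leading boundary term $\Pr[\winn(n,[\theta,1])](1-\theta)^n$, since $\Pr[\winn(n,[\theta,1])] \le 1$ and $(1-\theta)^n \to 0$.

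Combining these, the limit of the bracketed expression becomes
\[
C\bigl[(1-\theta) + (\theta - \ln\theta - 1) + \tfrac{1-\theta}{\lambda\theta} - \textstyle\int_0^{\lambda(1-\theta)}\tfrac{e^{-x}}{(x-\lambda)^2}\,dx\bigr],
\]
and the final algebraic simplification $(1-\theta) + (\theta - \ln\theta - 1) = -\ln\theta = \ln(1/\theta)$ yields exactly the right-hand side. There is no real obstacle here; the argument is a bookkeeping exercise in matching the three precomputed limits with the three pieces produced by substituting Lemma~\ref{lemma:3}. The only place one must be a bit careful is the indexing offset $(1-\theta)^k$ vs.\ $(1-\theta)^{k-1}$ in the second sum, and the fact that the term $(1-\theta)C$ outside the sum is what absorbs the $-(1-\theta)$ contribution from the first sum's limit, so that only $-\ln\theta$ survives.
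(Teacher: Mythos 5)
Your proposal is correct and follows essentially the same route as the paper's own proof: take $n\to\infty$, discard the $\Pr[\winn(n,[\theta,1])](1-\theta)^n$ term, substitute the closed form from Lemma~\ref{lemma:3}, invoke Claim~\ref{claim:sum of integrals} together with the two elementary series limits, and absorb the $(1-\theta)$ prefactor term into $(\theta-\ln\theta-1)$ to obtain $\ln(1/\theta)$. No gaps; the bookkeeping, including the $(1-\theta)^{k-1}$ indexing in the geometric piece, matches the paper's computation.
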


\begin{proof}
	
	Taking $n$ to infinity gives $\Pr\left[\winn(n,[\theta,1])\right] \cdot (1-\theta)^n \rightarrow 0$, and we have:
	
	\begin{align*}
	&\Pr[\win] = \frac{1}{\lambda}\left( 1-e^{-\lambda \theta}\right)  + \frac{1}{\theta\lambda^2}\left( e^{\lambda\theta}-1\right) \left(e^{-\lambda \theta} - e^{-\lambda} \right) + \frac{1}{\lambda} (1-\theta) \left( \lambda\theta - 1  + e^{-\lambda \theta} \right) \\
	&\quad+\frac{1}{\lambda}\left(\theta \lambda-1+e^{-\lambda \theta}\right) \lim_{n \rightarrow \infty}\sum_{k=2}^{n} (1-\theta)^{k} \left( \frac{1}{k} + \frac{1}{\lambda(1-\theta)}  - \frac{(k-1)}{\lambda^{k} (1-\theta)^{k}} \int_{x=0}^{\lambda(1-\theta)} e^{-x} x^{k-2} dx  \right) \\
	&\qquad\qquad\quad~=\frac{1}{\lambda}\left( 1-e^{-\lambda \theta}\right)  + \frac{1}{\theta\lambda^2}\left( e^{\lambda\theta}-1\right) \left(e^{-\lambda \theta} - e^{-\lambda} \right) + \frac{1}{\lambda} (1-\theta) \left( \lambda\theta - 1  + e^{-\lambda \theta} \right) \\
	&\quad+\frac{1}{\lambda}\left(\theta \lambda-1+e^{-\lambda \theta}\right) \lim_{n \rightarrow \infty} \sum_{k=2}^{n} \left( \frac{(1-\theta)^{k}}{k} + \frac{(1-\theta)^{k-1}}{\lambda}  - \frac{k-1}{\lambda^{k} } \int_{x=0}^{\lambda(1-\theta)} e^{-x} x^{k-2} dx\right)  \\
	\end{align*}

	Combining Claim~\ref{claim:sum of integrals} with the fact that $\sum_{k=2}^n \frac{(1-\theta)^{k}}{k}$ and  $\sum_{k=2}^n \frac{(1-\theta)^{k-1}}{\lambda}$ converge to $\theta - \ln\theta - 1$ and $\frac{1-\theta}{\lambda \theta}$ respectively, as $n$ goes to infinity, we get that:
	
	\begin{align*}
	\Pr[\win] &= \frac{1}{\lambda}\left( 1-e^{-\lambda \theta}\right)  + \frac{1}{\theta\lambda^2}\left( e^{\lambda\theta}-1\right) \left(e^{-\lambda \theta} - e^{-\lambda} \right) + \frac{1}{\lambda} (1-\theta) \left( \lambda\theta - 1  + e^{-\lambda \theta} \right) \\
	&+\frac{1}{\lambda}\left(\theta \lambda-1+e^{-\lambda \theta}\right) \left( \theta - \ln\theta - 1 + \frac{1-\theta}{\lambda \theta}  -\int_{x=0}^{\lambda(1-\theta)} \frac{e^{-x}}{(x-\lambda)^2} dx \right) \\
	&= \frac{1}{\lambda}\left( 1-e^{-\lambda \theta}\right)  + \frac{1}{\theta\lambda^2}\left( e^{\lambda\theta}-1\right) \left(e^{-\lambda \theta} - e^{-\lambda} \right) \\
	&+\frac{1}{\lambda}\left(\theta \lambda-1+e^{-\lambda \theta}\right) \left( \ln\left( \frac{1}{\theta} \right) + \frac{1-\theta}{\lambda \theta}  -\int_{x=0}^{\lambda(1-\theta)} \frac{e^{-x}}{(x-\lambda)^2} dx \right). \qedhere
	\end{align*}
\end{proof}

\section{The No Waiting Policy}
\label{nowaiting}

Let us consider the modified arrival process in which only $k$ candidates arrive, with arrival times $\arrtime_i \sim_{i.i.d.} U[\theta,1]$. Again, $i$ denotes the rank of the candidate and each candidate $i$ stays for some time $\staytime_i \sim_{i.i.d} \exp(\lambda)$.

Let $\first = \min_{i \in \{ 2, \ldots, k \}} \arrtime_i$ be the random variable indicating the first arrival time of candidates $2, \ldots, k$. Let $f_{\arrtime_1}=\frac{1}{1-\theta}$ be the PDF of $\arrtime_1$. Let $f_{\first}$ be the PDF of $\first$.

\begin{claim}
	$f_{\first}(y) = \frac{k-1}{1-\theta}\left( \frac{1-y}{1-\theta} \right)^{k-2}$
\end{claim}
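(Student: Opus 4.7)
The plan is a direct computation: since $\first$ is the minimum of $k-1$ i.i.d.\ uniform random variables on $[\theta,1]$, I would first write down its survival function, and then differentiate the CDF to obtain the density.

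Concretely, the survival function of each $\arrtime_i$ (for $i \in \{2,\ldots,k\}$) is $\Pr[\arrtime_i > y] = \frac{1-y}{1-\theta}$ for $y \in [\theta, 1]$. By independence,
\[
\Pr[\first > y] \;=\; \prod_{i=2}^{k} \Pr[\arrtime_i > y] \;=\; \left(\frac{1-y}{1-\theta}\right)^{k-1},
\]
so the CDF of $\first$ on $[\theta,1]$ is $F_{\first}(y) = 1 - \left(\frac{1-y}{1-\theta}\right)^{k-1}$.

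Differentiating with respect to $y$ yields
\[
f_{\first}(y) \;=\; \frac{d}{dy} F_{\first}(y) \;=\; -(k-1)\left(\frac{1-y}{1-\theta}\right)^{k-2} \cdot \frac{-1}{1-\theta} \;=\; \frac{k-1}{1-\theta}\left(\frac{1-y}{1-\theta}\right)^{k-2},
\]
as claimed. There is no real obstacle here; the only thing to be careful about is the number of variables being minimized (it is $k-1$, namely indices $2,\ldots,k$), which accounts for the exponents $k-1$ in the survival function and $k-2$ in the density.
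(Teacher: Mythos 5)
Your proof is correct and follows exactly the same route as the paper: compute the survival function $\Pr[\first > y] = \left(\frac{1-y}{1-\theta}\right)^{k-1}$ using independence of the $k-1$ uniform arrivals, and differentiate the resulting CDF. Your version is in fact slightly more explicit about the differentiation step than the paper's.
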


\begin{proof}
	As $\first$ is the minimum of $k-1$ independent random variables drawn from $U[\theta,1]$, we have
	\[ \Pr[ \first \geq y] = \Pr[\text{candidates $2,\ldots, k$ arrive after time $y$} ] = \left( \frac{1-y}{1-\theta} \right)^{k-1} \]
	So the CDF of $\first$ is $F_\first(y) = \Pr[\first \leq y] = 1- \Pr[ \first \geq y] = 1 - \left( \frac{1-y}{1-\theta} \right)^{k-1}$. Differentiating completes the claim. 
\end{proof}

\begin{claim} \label{claim:ll}The probability of hiring the best candidate with the policy that simply hires the best candidate upon departure with  $k$ secretaries with arrival distribution $U[\theta,1]$ and exponential waiting distribution with parameter $\lambda$ is 
	$$Pr[\win] = 
	\frac{1}{k} + \frac{1}{\lambda(1-\theta)}  - \frac{(k-1)}{\lambda^k (1-\theta)^k} \int_{t=0}^{\lambda(1-\theta)} e^{-t} t^{k-2} dt.$$
	%	\frac{1}{k} + \frac{1}{\lambda(1-\theta)} - \frac{k-1}{\lambda (1-\theta)^k}\int_{y=\theta}^1 e^{-\lambda(1-y)}(1-y)^{k-2} dy.$$
\end{claim}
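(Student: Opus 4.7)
The plan is to reduce the algorithm's success probability to a product of independent exponential tail probabilities along the ``record chain'' of best-so-far candidates, and then integrate out the arrival times. First I would identify the sequence of best-so-far candidates in arrival order: let $j_1 < j_2 < \cdots < j_m$ (in the time ordering) be the candidates who are best-so-far at their arrival; note $j_1$ is the first candidate to arrive overall, and $j_m$ must be candidate $1$ (the best). The no-waiting policy succeeds exactly when candidate $1$ is accepted at $D_1$, which happens iff none of $j_1, \ldots, j_{m-1}$ get accepted. Candidate $j_l$ is accepted at $D_{j_l}$ iff no better candidate arrives during $[A_{j_l}, D_{j_l}]$; this fails (which is what we want) precisely when $j_{l+1}$ arrives before $j_l$ departs, i.e., $L_{j_l} > A_{j_{l+1}} - A_{j_l}$.

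Next I would exploit the independence of the $L_i$'s from the $A_i$'s and from each other. Conditional on all arrival times, the events $\{L_{j_l} > A_{j_{l+1}} - A_{j_l}\}$ for $l = 1, \ldots, m-1$ involve disjoint $L$-variables, so they are independent, and each has probability $e^{-\lambda (A_{j_{l+1}} - A_{j_l})}$. The product telescopes:
\begin{equation*}
\Pr[\win \mid \arrtimevec] = \prod_{l=1}^{m-1} e^{-\lambda(A_{j_{l+1}} - A_{j_l})} = e^{-\lambda(A_1 - A_{j_1})},
\end{equation*}
where $A_{j_1} = \min_i A_i$ is the earliest arrival of any candidate.

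Then I would split on whether candidate $1$ is the earliest arrival. Writing $\first = \min_{i \neq 1} A_i$, the case $A_1 < \first$ (which has probability $1/k$ by symmetry) contributes a success probability of $1$, while the case $A_1 > \first$ contributes $\mathbb{E}[e^{-\lambda(A_1 - \first)} \mathbf{1}(A_1 > \first)]$. Using the independence of $A_1$ and $\first$, with densities $f_{A_1}(a) = 1/(1-\theta)$ on $[\theta,1]$ and $f_{\first}(y) = \tfrac{k-1}{1-\theta}\bigl(\tfrac{1-y}{1-\theta}\bigr)^{k-2}$, this expectation becomes
\begin{equation*}
\frac{k-1}{(1-\theta)^k}\int_\theta^1 (1-y)^{k-2}\int_y^1 e^{-\lambda(a-y)}\,da\,dy.
\end{equation*}

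Finally I would just compute this integral. The inner integral equals $(1-e^{-\lambda(1-y)})/\lambda$, which splits the expression into two pieces. The first piece, $\tfrac{k-1}{\lambda(1-\theta)^k}\int_\theta^1 (1-y)^{k-2}\,dy$, evaluates to $\tfrac{1}{\lambda(1-\theta)}$. The second piece, after substituting $t = \lambda(1-y)$, evaluates to $\tfrac{k-1}{\lambda^k(1-\theta)^k}\int_0^{\lambda(1-\theta)} e^{-t} t^{k-2}\,dt$. Adding the $1/k$ from the first case yields the claimed formula. The main conceptual step is the telescoping identity: once one realizes that only the first and last candidates in the record chain matter (a consequence of exponential departures being memoryless and, crucially, independent across candidates), the rest is a direct computation.
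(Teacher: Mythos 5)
Your proof is correct, and from the second step onward it coincides with the paper's calculation: both reduce the problem to $\Pr[\win] = \tfrac{1}{k} + \mathbb{E}\bigl[e^{-\lambda(A_1 - \first)}\,\mathbf{1}(A_1 > \first)\bigr]$ using the independence of $A_1$ and $\first$ and then evaluate the identical double integral. The genuine difference is how the conditional win probability $e^{-\lambda(A_1 - \min_i A_i)}$ is justified. The paper uses a memorylessness ``relay'' coupling: each time a new best-so-far candidate arrives, its remaining lifetime is identified with the conditional remaining lifetime of its predecessor, so the whole chain behaves like one exponential clock running from the first arrival to $A_1$. You instead condition on the full record chain $j_1, \ldots, j_m$, note that the per-link survival events involve disjoint $L$-variables (hence are conditionally independent with tails $e^{-\lambda(A_{j_{l+1}} - A_{j_l})}$), and telescope. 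Your version is more self-contained and makes transparent why the intermediate records drop out; it buys rigor where the paper's redraw argument is somewhat informal. Two minor points of hygiene: (i) ``$j_l$ is accepted'' should read ``$j_l$ is best-so-far at its departure'' --- acceptance additionally requires that no earlier record was accepted --- but since the success event is the intersection over $l < m$ of the complements either way, the computation is unaffected; (ii) the conditioning should include the relative ranks as well as the arrival times, since the identity of the records depends on both (the $L_i$ are independent of both, so the telescoping still goes through). The truncation $D_i = \min\{A_i + L_i, 1\}$ is also harmless here because $A_{j_{l+1}} \leq 1$ always.
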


\begin{proof}
	We compute the probability given fixed values of $\arrtime_1$ and $\first$ via
	\begin{align*}
	Pr[\win] &= \int_{y=\theta}^1 \int_{x=\theta}^1 \Pr[ \win \mid \arrtime_1 = x , \first = y ] \cdot f_{\arrtime_1}(x) \cdot f_\first(y) dx dy \\
	&= \frac{k-1}{(1-\theta)^k}\int_{y=\theta}^1 \int_{x=\theta}^1 \Pr[ \win \mid \arrtime_1 = x , \first = y ] \left( 1-y \right)^{k-2} dx dy \\
	&= \frac{k-1}{(1-\theta)^k}\int_{y=\theta}^1 \int_{x=\theta}^y \Pr[ \win \mid \arrtime_1 = x , \first = y ] \left( 1-y \right)^{k-2} dx dy \\
	&\qquad + \frac{k-1}{(1-\theta)^k}\int_{y=\theta}^1 \int_{x=y}^1 \Pr[ \win \mid \arrtime_1 = x , \first = y ] \left( 1-y \right)^{k-2} dx dy \\
	\end{align*}
	
	We observe that if $x \leq y$, the probability of selecting the best candidate is $1$ because it arrives before any other candidate. Therefore
	\begin{align*}
	& \frac{k-1}{(1-\theta)^k}\int_{y=\theta}^1 \int_{x=\theta}^y \Pr[ \win \mid \arrtime_1 = x , \first = y ] \left( 1-y \right)^{k-2} dx dy \\
	& = \frac{k-1}{(1-\theta)^k} \int_{y=\theta}^1 \left( \int_{x=\theta}^y  \left( 1-y \right)^{k-2} dx \right) dy = \frac{k-1}{(1-\theta)^k} \int_{y=\theta}^1 \left( (y-\theta)\cdot \left( 1-y \right)^{k-2} \right) dy = \frac{1}{k}
	\end{align*}		
	
	Now consider that $x > y$. In this case 
	\[
	\Pr[ \win \mid \arrtime_1 = x , \first = y ] =  \Pr[ \first \text{ survives from time $y$ until time $x$} ].
	\]
	
	Since the departure distribution is exponential (i.e. memoryless), if no candidate better than $\first$ arrives in the interval $[y,x]$, then the probability of surviving is $e^{-\lambda(x-y)}$. To remove this assumption notice that, even if a candidate $C$ that is better than $\first$ arrives at some time $\arrtime_C \in [y,x]$, the probability that $C$ remains until $x$ is the same as the probability that $\first$ remains until $x$, conditioned on having stayed until $\arrtime_C$.

	We have that
	$$\int_{x=y}^1 e^{-\lambda(x-y)} (1-y)^{k-2} dx = \frac{1}{\lambda} \left( (1-y)^{k-2} - e^{-\lambda(1-y)}(1-y)^{k-2} \right).$$
	So, we get
	\begin{align*}
	& \frac{k-1}{(1-\theta)^k}\int_{y=\theta}^1 \int_{x=y}^1 \Pr[ \win \mid \arrtime_1 = x , \first = y ] \left( 1-y \right)^{k-2} dx dy \\
	& = \frac{k-1}{(1-\theta)^k}\int_{y=\theta}^1 \int_{x=y}^1 e^{-\lambda(x-y)} \left( 1-y \right)^{k-2} dx dy \\
	& = \frac{k-1}{\lambda (1-\theta)^k} \left( \int_{y=\theta}^1 (1-y)^{k-2} dy - \int_{y=\theta}^1 e^{-\lambda(1-y)}(1-y)^{k-2} dy \right).
	\end{align*}
	
	The first integral is easy to calculate
	$$ \frac{k-1}{\lambda (1-\theta)^k} \int_{y=\theta}^1 (1-y)^{k-2} dy = \frac{(1-\theta)^{k-1}}{\lambda (1-\theta)^k} = \frac{1}{\lambda(1-\theta)}.$$ 
	
	The second integral can be simplified to
	\[ \textstyle
	\int_{y=\theta}^1 e^{-\lambda(1-y)}(1-y)^{k-2} dy = \int_{v=0}^{1-\theta} e^{-\lambda v} v^{k-2} dv = \int_{t=0}^{\lambda(1-\theta)} e^{-t} \left( \frac{t}{\lambda} \right)^{k-2} \frac{dt}{\lambda} = \frac{1}{\lambda^{k-1}} \int_{t=0}^{\lambda(1-\theta)} e^{-t} t^{k-2} dt,
	\]
	completing the proof.\end{proof}

\end{document}